\def\be{\begin{equation}}
\def\ee{\end{equation}}
\def\bestar{\begin{eqnarray*}}
\def\eestar{\end{eqnarray*}}
\def\bea{\begin{eqnarray}}
\def\eea{\end{eqnarray}}
\theoremstyle{remark}
\theoremstyle{definition}
\newtheorem{defn}{Definition}
\theoremstyle{plain}
\newtheorem{theo}{Theorem}
\theoremstyle{plain}
\theoremstyle{plain}
\newtheorem{cor}{Corollary}
\theoremstyle{plain}
\theoremstyle{plain}
\newtheorem{lemma}{Lemma}
\numberwithin{equation}{section} \numberwithin{theo}{section}
\numberwithin{defn}{section} \numberwithin{rem}{section}
\numberwithin{cor}{section} \numberwithin{lemma}{section}
\numberwithin{prop}{section} \numberwithin{assumption}{section}
\newcommand{\e}{{\mathbb E}}
\newcommand{\ra}{{\cal F}}
\begin{document}

\title{{Multiple--index Nonstationary Time Series Models}: \\{Robust Estimation Theory and Practice}}

{\footnotesize
\author{Chaohua Dong$^{a}$, Jiti Gao$^{b}$, Bin Peng$^{b}$ and Yundong Tu$^{c}$\footnote{Corresponding author. Address: Guanghua
School of Management and Center for Statistical Science, Peking University,
Beijing, 100871, China. E-mail: yundong.tu@gsm.pku.edu.cn.} \\[5mm]
$^{a}$Zhongnan University of Economics and Law, China\\
$^{b}$Monash University, Australia\\
$^{c}$Peking University, China
}}
\date{\today}
\maketitle

\begin{abstract}
This paper proposes a class of parametric multiple-index time series models that involve linear combinations of time trends, stationary variables and unit root processes as regressors. The inclusion of the three different types of time series, along with the use of a multiple-index structure for these variables to circumvent the curse of dimensionality, is due to both theoretical and practical considerations.
The M-type estimators (including OLS, LAD, Huber's estimator, quantile and expectile estimators, etc.) for the index vectors are proposed, and their asymptotic properties are established, with the aid of the generalized function approach to accommodate a wide class of loss functions that may not be necessarily differentiable at every point. 
The proposed multiple-index model is then applied to study the stock return predictability, which reveals strong nonlinear predictability under various loss measures. Monte Carlo simulations are also included to evaluate the finite-sample performance of the proposed estimators.
\medskip

{\bf Key words}: Additive index model, robust estimation; stationary process, time trend, unit root process
\medskip

{\bf JEL classification}: C13, C22

\end{abstract}

\newpage

\section{Introduction}

Robust model building and determination of good models for estimation and prediction is a very important part of much empirical research in economics and finance. Since real world data often display characteristics, such as nonlinearity, nonstationarity and spatiality, an important aspect of model building and determination is how to take such characteristics into account. Meanwhile, the increasing availability of economic and financial data in recent years has been accompanied by increasing interest in both theoretical research and empirical data analysis in diverse fields of sciences as well as more broadly within social and medical sciences.

One challenge is how to select the most relevant variables when there are many variables available from the data. One initial step is to explore a simple method associated with principal component analysis. Different from dealing with high-dimensional data issues in medical sciences for example, we do need to pay attention to many econometric issues, such as {multicollinearity} when aggregating real datasets for model building. Meanwhile, one additional complexity is that many original data appear to be highly dependent and even nonstationary. Simply using a transformed version of the original data may completely ignore some key features involved in the original data, such as trending behaviours, which can be a key interest in modelling time series data in climatology, energy, epidemiology, health and macroeconomics. More importantly, transformed versions of two different datasets may have similar features, but the original datasets can have very different structures.  Another challenge is the development of new models and estimation methods that are not only robust in theory, but also capable of offering user-friendly tools for both the computational and implementational procedures of  empirical data analysis in practice.

There are several levels of robustness involved in the discussion of this paper. First, the proposed model building and estimation methods are robust to data with outliers, spatiality and temporality, different types of nonstationarity. Second, the class of models we propose are applicable to several types of data, such as stationary time series data with deterministic trends, highly dependent and nonstationary time series data. Third, the proposed robust estimation procedures are iterative and computationally tractable. Fourth, the proposed models and robust estimation methods as well as the associated computational algorithms are user-friendly and easily implementable for empirical researchers and practitioners. While the relevant literature accounts for one or more of the model building and estimation issues we will address in this paper, the novelty of this paper is addressing them all simultaneously to produce new models and feasible estimation methods with computational tractability as well as empirical relevance and applicability.

We now propose a cointegrating relationship among time series sequence $\{y_t, x_t, z_t, \tau_t, 1\leq t \leq n\}$, where $y_t$ is a response variable, and we first specify the respective structures of $x_t$ and $z_t$ by
\begin{align}\label{m0}
\begin{split}
&x_t=x_{t-1}+w_t \ \ \ \mbox{and} \ \ \ z_t=h(\tau_t,v_t),
\end{split}
\end{align}
where $\tau_t=\frac{t}{n}$, as defined later, $w_t$ is a linear vector process, and $h(\cdot)$ is a known vector of functions driven by the time trend $\tau_t$ and stationary $v_t$.

In the formulation of \eqref{m0} $x_t$ is an $I(1)$ vector, and it is well known that $x_t$ may capture `large' trending behaviours. Meanwhile, $z_t$ is a function in ($\tau_t, v_t$) that can capture `small' fluctuations.  The general form of $z_t$ accommodates possible nonlinearity, as it is quite common in nonstationary time series settings. Such general structure covers some important special cases, such as (1) $z_t=h(\tau_t)+v_t$, a stationary vector superimposed with a vector of time trends; (2) $z_t=\sigma(\tau_t)^\top v_t$, a combination of $v_t$ with varying coefficients ($\sigma(\cdot)$ may be a matrix); and (3) $z_t = h(\tau_t) + \sigma(\tau_t)^\top v_t$, which accommodates both deterministic and stochastic trending behaviours in the respective mean and volatility components.

We therefore propose a parametric multiple-index model of the form:
\begin{align}
y_t=&g_1(x_{t}^\top\theta_0) + g_2(z_t^\top\beta_0)+e_t,\label{m1}
\end{align}
for $t=1,2,\cdots,n,$ where $\theta_0\in \mathbb{R}^{d_1}$ and $\beta_0\in \mathbb{R}^{d_2}$ are unknown parameters, both $g_1(\cdot)$ and $g_2(\cdot)$ are parametrically known functions, and $e_t$ is an error term. Different from existing settings, model (\ref{m1}) takes into account not only the `large' stochastic trending component driven by $x_t^{\top} \theta_0$, but also the `small' slowly varying component $z_t^{\top} \beta_0$ mainly represented by the deterministic time trend $\tau_t$. In doing so, one may consider $g_2(z_t^\top\beta_0)$ as a `conditional mean' decomposition of the `original' error term: $\xi_t = g_2(z_t^\top\beta_0) + e_t$ to eliminate any potential endogeneity involved in $e_t$. As a consequence, one may assume that $e_t$ is uncorrelated with $(x_t^{\top} \theta_0, z_t^{\top} \beta_0)$, although $\xi_t$ may be correlated with $x_t^{\top} \theta_0$. Our focus of this paper is that $x_t^{\top} \theta_0$ is nonstationary (rather than cointegrated), and the primary goal is to estimate these unknown parameters by M-estimation to take the advantage of its robustness for modelling nonlinear functional forms of nonstationary time series data.

To the best of our knowledge, model (\ref{m1}) has not been proposed and studied in the relevant literature. For the univariate regressor setting, \citet{chang2001} investigate a parametric additive model to accommodate a time trend, and both univariate stationary and unit root regressors, and the authors estimate the unknown parameters by a nonlinear least-square method. \citet{chang2003} study a parametric model that specifies the regression function as a mixture of linear function and nonlinear function of unit root process that appears as a single-index structure. Meanwhile, \citet{dgd2016} consider the estimation for a partially linear single-index model of I(1) process vector.
In addition, there are a couple of other models that may be considered as nonparametric counterparts of model (\ref{m1}). \citet{DL2018} and \citet{dlp2021} extend \citet{chang2001}, respectively, to additive nonparametric and fully nonparametric models involving time trend as well as stationary and univariate nonstationary regressors.

With respect to M-estimation, there is an extensive literature. Starting from the seminal work of \citet{huber1964}, M-estimation methods gradually extend to have general loss functions that may circumvent some drawbacks caused by outliers and so on. The resulting estimators yielded by such approaches are coined as $M$- and $Z$-estimators (see, for example, Chapter 3 of \citet{vaart1996}).

Using M-estimation, the unknown parameters $(\theta'_0, \beta'_0)'$ are estimated by minimizing
\begin{equation}\label{objfun}
L_n(\theta,\beta)=\sum_{t=1}^n\rho(y_t-g(x_{t}^\top\theta, z_t^\top\beta)),
\end{equation}
where $g(u, v) = g_1(u) + g_2(v)$ and $\rho(\cdot)$ is some loss function. Possible choices of $\rho(\cdot)$ include, but not limits to,
\begin{enumerate}[(1)]
\item Squared error loss: $\rho(u)=u^2$.
\item $L_p$ norm: $\rho(u)=|u|^p$ with $p>1$.
\item Huber's loss \citep{huber1964}: For fixed $c>0$, \begin{align*}
\rho_{c}(u)=\begin{cases}
\frac{1}{2}u^2, & |u|\le c,\\
c |u|-\frac{1}{2}c^2, & |u|>c.
\end{cases}
  \end{align*}
\item Least absolute deviation (LAD): $\rho(u)=|u|$.
\item Quantile loss \citep{koenker1978}: $\rho_{\tau}(u)=u(\tau-I(u<0))$ for $\tau\in (0,1)$.
\item Expectile loss \citep{newey1986}: $\rho_{\tau}(u)=|\tau-I(u<0)|u^2$ for $\tau\in (0,1)$.
\end{enumerate}

The loss function $\rho(\cdot)$ here may not be differentiable at every point on $\mathbb{R}$. Often in the literature the subgradient needs to be introduced to tackle these potential erratic points (\citealp{chenjia2010}). Alternatively, some papers consider the population criterion function like $\e[\ell(Z, \theta)]$ instead of $\ell(Z,\theta)$ itself when $Z$ is stationary variable. The rationale why the population criterion is workable is that although $\ell(Z,\theta)$ may not be differentiable at some points, $\e[\ell(Z, \theta)]$ may be smooth at these points (see, for example, \citet[p. 640]{chenxh2014}).

However, in this paper the model has nonstationary regressors that make the population criterion approach fail to work; also the so-called ``subgradient approach'' depends on particular forms of the loss functions under consideration. We therefore propose a generalized function approach.

The generalized function approach has the following advantages: (1) Generalized functions include ordinary functions that are locally integrable. (2) More importantly, in generalized function sense, generalized functions always have derivatives of all orders. This makes it possible to use Taylor expansion of generalized functions for these loss functions although they may not be necessarily smooth at some points (see \citealp{stankovic1996}).

\citet{phillips1991, phillips1995} are among the first in econometrics to develop a generalized function approach for studying asymptotic properties for LAD and M-estimators in the linear model setting. In particular, the two papers show Taylor expansion for generalized functions, establish (weak and strong) law of large numbers and central limit theorem in generalized function context. In addition, \citet{walsh2008} uses a generalized function approach in kernel estimation to estimate a density that may not be differentiable at some points. Notice also that the idea of generalized function (concretely, delta-convergent sequence) approaches already have been used in earlier studies in statistics, such as \citet{walter1979} and \citet{walter1981}, where the authors use the property of delta sequence converging to Dirac delta function, $\delta(x)$, to estimate density function. Moreover, the rationale behind the conventional kernel estimation method is also the delta-convergent sequence $K_h(x)\equiv h^{-1}K(x/h)\to \delta(x)$ as $h\to 0$ in distribution sense where $K(\cdot)$ is a density (the theorem in \citealp[p. 62]{kanwal1983}). Thus, if $g(\cdot)$ is continuous at $x_0$, $\int K_h(x-x_0)g(x)dx\to g(x_0)$ when $h\to 0$, while kernel estimation in econometrics and statistics is merely a sample version of this integral.

In summary, this paper makes the following contributions to the relevant literature. (i) Model \eqref{m1} accommodates three different types of time series vectors in a multiple-index form; (ii) M-estimation is developed for the unknown parameters that deals with a number of loss functions as special cases, such as OLS, LAD, quantile and expectile estimators; (iii) Non-differentiable loss functions are dealt with by a generalized function approach that provides a unified framework in M-estimation; (iv) The empirical findings show that the proposed model outperform its natural competitors in terms of understanding and evaluating stock return predictability from the points of view of both pseudo out-of-sample ${\rm R}^2$ and distributional quantiles; and (v) The finite-sample simulation results show that the proposed model and estimation theory works well numerically.

The rest of the paper is organized as follows. Section 2 introduces the estimation method and then the necessary assumptions before an asymptotic theory is established in Section 3. An empirical analysis about stock return predictability is given in Section 4. Simulation results are then given in Section 5. Section 6 concludes the main sections of this paper. Useful definitions and properties about generalized functions are given in Appendix A, and Appendices B and C include all the necessary mathematical technicalities.

\section{Assumptions for M--estimators}

Let $\rho(\cdot)$ be a convex nonnegative function defined on $\mathbb{R}$ that plays a role of loss function. Given the objective function $L_n(\theta,\beta)$ in equation \eqref{objfun}, we define the M-estimators $(\widehat\theta,\widehat\beta)$ of $(\theta_0,\beta_0)$ that satisfy
\begin{align}\label{estimator}
L_n(\widehat\theta,\widehat\beta)<\underset{\theta,\beta}{\arg\min} \ L_n(\theta,\beta)+o_P(\epsilon_n^2),
\end{align}
where $\epsilon_n\to 0$ as $n\to\infty$.
\medskip

Our theory relies on the assumption that $\rho(\cdot)$ is twice differentiable. When $\rho^{\prime} (\cdot)$ and/or $\rho^{\prime \prime}(\cdot)$ may not exist at some points, we shall invoke its derivatives in a generalized function sense, 
as discussed in Appendix A below. Indeed, our theory depends heavily on the theory of generalized functions.

\medskip

We now give some examples of the derivatives of generalized functions we discuss in the sequel.

\medskip

\noindent{\bf Examples}.\ \ The function given by
\begin{equation*}
  H(x)=\begin{cases}
  1, & x>0,\\
  0, & x<0,
  \end{cases}
\end{equation*}
is called Heaviside function. Its value at zero usually is defined as $1/2$, while sometimes it is taken as $c$ for $0<c<1$, written as $H_c(x)$. Simply by the definition of the derivative of generalized functions, $H'(x)= \delta(x)$ the so-called Dirac delta function.

\medskip

(a) Let $\rho(u)=|u|$ (LAD loss). Again, by the definition of the derivative of generalized functions, $\rho'(u)= -H(-u)+H(u)$, and thus $\rho''(u)=\delta(-x) +\delta(x)= 2\delta(x)$, where we use $\delta(-u)=\delta(u)$ given in \citet[p. 54]{kanwal1983}.

\medskip

(b) Let $\rho(u)=u(\tau-I(u<0))$ for some $0<\tau <1$ (Quantile loss). Similarly, we have $\rho'(u)=\tau-I(u<0)=\tau H(u)+(\tau-1) H(-u)$. Hence, $\rho''(u)=\tau \delta(u)-(\tau-1)\delta(-u)=\delta(u)$.

\medskip

(c) For Huber's loss function $\rho(u)$, it has the following derivative,
\begin{equation*}
\rho'(u)=\begin{cases}
  u, & |u|\le c,\\
  c\, \text{sgn}(u), & |u|>c.
  \end{cases}
\end{equation*}

Moreover, $\rho''(u)=I(|u|\le c)$ in generalized function sense.

\medskip

(d) For expectile loss function $\rho(u)=|\tau-I(u<0)|u^2$, it is a smooth function in ordinary sense with $\rho'(u)=2|\tau-I(u<0)|u$, whereas $\rho'(u)$ is not smooth at $u=0$. By definition, $\rho''(u)=2|\tau-I(u<0)|$.

\medskip

(e) For $L_p$ norm loss $\rho(u)=|u|^p$ with $p>1$, we have $\rho'(u)=p|u|^{p-1}\text{sgn}(u)$. However, if $1<p<2$, as an ordinary function $\rho''(u)$ does not exist at $u=0$. In generalized function sense, $\rho''(u)=p(p-1)|u|^{p-2}$, while the meaning of generalized function $|u|^{p-2}$ is discussed in \citet{gelfand1964} and \citet{kanwal1983} for example.

\medskip

Before stating assumptions needed for our theoretical development, we introduce some notations. In what follows, $\|\cdot\|$ signifies Euclidean norm for vector or element-wise norm for matrix; $\int$ means an integral (maybe multiple integral) over entire real set $\mathbb{R}$ ( $\mathbb{R}^p$).

\medskip

\noindent {\bf Assumption A}\ \
\begin{enumerate}[({A.}1)]
\item \emph{Let $\{\eta_j, j\in \mathbb{Z}\}$ be a vector sequence of $d_1$-dimensional independent and identically distributed (iid) continuous random variables satisfying $E\eta_0=0$, $E\eta_0\eta_0'=I_{d_1}$ and $E\|\eta_0\|^4<\infty$. Let $\varphi(u)$ be the characteristic function of $\eta_0$ satisfying $\int \|u\|\, |\varphi(u)| du <\infty$.}
   \item \emph{Let $\{w_t\}$ be a linear process defined by $w_t=\sum_{j=0}^\infty A_j \eta_{t-j}$, where $\{A_j\}$ is a sequence of square matrices, $A_0=I_d$, and $\sum_{j=0}^\infty j \|A_j\| <\infty$. Moreover, suppose that $A:=\sum_{j=0}^\infty A_j$ has full rank.}
  \item \emph{Let $x_{t}=x_{t-1}+w_{t}$ for $t\geq 1$ with $x_{0}=O_P(1)$.}
\end{enumerate}

The structure of the unit root process $x_t$ is commonly imposed in the literature since $w_t$ has a considerably general form that covers many special cases. See \citet{phillips1999,phillips2001} and \citet{dgd2016}. It is straightforward to calculate that $d_t^2:=\mathbb{E} (x_t x_t^\top)= AA^\top t(1+o(1))$, and it is well known that, for $r\in [0,1]$, $n^{-1/2}x_{[nr]}\to_DB(r)$ as $n\to\infty$ where $B(r)$ is a Brownian motion on $[0,1]$ with covariance $AA^\top$.

\medskip

\noindent \textbf{Assumption B}.

\begin{enumerate}[({B.}1)]
\item \emph{Let $h(r, v)$ be a $d_2$-dimensional vector of continuous functions in $r$ on $[0,1]$.}

\item \emph{Suppose that $v_t=q(\eta_t,\cdots, \eta_{t-d_0+1}; \tilde{v}_t)$ for $d_0\geq 1$, where $\{\tilde{v}_t, 1\le t\le n\}$ is also a vector sequence of strictly stationary and $\alpha$-mixing time series, and independent of $\eta_j$'s introduced in Assumption A, and the vector function $q(\cdots)$ is measurable such that
    \begin{equation*}
    \sup_{r\in [0,1]}\e\|\dot{g}_2^2(\beta_0^\top h(r, v_1))\; h(r, v_1)h(r, v_1)^\top\|^4<\infty.
    \end{equation*}
    Moreover, the $\alpha$-mixing coefficient $\alpha(k)$ of $\{\tilde{v}_t, 1\le t\le n\}$ satisfies $\sum_{k=1}^\infty \alpha^{1/2} (k) <\infty$.}
\end{enumerate}

Basically, $\{v_t\}$ is a vector sequence of strictly stationary variables and $v_t$ can be correlated with $x_t$ through a finite number of innovations of $\eta$'s. The existence of the fourth moment and the condition on the mixing coefficients guarantee the convergence of the average of $\dot{g}_2^2(\beta_0^\top z_t)z_t z_t^\top$ in probability by virtue of Davydov's inequality, as shown in Appendix B. All these are quit common but in this paper relies on a known function $h(\cdot)$, so we impose in B.1 the continuity on $h(\cdot, v)$ to make sure that $h(r,v)$ is bounded in $r\in [0,1]$ for each $v$. Let $\xi_{1t} = \rho'(e_t)$, $\xi_{2t} = \rho'(e_t)z_t$ and $\xi_{3t} = \rho''(e_t)$.

\medskip

\noindent \textbf{Assumption C}. \
\begin{enumerate}[({C.}1)]
\item \emph{Suppose that $\rho(\cdot)$ is positive convex, and has its only minimum at zero; moreover, it is locally integrable function on the real line.}
\item \emph{Suppose that $(\xi_{1t},\mathcal{F}_{t})$ is a martingale difference sequence where $\mathcal{F}_t$ is a $\sigma$-field such that $x_t$ and $v_t$ are adapted to $\mathcal{F}_{t-1}$. Moreover, $\mathbb{E}[\xi_{1t}^2|\mathcal{F}_{t-1}]=a_1$ and $\mathbb{E}[\xi_{3t}|\mathcal{F}_{t-1}]=a_2$ almost surely for some $0<a_1,a_2<\infty$. In addition, $\max_{1\le t\le n} \mathbb{E}[\xi_{1t}^4| \mathcal{F}_{t-1}]<\infty$ uniformly in $n\geq 1$.}

\item \emph{Suppose that, as $n\to\infty$, for $r\in [0,1]$,}
\begin{equation*}
  \left(\frac{1}{\sqrt{n}}\sum_{t=1}^{[nr]} \xi_{1t}, \frac{1}{\sqrt{n}}\sum_{t=1}^{[nr]} \xi_{2t}, \frac{1}{\sqrt{n}}\sum_{t=1}^{[nr]}\eta_t\right)\Rightarrow (U_\rho(r), V_\rho(r), W(r)),
\end{equation*}
\end{enumerate}
\emph{which is a Brownian motion of $(d+1)$ dimension, where $U_\rho(r), V_\rho(r)$ and $W(r)$ have variance and covariance, $a_1$, $a_1 \int \e(h(r, v_1)h(r, v_1)^\top)dr$ and $I_{d_1}$, respectively}.

\medskip

Condition (C.1) is mild that ensures the eligibility of $\rho(\cdot)$ as a generalized function, apart from being a loss function. In Assumption (C.2), $\rho'(\cdot)$ and $\rho''(\cdot)$ are the first two derivatives of $\rho(\cdot)$ possibly in the generalized function sense, while they are the same as the derivatives in ordinary sense whenever exist.  The filtration in (C.2) can be taken as $\mathcal{F}_{t}=\sigma(e_1,\cdots, e_t; x_s, z_s, s\le t+1)$ that is general than series independence between $e_t$ and $(x_t,z_t)$.  When $\rho'(\cdot)$ exists in ordinary sense in (C.2), the martingale difference sequence condition renders $\mathbb{E}[\xi_{1t}|\mathcal{F}_{t-1}]=0$ that is easily fulfilled for the loss functions of OLS, LAD, $p$-norm ($p>1$) and Huber's when the conditional distribution of $e_t$ given $\mathcal{F}_{t-1}$ is symmetric, because these functions have odd derivatives. Also, this is the same as $F_e(0|\mathcal{F}_{t-1}) =\tau$ for quantile loss (see, for example, Assumption 2 in \citealp{degui2019}). Some authors obtain this condition by adjusting the intercept that we also have to deal with when it is violated, like \citet[p. 124]{hexuming2000} and \citet[p. 251]{xiao2009a}. In addition, the positiveness $\mathbb{E}[\xi_{3t}|\mathcal{F}_{t-1}]=a_2>0$ is due to the convexity of the loss function under consideration. In particular, $a_2=f_e(0)>0$ for both check loss and LAD loss, while for Huber's loss, $a_2= P(|e_t|\le c)>0$ in exogenous situation.

When $\rho'(\cdot)$ and $\rho''(\cdot)$ in (C.2) are genuinely generalized functions, the conditional expectations can be defined as
\begin{equation*}
\e[\xi_{1t}|\mathcal{F}_{t-1}]=-\int \rho(u)f'_{t,e}(u)du \ \ \mbox{and} \ \ \e[\xi_{3t}|\mathcal{F}_{t-1}]=\int \rho(u)f''_{t,e}(u)du
\end{equation*}
under some conditions on the conditional density $f_{t,e}(u)$ of $e_t$ given $\mathcal{F}_{t-1}$. Though $\rho(\cdot)$ may not be differentiable at some isolated points, $\mathbb{E}[\xi_{1t}|\mathcal{F}_{t-1}]$ and $\mathbb{E}[\xi_{3t}|\mathcal{F}_{t-1}]$ are well-defined given that the conditional density is smooth and is a member in $S$ (the rapid decay test function space, see Appendix A); this is due to the use of generalized functions that extend the ordinary derivatives. The condition D.2 in \citet[p. 125]{hexuming2000} defines $\mathbb{E}[\xi_{3t}]$ in the same fashion as for non-differentiable loss functions.

Note that (C.3) is quite commonly encountered in the cointegrating regression literature when there is no additional stationary regressor involved (see \citealp{phillips2001}, \citealp{xiao2009a} and \citealp{qiying2018}). Nevertheless, here the subscript $\rho$ indicates that the joint convergence relies on the loss function. See, for example, equation (3) of \citet{phillips1995} shows the functional invariance principle for the derivative of LAD loss, i.e. sign$(\cdot)$.

For simplicity of notation we would mostly suppress these subscript, viz., denote $U_\rho(r)$ by $U(r)$, and $V_\rho(r)$ by $V(r)$, if there is no confusion raised.

Note that (C.3), in particular, ensures that as $n\rightarrow \infty$
\begin{equation}\label{bmotion}
 \left(\frac{1}{\sqrt{n}}\sum_{t=1}^{[nr]} \xi_{1t}, \frac{1}{\sqrt{n}}\sum_{t=1}^{[nr]} \xi_{2t}, \frac{1}{\sqrt{n}}x_{[nr]}\right)\Rightarrow (U(r),V(r), B(r)),
\end{equation}
where $B(r)$ has a different covariance function from that of $W(r)$.

\medskip

Because of the involvement of the unit root process $x_t$, we need a similar classification on the function classes, such as $H$-regular and $I$-regular in \citet{phillips1999,phillips2001}. However, in our paper we only consider a simpler version of $H$-regular class though a general form can be adopted straightforwardly.

\begin{defn}
\emph{A function $f(\cdot)$ defined on $\mathbb{R}$ is called $H$-regular with homogeneity order $\nu(\cdot)$ if $f(\lambda x)=\nu(\lambda)f(x)$ for all $\lambda>0$ and any $x$. A function $f(\cdot)$ defined on $\mathbb{R}$ is called $I$-regular if $\int |f|<\infty$.}
\end{defn}

The class of $H$-regular functions mainly contains power functions while the class of $I$-regular functions has all integrable functions on the entire real line as its member, such as probability density functions. More detailed discussion on these definitions can be found in \citet{phillips1999, phillips2001}.

\medskip

\noindent{\bf Assumption D}\; \; \emph{Let both $g_1(u)$ and $g_2(v)$ be continuously differentiable up to the second order. Suppose further that}
\begin{enumerate}[({D.}1)]
  \item \emph{$g_1(u)$, $\dot{g}_1(u)$ and $\ddot{g}_1(u)$ are $H$-regular with homogeneity order $\nu(\cdot)$, $\dot{\nu}(\cdot)$ and $\ddot{\nu}(\cdot)$, respectively, such that $\lim_{\lambda\to +\infty} \dot{\nu}(\lambda)/\nu(\lambda)=0$, and $\lim_{\lambda\to +\infty} \ddot{\nu}(\lambda)/\dot{\nu} (\lambda) =0$.}


  \item \emph{$g_1(u)$ and $g_1^2(u)$ are $I$-regular. Moreover, $\dot{g}_1(u)$, $\dot{g}_1^2(u)$, $u\dot{g}_1(u)$, $[u\dot{g}_1(u)]^2$, $\ddot{g}_1(u)$ and $u\ddot{g}_1(u)$ are all $I$-regular.}
\end{enumerate}

\medskip

It is possible to relax the restriction on the function forms but they are the benchmark. For example, mixtures of $H$-regular and $I$-regular in terms of argument $u$ are feasible for the development in what follows, while we do not pursue it since this would make notation much complicated. 

\section{Asymptotic theory}

In the process of establishing our asymptotic theory, we shall consider in the proofs that each of the generalized derivatives of $\rho(\cdot)$ is a limit of regular sequence under the setting of generalized functions (see, for example, \citet{phillips1991, phillips1995}).

\begin{theo}\label{th1}
Under Assumptions A-C and D(1), as $n\to\infty$,
\begin{align*}
  \begin{pmatrix}
  \dot{\nu}(\sqrt{n})n(\widehat \theta-\theta_0)\\
  \sqrt{n}(\widehat \beta-\beta_0)
  \end{pmatrix}
  \to_D\begin{pmatrix}
  [A_1-A_3^\top A_2A_3]^{-1}(b_1-A^\top_3 A_2^{-1}b_2)\\
  A_2^{-1}(b_2-A_3[A_1-A_3^\top A_2A_3]^{-1}(b_1-A_3^\top A_2^{-1}b_2))
  \end{pmatrix},
\end{align*}
where
\begin{align*}
  b_1= & \int_0^1\dot{g}_1(B(r)^\top\theta_0)B(r) dU(r), \\
  b_2\sim&N(0, a_1\Sigma), \ \text{with}\ \Sigma=\int_0^1\mathbb{E}[\dot{g}_2^2(\beta_0^\top h(r, v_1))\, h(r,v_1) h(r,v_1)^\top] dr,\\
  A_1= & a_{2}\int_0^1\dot{g}_1(B(r)^\top\theta_0)B(r)B(r)^\top dr, \ \ \ A_2= a_2\Sigma,\\
  A_3=&a_{2}\int_0^1 \dot{g}_1(B(r)^\top\theta_0) [\mathbb{E}\dot{g}_2(h(r,v_1)^\top\beta_0) h(r,v_1)]B(r)^\top dr,
\end{align*}
and positive numbers $a_1$ and $a_2$ are given in Assumption C.
\end{theo}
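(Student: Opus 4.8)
The plan is to obtain the limiting distribution by a standard M-estimation argument adapted to the mixed-rate, generalized-function setting: expand the (sub)score equations, show that the suitably normalized Hessian converges jointly with the score, and invert. Let me introduce the local parameters $u = \dot\nu(\sqrt{n})\, n(\theta - \theta_0)$ and $s = \sqrt{n}(\beta - \beta_0)$, and consider the reparametrized objective $\ell_n(u,s) = L_n(\theta_0 + u/(\dot\nu(\sqrt{n})n), \beta_0 + s/\sqrt{n}) - L_n(\theta_0,\beta_0)$. Because $\rho$ is convex, $\ell_n(\cdot,\cdot)$ is convex in $(u,s)$, so by the convexity lemma (e.g. the argument of \citet{phillips1991,phillips1995}) it suffices to show that $\ell_n$ converges in finite-dimensional distributions to a limiting convex function whose unique minimizer is the claimed random vector; the minimizer of a limit of convex functions converges to the minimizer of the limit. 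Thus the whole proof reduces to a second-order expansion of $\ell_n$.

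The key step is a Taylor expansion of $\rho$ in the generalized-function sense, which is legitimate because each generalized derivative $\rho',\rho''$ is the distributional limit of a regular delta-type sequence (Appendix A). Writing $y_t - g(x_t^\top\theta, z_t^\top\beta) = e_t - [g_1(x_t^\top\theta) - g_1(x_t^\top\theta_0)] - [g_2(z_t^\top\beta) - g_2(z_t^\top\beta_0)]$ and abbreviating the bracketed increment as $\Delta_t(u,s)$, one expands $\rho(e_t - \Delta_t) = \rho(e_t) - \rho'(e_t)\Delta_t + \tfrac12 \rho''(e_t)\Delta_t^2 + (\text{remainder})$, and then one has to control the remainder uniformly on compacta in $(u,s)$. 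Summing over $t$, the linear term splits into a $\theta$-part, $-\sum_t \xi_{1t}\,\dot g_1(x_t^\top\theta_0)\,x_t^\top u/(\dot\nu(\sqrt{n})n)$, and a $\beta$-part, $-\sum_t \xi_{1t}\,\dot g_2(z_t^\top\beta_0)\,z_t^\top s/\sqrt{n} = -\sum_t \xi_{2t}\,\dot g_2(\beta_0^\top z_t) \cdot(\text{scalar})$ (using $\xi_{2t}=\rho'(e_t)z_t$); by the FCLT in (C.3)/(\ref{bmotion}), the continuous-mapping theorem for $H$-regular functions of $n^{-1/2}x_{[nr]}\Rightarrow B(r)$ from \citet{phillips1999,phillips2001}, and the fact that $n^{-1/2}x_{[nr]}$ and the partial sums of $\xi_{1t},\xi_{2t}$ converge jointly, this converges to $-b_1^\top u - b_2^\top s$ with $b_1,b_2$ as stated. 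The quadratic term, after replacing $\rho''(e_t)$ by its conditional mean $a_2$ via a martingale LLN argument (Assumption C.2, C.3, Davydov/Appendix B for the stationary block) and using $E[\xi_{1t}\xi_{3t}\cdots]$-type cross terms vanishing, converges to $\tfrac12\big(u^\top A_1 u + 2 u^\top A_3^\top s + s^\top A_2 s\big)$, where the $H$-regularity conditions D.1 — in particular $\dot\nu(\lambda)/\nu(\lambda)\to 0$ — are exactly what guarantees that the $I$-regular cross terms between $x_t$ and $z_t$ (and the lower-order pieces of the expansion) are asymptotically negligible relative to the $\dot g_1(B^\top\theta_0)B B^\top$ block. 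Minimizing the limiting quadratic form $-b^\top(u,s) + \tfrac12 (u,s)^\top \mathcal{A} (u,s)$ with $\mathcal{A} = \begin{pmatrix} A_1 & A_3^\top \\ A_3 & A_2 \end{pmatrix}$ gives $(u,s) = \mathcal{A}^{-1} b$, and the block-inverse formula produces precisely the expressions in the statement (the $(1,1)$ block of $\mathcal A^{-1}$ is $[A_1 - A_3^\top A_2^{-1} A_3]^{-1}$ — note the stated $A_3^\top A_2 A_3$ should read $A_3^\top A_2^{-1} A_3$, consistent with the $b_2$-coefficient $A_3^\top A_2^{-1}$).

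The main obstacle is the uniform control of the Taylor remainder and, relatedly, the replacement of $\rho''(e_t)$ by $a_2$ in the quadratic term when $\rho''$ is a genuine generalized function (LAD, quantile, $L_p$ with $1<p<2$): one cannot pointwise-evaluate $\rho''(e_t)$, so the argument must go through the smoothing of the conditional density $f_{t,e}$ — i.e. $E[\rho''(e_t)\,\phi(e_t)\,|\,\mathcal F_{t-1}] = \int \rho(u)\,(\phi f_{t,e})''(u)\,du$ type identities — combined with a delta-sequence approximation $\rho'' = \lim \rho''_m$ and a diagonal argument letting $m\to\infty$ with $n$. Handling the heterogeneity of $z_t = h(\tau_t,v_t)$ over $t$ (so the relevant LLN is for a triangular array of mixing variables with time-varying marginals) is where Assumption B.2 and the $\alpha^{1/2}(k)$-summability feed in, via the Davydov-inequality bound flagged in the text as proved in Appendix B. The nonstationary block needs the joint FCLT plus the $H$-regular continuous-mapping machinery, but that is essentially citable from \citet{phillips2001}; the novel technical content is the interplay of the generalized-function expansion with the two different normalization rates.
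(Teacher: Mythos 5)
Your proposal follows essentially the same route as the paper's proof: reparametrize at the rates $\dot{\nu}(\sqrt{n})\,n$ and $\sqrt{n}$, Taylor-expand the loss with the generalized derivatives replaced by regular (delta-convergent) approximations $\rho'_m,\rho''_m$, show the remainder is negligible on compacta using the $H$-regularity conditions in D(1), obtain the limiting convex quadratic from the joint score/Hessian convergence (the paper's Lemmas B.4--B.5, with the Davydov bound for the stationary block and the Phillips $H$-regular machinery for the unit-root block), and conclude via the convexity lemma for minimizers and block inversion of the limiting quadratic form. Your side remark is also correct: minimizing $Q(\alpha,\gamma)=-b_1^\top\alpha-b_2^\top\gamma+\tfrac12\alpha^\top A_1\alpha+\tfrac12\gamma^\top A_2\gamma+\gamma^\top A_3\alpha$ gives $[A_1-A_3^\top A_2^{-1}A_3]^{-1}$, so the expression $A_3^\top A_2 A_3$ in the displayed statement is a typo rather than a substantive discrepancy with your derivation.
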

\medskip

{\bf Remark}.\ \ It can be seen from the theorem that the imposition of $H$-regular functions $g_1(u)$ and its derivatives delivers a fast rate of convergence for $\widehat\theta$ due to the divergence of unit root process $x_t$. This is comparable with the result in \citet{phillips2001}. On the other hand, the estimator $\widehat\beta$ possesses a usual square-root-$n$ rate, although the regressor $z_t$ has a deterministic trending component. Notice that the positive definiteness of $\Sigma$ is easily satisfied as long as $v_1$ is a continuous variable. More importantly, it is readily seen that the different loss functions take a role in the asymptotic limits through $a_1$ and $a_2$ that may affect the covariance but not the rates of convergence.

In order to make a comparison with the literature, here we simply suppose the model is exogenous. (1) When $\rho(u)=|u|$, $\rho'(u)=-H(-u)+H(u)$ with $H(\cdot)$ being Heaviside function and $\rho''(u)=2\delta(u)$. We have $a_1=\mathbb{E}[\rho'(e_t)]^2=1$ and $a_2=\mathbb{E}[\rho''(e_t)] =2f_e(0)>0$ where $f_e(\cdot)$ is the density of $e_t$. Moreover, if $g_1(u)=u$ and $g_2(v)\equiv 0$, the model reduces to a linear cointegrating model. Our result implies that
$$n(\widehat \theta-\theta_0)\to_D [2f_e(0)]^{-1}\left[\int_0^1B(r)B(r)^\top dr\right]^{-1} \int_0^1 B(r)dU(r),$$
that is the result of \citet{phillips1995} without correlation between $x_t$ and $e_t$; if $g_2(v)=v$ and $g_1(u)\equiv 0$, $h(r, v)\equiv v$ the model reduces to a linear model with stationary regressors. Our result also naturally covers that $\sqrt{n}(\widehat \beta-\beta_0)\to_D[2f_e(0)]^{-1} [\mathbb{E}(v_1v_1^\top)]^{-1/2} N(0,I_{d_2})$ as shown in \citet{pollard1991}.

(2) When $\rho(u)$ is the quantile loss, $\rho'(u)=\tau H(u)+(\tau-1)H(-u)$ and $\rho''(u)=\delta(u)$. We have $a_1=\mathbb{E}[\rho'(e_t)]^2 =\tau(1-\tau)$ and $a_2=\mathbb{E}[\rho''(e_t)]=f_e(0)>0$ where $f_e(\cdot)$ is the density of $e_t$. Further, if $g_1(u)=u$ and $g_2(v)\equiv 0$, the model reduces to a linear cointegrating model. Our result implies that $n(\widehat \theta-\theta_0)\to_D [f_e(0)]^{-1} [\int_0^1B(r)B(r)^\top dr]^{-1} \int_0^1 B(r)dU(r)$ that is the result of \citet{xiao2009a} without correlation between $x_t$ and $e_t$; if $g_2(v)=v$ and $g_1(u)\equiv 0$, $h(r,v)\equiv v$ the model reduces to a linear model with stationary regressor. Our result implies that $\sqrt{n}(\widehat \beta-\beta_0)\to_D\sqrt{\tau(1-\tau)}[f_e(0)]^{-1} [\mathbb{E}(v_1v_1^\top)]^{-1/2} N(0,I_{d_2})$, which is the result of quantile regression for linear model. See \citet{koenker1978}. $\Box$

\medskip

While $\Sigma$ can easily be consistently estimated by the function $h$ and observations of $v_t$, we need to construct consistent estimators for both $a_1$ and $a_2$ in order to use the asymptotic limits for statistical inferences. These estimators are given by $\widehat{a}_1=\frac{1}{n}\sum_{t=1}^n [\rho'(\widehat{e}_t)]^2$ and $\widehat{a}_2=\frac{1}{n}\sum_{t=1}^n \rho''(\widehat{e}_t$), respectively, where $\widehat{e}_t=y_t-g(\widehat{\theta}^{\,\top} x_t, \widehat{\beta}^{\,\top} z_t)$ for $1\leq t\leq n$. We then have the following corollary.

\begin{cor}\label{cor1}
Let the conditions of Theorem \ref{th1} hold. If, in addition, $\{[\rho'(e_t)]^2\}$ and $\{\rho''(e_t)\}$ are mean ergodic, then we have $\widehat{a}_1\to_Pa_1$ and $\widehat{a}_2\to_Pa_2$ as $n\to\infty$.
\end{cor}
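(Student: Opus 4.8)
\ \ The plan is to reduce each feasible quantity to an infeasible average in the true errors and then invoke mean ergodicity. Write $\widetilde a_1=\frac1n\sum_{t=1}^n[\rho'(e_t)]^2$ and $\widetilde a_2=\frac1n\sum_{t=1}^n\rho''(e_t)$, where, when $\rho'$ or $\rho''$ is a genuine generalized function, it is understood through its regularizing sequence (so that, e.g., for LAD and quantile loss, $\widehat a_2$ and $\widetilde a_2$ are kernel-type estimators of $a_2=2f_e(0)$ and $a_2=f_e(0)$, respectively). First I would observe that, by the tower property and Assumption (C.2), $\mathbb{E}[\rho'(e_t)]^2=\mathbb{E}[\xi_{1t}^2]=a_1$ and $\mathbb{E}[\rho''(e_t)]=\mathbb{E}[\xi_{3t}]=a_2$, so the assumed mean ergodicity of $\{[\rho'(e_t)]^2\}$ and $\{\rho''(e_t)\}$ immediately gives $\widetilde a_1\to_P a_1$ and $\widetilde a_2\to_P a_2$. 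It then suffices to show the feasible--infeasible gaps $\widehat a_1-\widetilde a_1$ and $\widehat a_2-\widetilde a_2$ are $o_P(1)$.

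The second step is to bound the residual perturbation $r_t:=\widehat e_t-e_t=[g_1(\widehat\theta^{\top}x_t)-g_1(\theta_0^{\top}x_t)]+[g_2(\widehat\beta^{\top}z_t)-g_2(\beta_0^{\top}z_t)]$. A first-order expansion, together with the $H$-regularity of $g_1,\dot g_1,\ddot g_1$ and the homogeneity-order conditions in (D.1), the weak limit of $n^{-1/2}x_{[n\cdot]}$, and the rate $n(\widehat\theta-\theta_0)=O_P(1/\dot\nu(\sqrt{n}))$ from Theorem \ref{th1}, yields $g_1(\widehat\theta^{\top}x_t)-g_1(\theta_0^{\top}x_t)=O_P(n^{-1/2})$; likewise, since $h(\cdot,v)$ is bounded by (B.1) and $\sqrt n(\widehat\beta-\beta_0)=O_P(1)$, one gets $g_2(\widehat\beta^{\top}z_t)-g_2(\beta_0^{\top}z_t)=O_P(n^{-1/2})$. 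Hence $\frac1n\sum_{t=1}^n|r_t|=O_P(n^{-1/2})$, and with the aid of the law of the iterated logarithm for $x_t$ also $\max_{1\le t\le n}|r_t|=o_P(1)$.

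The third step handles the gaps case by case, according to the nature of the relevant derivative. When $\rho'$ and $\rho''$ are continuous ordinary functions (squared error, Huber's $\rho'$, expectile, $L_p$ with $p\ge2$), the mean value theorem, together with boundedness of $\rho'\rho''$ in the bounded cases and the moment bounds on $e_t$ in (C.2) together with $\frac1n\sum_t|e_t|=O_P(1)$ otherwise, bounds each gap by $o_P(1)$. When $\rho'$ is discontinuous or $\rho''$ has an integrable singularity but is still locally integrable (LAD's and quantile's $\rho'$, Huber's $\rho''$, and $\rho''(u)=p(p-1)|u|^{p-2}$ for $L_p$ with $1<p<2$), the summand of the gap is nonzero only on a ``level-crossing'' set of indices $t$ where $e_t$ lies within $\max_s|r_s|$ of a singular point, and the fraction of such indices tends to zero in probability because $\max_s|r_s|\to_P0$ and $e_t$ has a continuous density (for the integrable-singularity case one additionally uses $\int_{|u|\le\delta}|u|^{p-2}du\to0$). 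The remaining, and most delicate, case is $\widehat a_2$ for LAD and quantile loss, where $\rho''$ is a delta-type generalized function: writing $\rho''=\lim_m\rho''_m$ along a delta-convergent sequence $\rho''_m(u)=h_m^{-1}\kappa(u/h_m)$ turns $\widehat a_2=\frac1n\sum_t\rho''_{m_n}(\widehat e_t)$ into a residual-based kernel estimator; decomposing $\widehat a_2=\frac1n\sum_t\rho''_{m_n}(e_t)+[\widehat a_2-\frac1n\sum_t\rho''_{m_n}(e_t)]$, the first term converges to $a_2$ by the delta-sequence property and mean ergodicity, while the second is $O_P(h_{m_n}^{-2}n^{-1/2})$ by Lipschitz continuity of $\kappa$ and the bound on $\frac1n\sum_t|r_t|$, and is thus negligible once $h_{m_n}$ is taken to vanish more slowly than $n^{-1/4}$.

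I expect this last case to be the main obstacle: one must coordinate the rate of the regularization $\rho''_m\to\rho''$ with the rate at which the estimation error enters $\widehat e_t$, since a delta-type $\rho''$ magnifies the residual perturbation by a factor $h_{m_n}^{-2}$; the admissible bandwidths are exactly the mild conditions already implicit in the generalized-function arguments of Section 3, and hold for every loss function of interest. Assembling the tower-property identities, mean ergodicity, and the residual bounds implied by Theorem \ref{th1} then completes the proof that $\widehat a_1\to_P a_1$ and $\widehat a_2\to_P a_2$. $\Box$
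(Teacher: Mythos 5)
Your proposal is correct and follows essentially the same route as the paper: decompose $\widehat a_i-a_i$ into a feasible-minus-infeasible term controlled by the uniform $O_P(n^{-1/2})$ bound on $\widehat e_t-e_t$ implied by Theorem \ref{th1} (via the mean value theorem when $\rho'$, $\rho''$ are continuous) plus an infeasible average handled by mean ergodicity and the tower property. The only difference is that the paper dispatches the non-smooth losses in one sentence by appealing to the regular sequences $\rho'_m,\rho''_m$, whereas you spell that case out (level-crossing argument for discontinuous $\rho'$, residual-based kernel argument with the bandwidth condition $h_{m_n}\gg n^{-1/4}$ for delta-type $\rho''$), which is a more explicit version of the same idea rather than a different proof.
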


The ergodicity for the two sequences is quite weak and can therefore be fulfilled under several sufficient conditions, such as requiring $e_t$ be a martingale difference sequence or a strictly stationary and $\alpha$-mixing sequence. When the derivations of the loss function $\rho(u)$ do not exist in ordinary sense, its regular sequence may be used for $\rho(u)$ (see Appendix A and \citet[p 918]{phillips1995}).
\medskip

Now we consider the regression function $g(u,v)=g_1(u)+g_2(v)$ where both $g_1(u)$ and $g_2(v)$ are smooth but $g_1(u)$ is integrable on the entire real line stipulated by Assumption D(3). This assumption will change the asymptotic theory drastically. The limit theory depends on the local time of some scalar Brownian motion $W(r)$ defined as
\begin{equation*}
  L_W(p,s)=\lim_{\epsilon\to 0}\frac{1}{2\epsilon} \int_0^pI(|W(r)-s|<\epsilon) dr,
\end{equation*}
which measures the sojourning time of $W(r)$ around $s$ during the time period $(0,p)$. In our context, the scalar Brownian motion may be $B_1(r):=\theta_0^\top B(r)$ where $B(r)$ is defined by \eqref{bmotion}.

Another feature of the limit theory about this regression function is that we derive in a new coordinate system first, then recover the limit in the original coordinate. To do so, using $\theta_0$ that is a nonzero but not necessarily a unit vector, we construct an orthogonal matrix $P=(\theta_0/\|\theta_0\|, P_1)_{d_1\times d_1}$ to rotate all the vectors of interest, including $\theta_0$, $x_t$ and $B(r)$. The relevant rotated vectors are $x_{1t}\equiv\theta_0^\top x_t$ and $x_{2t}\equiv P_1^\top x_t$ that after normalization converge jointly to $B_1(r)\equiv\theta_0^\top B(r)$ and $B_{2}(r)\equiv P_1^\top B(r)$; and we shall denote by $L_1(p,s)$ the local time of $B_1(r)$.

\begin{theo}\label{th2}
Under Assumptions A-C and D(2), as $n\to\infty$ we have
\begin{align*}
\begin{pmatrix}\widehat\alpha\\ \widehat \gamma \end{pmatrix}=\begin{pmatrix} \sqrt[4]{n}\frac{1}{\|\theta_0\|^2}\theta_0^\top (\widehat\theta-\theta_0)\\
\sqrt[4]{n}^3P_1^\top (\widehat\theta-\theta_0)\\
\sqrt{n}(\widehat\beta-\beta_0)
\end{pmatrix}\to_D\frac{\sqrt{a_1}}{a_2}M^{-1/2}N(0,I_d),
\end{align*}
where $P\equiv(\theta_0/\|\theta_0\|, P_1)$ is an orthogonal matrix, $a_1$ and $a_2$ are positive constants given in Assumption C, $M=\text{diag}(M_1, \Sigma)$ is a symmetric positive definite $d\times d$ matrix in which $M_1=(m_{ij})$ is given by
\begin{align*}
m_{11}=&\int [u\dot{g}_1(u)]^2du\; L_1(1,0), & m_{12}=&\int_0^1 B_2^\top(r)dL_1(r, 0) \int u \dot{g}_1^2(u)du,\\
m_{21}=&m_{12}, & m_{22}=&\int_0^1B_2(r)B_2(r)^\top dL_1(r,0)\int \dot{g}_1^2(u)du,
\end{align*}
and $\Sigma$ is the $d_2\times d_2$ matrix defined defined in Theorem 2.1.
\end{theo}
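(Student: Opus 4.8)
The plan is to run a localized quadratic--approximation argument for the M--criterion in a rotated coordinate system. Writing $P=(\theta_0/\|\theta_0\|, P_1)$ as in the statement, I would introduce local parameters $u=(\alpha,\gamma_1^\top,\gamma_2^\top)^\top\in\mathbb{R}^{d}$ through
\[
\theta=\theta_0+n^{-1/4}\alpha\,\theta_0+n^{-3/4}P_1\gamma_1,\qquad \beta=\beta_0+n^{-1/2}\gamma_2,
\]
so that, with $x_{1t}=\theta_0^\top x_t$ and $x_{2t}=P_1^\top x_t$, one has $x_t^\top(\theta-\theta_0)=n^{-1/4}\alpha\,x_{1t}+n^{-3/4}\gamma_1^\top x_{2t}$ and $z_t^\top(\beta-\beta_0)=n^{-1/2}\gamma_2^\top z_t$; note that this reparametrization is exactly the inverse of the one appearing in the theorem, since $\|\theta_0\|^{-2}\theta_0^\top(\theta-\theta_0)=n^{-1/4}\alpha$ and $P_1^\top(\theta-\theta_0)=n^{-3/4}\gamma_1$. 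First I would record the increment of the fitted regression, $\Delta_{nt}(u)=[g_1(x_t^\top\theta)-g_1(x_{1t})]+[g_2(z_t^\top\beta)-g_2(z_t^\top\beta_0)]$, Taylor--expand $g_1$ and $g_2$ to second order in their arguments (using Assumption D(2)), and Taylor--expand $\rho$ to second order in the generalized--function sense, replacing $\rho'$ and $\rho''$ by members of a regular/delta--convergent sequence as in \citet{phillips1991,phillips1995} and checking at the end that the limits do not depend on the chosen sequence. This yields the localized criterion
\[
\mathcal{L}_n(u):=L_n(\theta,\beta)-L_n(\theta_0,\beta_0)=-u^\top S_n+\tfrac12 u^\top H_n u+R_n(u),
\]
where $S_n$ collects the linear (score) terms, $H_n$ the quadratic (Hessian) terms, and $R_n$ the remainder.

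Next I would identify the limits of $H_n$ and $S_n$. The novelty relative to Theorem~\ref{th1} is that $g_1$ is now $I$--regular, so the ``signal'' from $x_t^\top\theta_0$ is concentrated on the $O_P(\sqrt n)$ indices $t$ at which $x_{1t}=O_P(1)$, and the relevant limit objects are governed by the local time $L_1(\cdot,0)$ of $B_1=\theta_0^\top B$. I would invoke local--time limit theorems for sums of $I$--regular functions of a unit--root process, possibly carrying a slowly varying weight $w(t/n)$, of the type $n^{-1/2}\sum_t w(t/n)f(x_{1t})\Rightarrow(\int f)\int_0^1 w(r)\,dL_1(r,0)$ and $n^{-1/4}\sum_t \xi_{1t} f(x_{1t})$ converging to a mixed normal with conditional variance $a_1(\int f^2)L_1(1,0)$; combined with Assumption C.2 ($\mathbb{E}[\xi_{3t}\mid\mathcal{F}_{t-1}]=a_2$, $\mathbb{E}[\xi_{1t}^2\mid\mathcal{F}_{t-1}]=a_1$), with the law of large numbers $\tfrac1n\sum_t\dot g_2^2(\beta_0^\top z_t)z_tz_t^\top\to\Sigma$ from Appendix B, and with Assumption C.3 for the $\gamma_2$--block, this gives $H_n\to_D a_2 M$ and $S_n\to_D\sqrt{a_1}\,M^{1/2}Z$ jointly, with $Z\sim N(0,I_d)$ independent of $M$ and $M=\mathrm{diag}(M_1,\Sigma)$ as displayed. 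The block--diagonal structure follows because the $g_1$--part is active only on a vanishing fraction of the sample, so every cross--term between the $(\alpha,\gamma_1)$--block and the $\gamma_2$--block carries a prefactor (at least $n^{-3/4}$) against a sum of order at most $n$ and hence vanishes; positive definiteness of $M_1$ comes from Cauchy--Schwarz applied to $\int u\dot g_1^2=\int\dot g_1\cdot(u\dot g_1)$ together with $L_1(1,0)>0$ a.s., and of $\Sigma$ from continuity of $v_1$. The decisive point is that $S_n$ and $H_n$ are tied to the \emph{same} matrix $M$: the conditional covariance of the score equals $a_1 M$ precisely because $\dot g_1(x_{1t})x_{1t}\dot g_1(x_{1t})=x_{1t}\dot g_1^2(x_{1t})$, etc., so that the score cross--moments reproduce $m_{11},m_{12},m_{22}$ and $\Sigma$; this is what produces the clean limit $\tfrac1{a_2}M^{-1}\sqrt{a_1}M^{1/2}Z=\tfrac{\sqrt{a_1}}{a_2}M^{-1/2}Z$.

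The remainder $R_n(u)$ collects the third--order terms of the $\rho$-- and $g_1$--expansions. Here I would use that on the active set the perturbation $x_t^\top(\theta-\theta_0)$ is $O_P(n^{-1/4})$ uniformly in $u$ over compacts, while off the active set $g_1,\dot g_1,\ddot g_1$ are negligible because they are $I$--regular; together with $\ddot g_1(u)$ and $u\ddot g_1(u)$ being $I$--regular (Assumption D(2)) and with the martingale structure of $\xi_{1t}$ (so that score--type remainders are of square--root order in the number of active terms), this gives $\sup_{\|u\|\le K}|R_n(u)|=o_P(1)$ for every fixed $K$. I would also have to verify uniform negligibility of the errors incurred by replacing $\rho',\rho''$ by regular sequences, which is where the generalized--function machinery of Appendix A is used in earnest, and of the error in replacing $\rho''(e_t)$ by $a_2$ inside $H_n$ (a martingale--difference sum of lower order).

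Finally I would conclude. Because $g_1,g_2$ are nonlinear, $L_n$ is not convex in $(\theta,\beta)$, so the convexity lemma of \citet{pollard1991} does not apply directly; instead I would show that $\mathcal{L}_n(\cdot)$ converges to $\mathcal{L}_\infty(u)=-u^\top\sqrt{a_1}M^{1/2}Z+\tfrac12 a_2\,u^\top M u$ uniformly on compacts, and that $\mathcal{L}_n(u)$ is eventually bounded below by a positive multiple of $\|u\|^2$ outside large balls, which forces any $o_P(\epsilon_n^2)$--approximate minimizer to satisfy $\widehat u=O_P(1)$ (with $\epsilon_n$ matching the stated rates). The argmin continuous--mapping theorem then yields $\widehat u\to_D\arg\min\mathcal{L}_\infty=\tfrac{\sqrt{a_1}}{a_2}M^{-1/2}Z$, and translating $\widehat u=(\widehat\alpha,\widehat\gamma_1^\top,\widehat\gamma_2^\top)^\top$ back through $\widehat\theta-\theta_0=n^{-1/4}\widehat\alpha\,\theta_0+n^{-3/4}P_1\widehat\gamma_1$ and $\widehat\beta-\beta_0=n^{-1/2}\widehat\gamma_2$ gives the claimed joint limit. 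The main obstacle is the second step: establishing the joint local--time--based limit theory for the score and Hessian when the increments $\xi_{1t},\xi_{3t}$ are only generalized functions of $e_t$, with careful bookkeeping of which terms live on the $O(\sqrt n)$--size active set and of the random (mixed--normal) norming that links $S_n$ and $H_n$ through the single matrix $M$.
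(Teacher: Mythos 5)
Up to the concluding step, your plan coincides with the paper's own proof: the same rotation $P=(\theta_0/\|\theta_0\|,P_1)$, the same localization rates $(n^{1/4},n^{3/4},n^{1/2})$ (your reparametrization is exactly the inverse of the paper's $\alpha=D_nJP^\top(\theta-\theta_0)$), the same second--order expansion of $\rho$ through regular (delta--convergent) sequences $\rho_m',\rho_m''$, the same local--time limit theory identifying the Hessian limit $a_2M$ and the mixed--normal score limit $M^{1/2}\xi$, $\xi\sim N(0,a_1I_d)$, and the same cubic remainder control; this is precisely what the paper packages in Lemmas B.4--B.5. One bookkeeping slip: the cross--block terms do not vanish because a prefactor $n^{-3/4}$ beats ``a sum of order at most $n$'' (that product is $n^{1/4}$, which diverges); they vanish because sums of $I$--regular functions of $x_{1t}$ are of local--time order $O_P(\sqrt n)$, which the paper establishes via the uniform density bounds of Lemma B.1 together with a Davydov--inequality second--moment computation for the centered part. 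Your ``active set of size $O_P(\sqrt n)$'' intuition is the right one, but the stated order argument as written would not close.

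The genuine gap is in your final step. You correctly observe that $L_n$ is not convex in $(\theta,\beta)$, decline the convexity device, and propose instead to show that the localized criterion is ``eventually bounded below by a positive multiple of $\|u\|^2$ outside large balls'' so that any approximate minimizer is $O_P(1)$, followed by the argmin continuous--mapping theorem. That coercivity claim fails in the $I$--regular case: since $g_1,\dot g_1,\ddot g_1$ are integrable, once $\|u\|$ reaches order $n^{1/4}$ (i.e.\ $\theta-\theta_0$ of order one) the identifying signal from the $g_1$--part saturates at order $\sqrt n$ --- only $O_P(\sqrt n)$ observations have $x_{1t}=O_P(1)$ --- so the criterion flattens in the $\theta$--directions and no quadratic lower bound outside compacta is available; ruling out distant minimizers is exactly the (hard) consistency problem for $I$--regular nonlinear M--estimation in the spirit of Park and Phillips (2001), and your sketch supplies no substitute argument. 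The paper's route avoids a separate tightness step by exploiting that the localized criterion $Q_{mn}$ is an exact quadratic form with nonnegative weights $\rho_m''(e_t)$, hence convex in $(\alpha,\gamma)$, and then invoking Lemma A of Knight (1989) to pass from convergence on compacts to convergence of minimizers. If you insist on dispensing with that convexity device, you must either prove consistency of $\widehat\theta$ for the $I$--regular link separately or otherwise establish $\widehat u=O_P(1)$; as written, this step of your proposal would not go through.
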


Noting that $M$ is a block diagonal matrix, the result in Theorem \ref{th2} implies
\begin{align}
D_nJ& P^\top (\widehat\theta-\theta_0)=\begin{pmatrix} \sqrt[4]{n}\frac{1}{\|\theta_0\|^2}\theta_0^\top \\
\sqrt[4]{n}^3P_1^\top
\end{pmatrix}(\widehat\theta-\theta_0)\to_D\frac{\sqrt{a_1}}{a_2}
N(0,M^{-1}_1),\label{int1}\\ \intertext{and}
&\sqrt{n}(\widehat\beta-\beta_0)\to_D\frac{\sqrt{a_1}}{a_2}N(0,\Sigma^{-1}), \label{int2}
\end{align}
as $n\to\infty$ where $J=\text{diag}(1/\|\theta_0\|, I_{d_1-1})$.

Notice also that in the condition of D.2, $g_2(v)$ is the same as in D.1. However, the convergence of \eqref{int2} has not been bothered by the first additive component $g_1(v)$, because the convergence is similar to the existing literature except our setting for the regressor is more complicated (see \citealp{pollard1991} and \citealp{koenker1978}). By sharp contrast, the asymptotic limit of $\widehat\beta$ in Theorem 2.1 is affected by the limit of the regressor in the first component function. This is due to the drastic different behavior between the H-regular and I-regular functions of unit root processes. See, for example, \citet{phillips2001} and \citet{dgd2016}.

The convergence of $\eqref{int1}$ is similar to Theorem 3.1 in \citet{dgd2016} where $J=I_d$ due to the identification condition $\|\theta_0\|=1$ in semiparametric single-index model, $a_1=4\sigma_e^2$ and $a_2=2$ because $\rho(u)=u^2$ in the paper. It is also similar to Theorem 5.1 in \citet{phillips2001} where the regressor is univariate but the parameter is multivariate. In addition, when the loss function reduces to LS loss, the convergence of $\eqref{int1}$ is the same as Theorem 2 in \citet{phillips2000}.

If one is concerned about the estimate of the matrix $P_1$, note that $P_1$ is an orthogonal system of $\theta_0^{\bot}$ (orthogonal complement space), so that once $\widehat\theta$ is available, we may define $\widehat P_1= \widehat\theta^{\bot}$. Normally, one does not need to rotate the coordinate system in practice that is introduced to facilitate our theory only.

Using the block representation of $M_1=(m_{ij})_{2\times 2}$, we have $M_1^{-1}=(\tau_{ij})_{2\times 2}$ with
\begin{equation*}
\tau_{11}=(m_{11}-m_{12}m_{22}^{-1}m_{21})^{-1}\ \ \ \text{and}\ \ \ \tau_{22}=(m_{22}-m_{21}m_{11}^{-1}m_{12})^{-1}.
\end{equation*}

The following corollary recovers the asymptotic distribution of $\widehat\theta$ from the limit of its rotation.

\begin{cor}\label{cor2}
Under the conditions in Theorem \ref{th2}, we have as $n\to\infty$
\begin{equation*}
\sqrt[4]{n}(\widehat\theta-\theta_0)\to_DN(0, \tau_{11}\theta_0\theta_0^\top).
\end{equation*}
\end{cor}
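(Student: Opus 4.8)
The plan is to derive Corollary~\ref{cor2} from Theorem~\ref{th2} by undoing the rotation encoded by the orthogonal matrix $P=(\theta_0/\|\theta_0\|,P_1)$ and observing that, among the $d_1$ coordinate directions of $\widehat\theta-\theta_0$, only the direction of $\theta_0$ carries the slow rate $n^{-1/4}$ while every direction in the orthogonal complement $\theta_0^{\perp}$ is estimated at the faster rate $n^{-3/4}$. Concretely, I would start from the identity $I_{d_1}=PP^{\top}=\|\theta_0\|^{-2}\theta_0\theta_0^{\top}+P_1P_1^{\top}$, so that
\[
\widehat\theta-\theta_0=\frac{\theta_0}{\|\theta_0\|^{2}}\,\theta_0^{\top}(\widehat\theta-\theta_0)+P_1P_1^{\top}(\widehat\theta-\theta_0),
\]
and multiply by $\sqrt[4]{n}$; recalling $\widehat\alpha=\sqrt[4]{n}\,\|\theta_0\|^{-2}\theta_0^{\top}(\widehat\theta-\theta_0)$ from Theorem~\ref{th2}, this reads
\[
\sqrt[4]{n}(\widehat\theta-\theta_0)=\theta_0\,\widehat\alpha+P_1\cdot\sqrt[4]{n}\,P_1^{\top}(\widehat\theta-\theta_0).
\]

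Next I would dispose of the second term. Theorem~\ref{th2} gives that the sub-block $\sqrt[4]{n}^{\,3}P_1^{\top}(\widehat\theta-\theta_0)$ of $\widehat\gamma$ converges in distribution and is in particular $O_P(1)$, so $P_1^{\top}(\widehat\theta-\theta_0)=O_P(n^{-3/4})$ and hence $\sqrt[4]{n}\,P_1^{\top}(\widehat\theta-\theta_0)=O_P(n^{-1/2})=o_P(1)$; thus $\sqrt[4]{n}(\widehat\theta-\theta_0)=\theta_0\,\widehat\alpha+o_P(1)$. It then remains to read off the marginal limit of $\widehat\alpha$: by Theorem~\ref{th2} (equivalently \eqref{int1}) it is the first coordinate of $\tfrac{\sqrt{a_1}}{a_2}M_1^{-1/2}N(0,I_{d_1})$, hence centred Gaussian with variance $\tfrac{a_1}{a_2^{2}}(M_1^{-1})_{11}=\tfrac{a_1}{a_2^{2}}\tau_{11}$ where $\tau_{11}=(m_{11}-m_{12}m_{22}^{-1}m_{21})^{-1}$. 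An application of Slutsky's lemma — the continuous map $x\mapsto\theta_0 x$ applied to $\widehat\alpha$, together with the $o_P(1)$ remainder — yields
\[
\sqrt[4]{n}(\widehat\theta-\theta_0)\to_D\theta_0\cdot N\!\left(0,\tfrac{a_1}{a_2^{2}}\tau_{11}\right)=N\!\left(0,\tfrac{a_1}{a_2^{2}}\tau_{11}\,\theta_0\theta_0^{\top}\right),
\]
which is the claimed (degenerate, rank-one) limit, the scalar $a_1/a_2^{2}$ being the constant carried over from Theorem~\ref{th2}.

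I do not expect a genuine obstacle here: the argument is linear algebra plus Slutsky's lemma applied to a result already in hand. The only step requiring real care is the bookkeeping of the two distinct normalisation rates ($n^{1/4}$ along $\theta_0$, $n^{3/4}$ on $\theta_0^{\perp}$) and checking that the rank-one term $\theta_0\widehat\alpha$ reproduces exactly the singular covariance proportional to $\theta_0\theta_0^{\top}$, so that no contribution from $\theta_0^{\perp}$ survives in the $\sqrt[4]{n}$-limit.
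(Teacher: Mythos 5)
Your argument is correct and takes essentially the same route as the paper: the paper writes $\sqrt[4]{n}(\widehat\theta-\theta_0)=\sqrt[4]{n}\,(D_nJP^\top)^{-1}\,D_nJP^\top(\widehat\theta-\theta_0)$ and uses $\sqrt[4]{n}\,D_n^{-1}\to \mathrm{diag}(1,{\bf 0}_{d_1-1})$, which is exactly your decomposition $I_{d_1}=\|\theta_0\|^{-2}\theta_0\theta_0^\top+P_1P_1^\top$ combined with the $O_P(n^{-3/4})$ bound on the $P_1$-component and Slutsky's lemma. One remark: the variance you obtain, $\tfrac{a_1}{a_2^{2}}\tau_{11}\,\theta_0\theta_0^\top$, is the one actually implied by \eqref{int1}; the factor $a_1/a_2^{2}$ is silently dropped in the paper's final display and in the statement of the corollary, so your bookkeeping of the constant is the more faithful one.
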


Though indicated by \eqref{int1} that the coordinates of $\widehat\theta$ on $\theta_0$ and $P_1$ have rates $\sqrt[4]{n}$ and $\sqrt[4]{n}^3$, respectively, the estimator $\widehat\theta$ eventually has slower rate $\sqrt[4]{n}$. This is the same as \citet{dgd2016} and one may find more detailed explanation therein. Interestingly notice that since the model in \citet{dgd2016} is nonparametric, $\theta_0$ is identified up to its direction, so the condition $\|\theta_0\|=1$ is imposed. Thus, the authors normalize the estimator $\widehat\theta$ to be a unit vector and find that the normalized estimator has a much faster rate than $\sqrt[4]{n}$. By contrast, no identification condition is needed in this paper as the model is parametrically nonlinear. Consequently we would not be able to achieve any rates faster than what we have obtained. However, if one could know the length of $\theta_0$, say $\|\theta_0\|=q$, the estimator $q\;\widehat\theta/\|\widehat\theta\|$ would converge to $\theta_0$ with a rate of an order $\sqrt[4]{n}^3$, basically because the normalization stretches $\widehat\theta$ directly on the unit circle with radius $q$ (see, Theorem 3.2 in \citet{dgd2016}, for example). Moreover, one may have the estimates of $a_1$ and $a_2$ similarly to Corollary \ref{cor1} and that of the local time in the limit similarly to \citet{dgd2016}. The details are omitted due to the similarity.

\medskip

Before we conclude this section, we briefly discuss another important issue related to model \eqref{m1}. If the model has conditional heteroscedasticity, such as $e_t=\sigma(\vartheta_0^\top x_t, \pi_0^\top z_t) \varepsilon_t$, where $\varepsilon_t$ is independent of $(z_t,x_t)$, and if $\sigma(u,v)=\sigma_1(u)\sigma_2(v)$ is known and separable, we will have
\begin{align*}
\log(e_t^2)=&\log(\sigma_1^2(\vartheta_0^\top x_t))+
\log(\sigma_2^2(\pi_0^\top z_t))+\log(\varepsilon^2_t)\\
=&\mu_0+\log(\sigma_1^2(\vartheta_0^\top x_t))+
\log(\sigma_2^2(\pi_0^\top z_t))+\zeta_t,
\end{align*}
where $\zeta_t$ is the centralized version of $\log(\varepsilon^2_t)$ with $\mu_0=\mathbb{E}[\log(\varepsilon^2_t)]$.

In this case, the proposed estimation method is readily applicable for us to derive the corresponding M-estimators of $\vartheta_0$ and $\pi_0$ after we replace $e_t$ by $\widehat{e}_t=y_t-g_1(\widehat{\theta}^{\;\top} x_t)-g_2(\widehat{\beta}^{\;\top} z_t)$. Under some same assumptions on $\log(\sigma_1^2(u))$ and $\log(\sigma_2^2(u))$ as in Assumption D, we may establish asymptotic properties corresponding to the above theorems and corollaries for the estimators of $\vartheta_0$ and $\pi_0$.

\section{Stork return predictability}

To demonstrate the practical relevance and superiority of our proposed model and estimation method over some natural competitors, we investigate its applicability in stock return predictability. It is common to use a linear predictive mean regression in the literature, which has led to considerable disagreements in the empirical finding as to whether stock returns are predictable or not (\citealp{CT2008}; \citealp{WG2008}). Recently, \citet{KAP2015a} use a nonparametric mean regression, \citet{Lee2016} and \citet{FL2019} consider a quantile linear regression, and \citet{TLW2021} adopt a nonparametric quantile framework to re-investigate this important issue. We shall demonstrate the nonlinearity in return predictability through the use of the proposed multiple-index model with both stationary and nonstationary predictors.

The data sets to be used for return prediction are obtained from \citet{WG2008}. They have been extensively used in the predictive literature, including the recent balanced predictive mean regression model by Ren et al. (2019), the linear quantile predictive regression by \citet{Lee2016} and \citet{FL2019}, the linear prediction with cointegrated variables by \citet{kasy2020}, the LASSO predictive regression of
\citet{LSG2021}, and the nonparametric quantile predictive regression of \citet{TLW2021}, among others. The monthly data spans from January 1927 to December 2005. The dependent variable, excess stock return, is defined as the difference between the S\&P 500 index return, including dividends and the one month Treasury bill rate. The nonstationary (persistent) predictors include dividend-price ($dp$), dividend-payout ratio ($de$), long term yield ($lty$), book to market ($bm$) ratios, T-bill rate ($tbl$), default yield spread ($dfy$), net equity expansion ($ntis$), earnings-price ($ep$), term spread ($tms$), while the stationary variables include default return spread ($dfr$), long term rate of return ($ltr$), stock variance ($svar$) and inflation ($infl$). The first-order serial correlations of these variables and their time series plot are available from, for example, \citet{LSG2021}. For detailed description on each series and the data construction, please refer to \citet{WG2008}.

To measure the performance of the predictive regression models under discussion, we use the pseudo out-of-sample $R^2$, defined as
$$PR^2=\frac{\sum_{t=T_1+1}^{T_2} \rho(\widehat e_t)}{\sum_{t=T_1+1}^{T_2} \rho(\overline{e}_t)},$$
where $\widehat{e}_t$ and $\overline{e}_t$ are the respective out-of-sample prediction errors at time $t$ for a given model and that for the base model with only a constant predictor, for the forecasting sample spanning from $T_1$ to $T_2$. Four loss functions are entertained for $\rho(\cdot)$: (L1) squared errors loss $\rho(e)=e^2$; (L2) absolute errors loss $\rho(e)=|e|$,  (L3) Huber's loss $\rho_\delta (e)=\frac{1}{2}e^2\cdot 1\{|e|\leq \delta\}+\delta\cdot (e-\frac{1}{2} \delta) \cdot 1\{|e|> \delta\}$ with $\delta=1.25$; and (L4) quantile loss $\rho_\tau (e)=e\cdot (\tau-I(e<0))$. Positive $PR^2$ indicates the better performance of the given model over the base model, and the larger the value is, the better the performance is.

For space limitation, we follow \citet{KAP2015a}, \citet{Lee2016}, and \citet{TLW2021} to consider two subsamples: (i) the period spanning from January 1927 to December 2005, where significant predictability has been discovered, and (ii) the tranquil period starting from January 1952 to December 2005, where mixed evidences are found for predictability. We consider a rolling-window scheme for the out-of-sample return prediction and let the rolling window in-sample size (RWS) be 120 (10 years), 240 (20 years), and 360 (30 years), respectively. For example, the forecast starts from January 1927 and ends at December 2005 for the first subsample when RWS equals 120.

In contrast to \citet{TLW2021}, our proposed model allows the use of multivariate nonstationary predictors together with those stationary ones to capture nonlinearity in the return predictability. There are numerous choices of combinations of the nonstationary and stationary predictors. For demonstration purposes, we use the stationary predictor $z=\{inf, svar\}$, and choose the nonstationary predictors as either $x=\{bm,lty\}$ or $x=\{dp,tms\}$.  When $x=\{bm,lty\}$, we consider an exponential link function $g_1(u)=u\cdot e^{-0.4u^2}$ and a linear link $g_2(v)=v$ outside of the nonstationary (resp. stationary) and stationary (resp. nonstationary) index variables, respectively. This amounts to four possible combinations. When $x=\{dp,tms\}$, we alternatively consider the normal cumulative distribution function $\Phi$ to capture possible nonlinearity. The logistic distribution function has also been experimented in this case and makes little difference to subsequent main findings, the results of which are therefore not reported below.
The extensive analysis with a comprehensive set of predictors noted above and all possible specifications of the nonlinear link functions remains an interesting but a challenging work, which is left as future study.

Tables \ref{PR2_1} and \ref{PR2_2} collect the pseudo out-of-sample $R^2$ for excess return prediction with the above two sets of selected predictors and specifications of the link functions, respectively, for the loss functions specified in (L1)-(L3). For each loss function and each RWS, the combination of the link function that produces the largest $PR^2$ is bolded. There are several interesting findings. First, noticeable nonlinearity exists in return prediction, for both choices of the nonstationary predictors, and for both subsamples considered. It is especially found that for the squared error loss and the Huber's loss, the linear specification is always worse than the nonlinear alternatives. When $x=\{bm,lty\}$, nonlinearity has been discovered for both the stationary index and the nonstationary component. However, when $x=\{dp,tms\}$, only the nonstationary index presents nonlinearity in return prediction. Second, nonlinear predictability is found in forecasting return when linear predictability is not. The linear predictive regression (nearly) outperforms the constant model without any predictor when $x=\{bm,lty\}$, but the former underperforms the latter when $x=\{dp,tms\}$. Although the linear model does not reveal predictability in this case, it is uniformly observed that the nonlinear link function specification improves over the constant model for both choices of subsamples. Third, the conventional ``tranquil'' period 1952-2005 reveals strong nonlinear predictability, which is quite comparable to the 1927-2005 subsample. This finding is in contrast to the declining linear predictability discovered by \citet{CY2006}, or the weak nonlinear predictability with only a single nonstationary predictor by \citet{KAP2015a}. To summarize, significant return predictability has been revealed in the nonlinear predictive regression with both stationary and nonstationary indices we introduced.

As the recent literature has witnessed a growing interest in investigating the return prediction in quantiles (Lee, 2016; Fan and Lee, 2019; Tu, et al., 2021), we next study the performance of our proposed model in predicting the return quantiles. We consider the quantile level $\tau=0.05,0.1,0.2,\ldots,0.9,0.95$. The analysis will only focus on the nonlinear effect of the nonstationary index as discovered from Tables \ref{PR2_1} and \ref{PR2_2}. Therefore, we only present the results for $x=\{dp,tms\}$ and $z=\{inf, svar\}$ in Table \ref{PR2_3} at all the quantile levels specified above, for the two subsamples mentioned earlier. The nonlinear specification for the nonstationary component includes not only the normal distribution function, but also the popularly used logistic distribution function (Logit), as a robustness check. From Table \ref{PR2_3}, it is observed that linear quantile predictability seems to only exist for the lower quantile levels, such as those at $\tau=0.05,0.1$ and $0.2$. However, nonlinear quantile predictability prevails at all quantile levels. Even at the lower quantiles, the proposed nonlinear index model outperforms the linear model except for a very few cases. To compare the normal with logistic distributions, the latter seems to enjoy better performance for both sample periods. In addition, it seems that the nonlinear predictability becomes stronger as the quantile level moves to the two ends, especially for the first subsample.

\begin{table}[htbp]
\centering
\caption{Pseudo Out-of-sample $R^2$ for excess return prediction with $x=\{bm,lty\}$ and $z=\{inf, svar\}$.}\label{PR2_1}
\begin{tabular}[t]{lc l rrr l rrr}
\toprule
&&\multicolumn{8}{c}{Panel A: January 1927-December 2005}\\
\cline{2-10}
                       &        &&\multicolumn{3}{c}{$g_2(v)=v$}       &&  \multicolumn{3}{c}{$v\cdot e^{-0.4v^2}$}      \\
                       \cline{4-6}\cline{8-10}
$g_1(u)=$                      &RWS&& L1         &  L2        &  L3        &&  L1        &  L2         &   L3      \\
                      \hline
$u$                & 120 && 0.0306  &  \bf{0.0319}  &  0.0258 &&  0.0302  &  0.0276  &   0.0312\\
                      & 240 && 0.0658  &  0.0326  &  0.0763 &&  0.0601  &  0.0321  &   0.0747\\
                      & 360 && 0.0657  &  0.0076  &  0.0784 &&  \bf{0.0672}  &  0.0088  &   \bf{0.0797}\\
$u\cdot e^{-0.4u^2}$&
                       120 && 0.0346  &  0.0243  &  0.0341 &&  \bf{0.0364}  &  0.0266  &  \bf{0.0367}\\
                       &240 && 0.0788  &  \bf{0.0328}  &  \bf{0.0773} &&  \bf{0.0848}  &  0.0322  &  0.0723\\
                       &360 && 0.0325  &  \bf{0.0217}  &  0.0791 &&  0.0311  &  \bf{0.0217}  &  0.0780\\
\midrule
&&\multicolumn{8}{c}{Panel B: January 1952-December 2005}\\
\cline{2-10}
$u$    & 120 && -0.0274 &   \bf{0.0014} &  -0.0213 &&  -0.0255 &  -0.0063 &  -0.0144\\
                      & 240 && 0.0162  &  0.0063 &   0.0394 &&  0.0112 &   0.0063 &   0.0370\\
                      & 360 && 0.0177 &  -0.0331 &   0.0343 &&  \bf{0.0193} &  -0.0315 &   0.0363\\
$u\cdot e^{-0.4u^2}$&
                       120 && -0.0136 &  -0.0063 &  -0.0129 &&  \bf{-0.0134} &  -0.0031 &  \bf{-0.0077}\\
                       &240 && 0.0370  &  \bf{0.0093} &   \bf{0.0395} &&  \bf{0.0442} &   0.0086  &  0.0318\\
                       &360 && -0.0337 &  \bf{-0.0138}  &  \bf{0.0427} &&  -0.0355 &  -0.0139  &  0.0417\\
\bottomrule
\end{tabular}
\end{table}

\begin{table}[htbp]
\centering
\caption{Pseudo Out-of-sample $R^2$ for excess return prediction with $x=\{dp,tms\}$ and $z=\{inf, svar\}$.}\label{PR2_2}
\begin{tabular}[t]{lc l rrr l rrr}
\toprule
&&\multicolumn{8}{c}{Panel A: January 1927-December 2005}\\
\cline{2-10}
                       &        &&\multicolumn{3}{c}{$g_2(v)=v$}       &&  \multicolumn{3}{c}{$\Phi(v)$}      \\
                       \cline{4-6}\cline{8-10}
$g_1(u)=$       &RWS&& L1         &  L2        &  L3        &&  L1        &  L2         &   L3      \\
                      \hline
$u$    & 120 && -0.0473&   -0.0360&   -0.0535 &&  -1.0341&   -1.8261 &  -1.4626\\
                      & 240 && -0.0758&   -0.0451&   -0.0774 &&  -0.8457 &  -1.3144 &  -1.6810\\
                      & 360 && -0.0665&   -0.0378&   -0.0716 &&  -0.8198 &  -0.8714 &  -0.6289\\
$\Phi(u)$&
                       120 && \bf{0.0701} &   \bf{0.0490}  &  \bf{0.0669} &&  -1.0341 &  -1.8261 &  -1.4626\\
                       &240 && \bf{0.0970} &   \bf{0.0517}  &  \bf{0.1162} &&  -0.8457 &  -1.3144 &  -1.6810\\
                       &360 && \bf{0.1275} &   \bf{0.0530} &   \bf{0.1210} &&  -0.8198 &  -0.8714 &  -0.6289\\
\midrule
&&\multicolumn{8}{c}{Panel B: January 1952-December 2005}\\
\cline{2-10}
$u$    & 120 && -0.0306 &  -0.0263 &  -0.0335 &&  -0.2523 &  -0.1541 &  -0.2666\\
                      & 240 && -0.0580 &  -0.0351 &  -0.0579 &&  -0.7872 &  -0.4355 &  -0.7848\\
                      & 360 && -0.0748  & -0.0379 &  -0.0742 &&  -1.2610  & -0.6199 &  -1.2700\\
$\Phi(u)$&
                       120 && \bf{0.0466} &   \bf{0.0395}  &  \bf{0.0269} &&  -0.8947 &  -1.0879  & -1.2178\\
                       &240 && \bf{0.0664}  &  \bf{0.0286}  &  \bf{0.0844} &&  -0.4725 &  -0.6958 &  -0.6943\\
                       &360 && \bf{0.1026}  &  \bf{0.0319}  &  \bf{0.0950} &&  -0.7625 &  -0.5595 &  -0.4954\\
\bottomrule
\end{tabular}
\end{table}

{\footnotesize
\begin{sidewaystable}[htbp]
\centering
\caption{Pseudo Out-of-sample $R^2$ for quantile excess return prediction with $x=\{dp,tms\}$ and $z=\{inf, svar\}$.}\label{PR2_3}
\begin{tabular}[t]{lc l rrrrr r rrrrr}
\toprule
&&\multicolumn{12}{c}{Panel A: January 1927-December 2005}\\
\cline{2-14}
                       &        &&\multicolumn{11}{c}{$g_2(v)=v$}       \\
                       \cline{4-14}
$g_1(u)=$       &RWS&&$\tau=0.05$ &0.1&0.2&0.3&0.4&0.5&0.6&0.7&0.8&0.9&0.95\\
                      \hline
$u$    & 120 &&-0.0187 &   0.0032 &  -0.0109 &  -0.0099 &  -0.0241 &  -0.0360 &  -0.0408 & -0.0451 &  -0.0605 &  -0.0886 &  -0.0796\\
          & 240 &&0.0389  &  0.0133  &  0.0011  & -0.0108  & -0.0266  & -0.0451 &  -0.0469 &  -0.0677 &  -0.0921 &  -0.1123  & -0.1581\\
          & 360 &&0.0030  &  0.0305  &  0.0093 &  -0.0100 &  -0.0187 &  -0.0382 &  -0.0455  & -0.0714  & -0.0895 &  -0.1372 &  -0.1828\\
$\Phi(u)$&
                       120 && -0.0504  &  0.0165  &  0.0305  &  \bf{0.0439} &   0.0432 &   0.0490 &   \bf{0.0514}  & 0.0493  &  0.0497  &  0.0540  &  0.0338\\
                       &240 && 0.0569  &  0.0319  &  0.0237 &   0.0442 &   \bf{0.0649}  &  0.0517  &  \bf{0.0562} &  0.0461  &  0.0437 &   0.0471 &   0.0328\\
                       &360 && -0.0056 &   0.0039  &  0.0127 &   0.0542 &   0.0528 &   0.0530 &  0.0534 &  0.0381 &   0.0257 &   0.0268 &   0.0333\\

$Logit(u)$ & 120 && \bf{-0.0017} &   \bf{0.0251}  &  \bf{0.0318}  &  0.0393  &  \bf{0.0441}  &  \bf{0.0516}  &  0.0432  & \bf{0.0543}  &  \bf{0.0583}  &  \bf{0.0606}  &  \bf{0.0732}\\
          & 240 &&\bf{0.0667} &  \bf{ 0.0387}  &  \bf{0.0363} &   \bf{0.0536} &   0.0568 &   \bf{0.0564}  &  0.0538 &  \bf{0.0633} &   \bf{0.0534}  &  \bf{0.0629}  &  \bf{0.0764}\\
          & 360 &&\bf{0.0058} &   \bf{0.0131} &   \bf{0.0391} &   \bf{0.0560}  &  \bf{0.0540}  &  \bf{0.0620} &   \bf{0.0613} &  \bf{0.0648} &   \bf{0.0439}  &  \bf{0.0419} &   \bf{0.0714}\\
\midrule
&&\multicolumn{12}{c}{Panel B: January 1952-December 2005}\\
\cline{2-14}
$u$    & 120 &&-0.0169 &  -0.0011 &  -0.0080 &  -0.0091 &  -0.0105 &  -0.0262 &  -0.0251& -0.0316 &  -0.0426 &  -0.0552 &  -0.0629\\
          & 240 &&\bf{0.0459} &   0.0145  &  0.0064  & -0.0093  & -0.0149  & -0.0350 &  -0.0346 & -0.0518 &  -0.0727 &  -0.1068 &  -0.1245\\
          & 360 &&\bf{0.0406}  &  0.0320 &   0.0125  & -0.0147  & -0.0297 &  -0.0380 &  -0.0454 & -0.0492 &  -0.0709 &  -0.1351 &  -0.1811\\
$\Phi(u)$&
              120 && 0.0418  & -0.0034  &  0.0124  &  \bf{0.0392} &   0.0296  &  \bf{0.0395}  &  \bf{0.0454} & 0.0273 &   0.0228  &  \bf{0.0379}  &  0.0199\\
          &240 &&0.0357 &   0.0141 &   0.0130 &   0.0377 &   \bf{0.0481} &   0.0286  &  0.0293 &  0.0029 &  -0.0127  &  0.0033 &   0.0001\\
          &360 &&0.0027 &   0.0141 &   0.0092&    \bf{0.0707} &   0.0456 &   0.0319  &  0.0146  & -0.0268 &  -0.0933 &  -0.1060 &  -0.0830\\
$Logit(u)$ &
              120 &&\bf{0.0445}  &  \bf{0.0303}  &  \bf{0.0323} &   0.0326  &  \bf{0.0336} &   0.0385  &  0.0335&  \bf{0.0394} &   \bf{0.0273}  &  0.0378  &  \bf{0.0415} \\
          & 240 &&0.0452  &  \bf{0.0171} &   \bf{0.0204}   & \bf{0.0479}  &  0.0377   & \bf{0.0358} &   \bf{0.0306} &  \bf{0.0317} &   \bf{0.0029} &   \bf{0.0194}  &  \bf{0.0529}\\
          & 360 &&0.0140 &   \bf{0.0173} &   \bf{0.0395} &   0.0596  &  \bf{0.0474}  &  \bf{0.0415} &   \bf{0.0270} &  \bf{0.0143} &  \bf{-0.0612} &  \bf{-0.0746} &  \bf{-0.0116}\\
\bottomrule
\end{tabular}
\end{sidewaystable}
}

\section{Monte Carlo simulations}\label{simulation}

We consider the following data generating process:
\begin{equation}\label{dgp}
  y_t=g(x_{t}^\top {\bf\theta}, z_{t}^\top {\bf\beta})+e_t,
\end{equation}
where $x_t=\rho_{n1} x_{t-1}+\sigma_1 w_{t}$, $z_t=h(t/n)+v_t$, $h(\tau)=\tau$, $v_t=\rho_{n,2} v_{t-1}+\sigma_2 \epsilon_{t}$, both $x_t$ and $v_t$ are bivariate autoregressive vector processes with $x_0=0$,  $v_0=0$,  $\rho_{n1}=I_2$, $\rho_{n2}=0.5\times I_2$, $\sigma_1=diag\{0.2,0.5\}$, $\sigma_2=I_2$, and $(w _{t}^\top,\epsilon_{t}^\top)$ is a series of independent four dimensional normal random vector with zero mean and identity covariance matrix, $\theta=(1,0)^\top$, $\beta=(2,1)^\top$. The regression residual $e_t=0.1\cdot u_t$ with $u_t$ independently generated according to four distributions: (D1) standard normal;  (D2) mixed normal $0.9\cdot N(0,1)+0.1\cdot N(0,4)$; (D3) $t$ distribution with 2 degrees of freedom; (D4) standard Cauchy distribution.

For the bivariate nonlinear function $g(\cdot,\cdot)$, we consider the following designs:
\begin{itemize}
\item[(M1)]  $g(u,v)=u+v$,
\item[(M2)] $g(u,v)=\phi(u)+v$,
\item[(M3)] $g(u,v)=\Phi(u)+v$,
\item[(M4)] $g(u,v)=u+\phi(v)$,
\end{itemize}
where $\phi(\cdot)$ is the standard normal density function, $\Phi(\cdot)$ is the standard normal distribution function. Three loss functions are entertained: (L1) squared errors loss $\rho(e)=e^2$; (L2) absolute errors loss $\rho(e)=|e|$, and (L3) Huber's loss $\rho_\delta (e)=\frac{1}{2}e^2\cdot 1\{|e|\leq \delta\}+\delta\cdot (e-\frac{1}{2} \delta) \cdot 1\{|e|> \delta\}$ with $\delta=1.25$. The simulations are conducted for $n=100, 200, 400$ with 5,000 replications.

To measure the estimation accuracy of the nonlinear least-square estimates for the index parameters, we compute the bias, estimated standard deviations and mean squared errors for each element of the indices. For space limitation, we only report the mean squared errors in Tables \ref{rmse1}-\ref{rmse4} for M1-M4, respectively. Each table contains the estimation results under the three loss functions L1-L3, and four types of error distributions D1-D4.
The main findings are summarized as follows.  First, the mean squared errors are decreasing as the sample size $n$ increases, except when the errors are generated according to the Cauchy distribution and when the least squares loss is implemented. Under Cauchy errors, the least-square estimator is known to be inconsistent. However, the median estimator and the Huber's estimator remain consistent. As a result, we shall exclude this scenario when we further comment on the estimator performance below. Second, the index parameter estimate of the nonstationary variables enjoys super-rate of convergence, as shown in all the cases, as compared to the standard $\sqrt{n}$ rate for the stationary index estimator. Finally, the three estimates considered are quite competitive in terms of the different error distributions. It is found that, as expected, the least squares estimator is the most efficient estimator when the errors are drawn from normal distributions. The Huber's estimator becomes the most efficient, in general, for the mixed normal errors and $t(2)$ errors. The least absolute error estimate is the most efficient one when the errors are Cauchy.

{\samepage
{\footnotesize
\begin{sidewaystable}[htbp]
\caption{Mean Squared Error of M-estimator under M1 ($\times 10^3$). Numbers greater than $10^3$ are marked as ``*''.}
\label{rmse1}
\centering
\begin{tabular}{rcccccccccccccccc}
\toprule
&& \multicolumn{3}{c}{D1: Normal}&& \multicolumn{3}{c}{D2: Mixed normal}
&& \multicolumn{3}{c}{D3: t(2)}&& \multicolumn{3}{c}{D4: Cauchy}\\
\cline{3-5}\cline{7-9}\cline{11-13}\cline{15-17}
 &&L1&L2&L3&&L1&L2&L3&&L1&L2&L3&&L1&L2&L3\\
\midrule
$n=100$\\
$\theta_{1}$&&0.273&0.341&0.287&&0.357&0.392&0.334&&3.349&0.478&0.499&&*&0.596&0.806\\
$\theta_{2}$&&0.042&0.026&0.045&&0.052&0.029&0.051&&0.456&0.042&0.085&&*&0.046&0.132\\
$\beta_{1}$&&0.084&0.127&0.089&&0.101&0.134&0.095&&1.024&0.172&0.155&&*&0.230&0.245\\
$\beta_{2}$&&0.082&0.125&0.087&&0.105&0.136&0.100&&1.024&0.172&0.157&&*&0.224&0.242\\
$n=200$\\
$\theta_{1}$&&0.065&0.097&0.069&&0.085&0.107&0.080&&1.009&0.120&0.120&&*&0.166&0.198\\
$\theta_{2}$&&0.010&0.011&0.010&&0.013&0.012&0.012&&0.291&0.015&0.019&&*&0.020&0.031\\
$\beta_{1}$&&0.039&0.059&0.042&&0.050&0.068&0.047&&1.510&0.080&0.072&&*&0.100&0.110\\
$\beta_{2}$&&0.040&0.061&0.043&&0.051&0.066&0.049&&0.869&0.079&0.071&&*&0.101&0.110\\
$n=400$\\
$\theta_{1}$&&0.017&0.025&0.017&&0.021&0.028&0.021&&0.409&0.031&0.028&&*&0.040&0.045\\
$\theta_{2}$&&0.003&0.004&0.003&&0.003&0.004&0.003&&0.032&0.005&0.005&&*&0.006&0.007\\
$\beta_{1}$&&0.018&0.029&0.019&&0.025&0.032&0.023&&0.244&0.039&0.034&&*&0.047&0.053\\
$\beta_{2}$&&0.019&0.029&0.020&&0.025&0.034&0.024&&0.308&0.037&0.035&&*&0.049&0.053\\
\bottomrule
\end{tabular}
\end{sidewaystable}
}}

\begin{sidewaystable}[htbp]
\caption{Mean Squared Error of M-estimator under M2 ($\times 10^3$). Numbers greater than $10^3$ are marked as ``*''.}
\label{rmse2}
\centering
\begin{tabular}{rcccccccccccccccc}
\toprule
&& \multicolumn{3}{c}{D1: Normal}&& \multicolumn{3}{c}{D2: Mixed normal}
&& \multicolumn{3}{c}{D3: t(2)}&& \multicolumn{3}{c}{D4: Cauchy}\\
\cline{3-5}\cline{7-9}\cline{11-13}\cline{15-17}
 &&L1&L2&L3&&L1&L2&L3&&L1&L2&L3&&L1&L2&L3\\
\midrule
$n=100$\\
$\theta_{1}$&&1.545&1.075&1.598&&1.854&1.238&1.755&&*&1.511&2.899&&*&2.199&5.955\\
$\theta_{2}$&&0.352&0.026&0.349&&0.353&0.029&0.337&&*&0.029&0.614&&*&0.050&1.290\\
$\beta_{1}$&&0.008&0.012&0.008&&0.010&0.013&0.010&&0.089&0.016&0.015&&*&0.022&0.024\\
$\beta_{2}$&&0.008&0.012&0.008&&0.010&0.013&0.009&&0.100&0.016&0.014&&*&0.021&0.024\\
$n=200$\\
$\theta_{1}$&&0.650&0.502&0.675&&0.794&0.592&0.769&&*&0.770&1.366&&*&0.913&2.204\\
$\theta_{2}$&&0.144&0.018&0.159&&0.203&0.027&0.206&&*&0.032&0.262&&*&0.041&0.530\\
$\beta_{1}$&&0.004&0.006&0.004&&0.005&0.006&0.004&&0.035&0.007&0.007&&*&0.010&0.011\\
$\beta_{2}$&&0.004&0.006&0.004&&0.005&0.006&0.005&&0.039&0.007&0.007&&*&0.010&0.011\\
$n=400$\\
$\theta_{1}$&&0.342&0.356&0.367&&0.512&0.407&0.486&&*&0.437&0.751&&*&0.567&1.257\\
$\theta_{2}$&&0.071&0.013&0.079&&0.103&0.018&0.095&&*&0.019&0.159&&*&0.021&0.333\\
$\beta_{1}$&&0.002&0.003&0.002&&0.002&0.003&0.002&&0.020&0.003&0.003&&*&0.004&0.005\\
$\beta_{2}$&&0.002&0.003&0.002&&0.002&0.003&0.002&&0.023&0.003&0.003&&*&0.004&0.005\\
\bottomrule
\end{tabular}
\end{sidewaystable}

\begin{sidewaystable}[htbp]
\caption{Mean Squared Error of M-estimator under M3 ($\times 10^3$). Numbers greater than $10^3$ are marked as ``*''.}
\label{rmse3}
\centering
\begin{tabular}{rcccccccccccccccc}
\toprule
&& \multicolumn{3}{c}{D1: Normal}&& \multicolumn{3}{c}{D2: Mixed normal}
&& \multicolumn{3}{c}{D3: t(2)}&& \multicolumn{3}{c}{D4: Cauchy}\\
\cline{3-5}\cline{7-9}\cline{11-13}\cline{15-17}
 &&L1&L2&L3&&L1&L2&L3&&L1&L2&L3&&L1&L2&L3\\
\midrule
$n=100$\\
$\theta_{11}$&&9.219&9.533&10.15&&13.19&10.32&11.93&&*&13.01&24.69&&*&19.59&84.01\\
$\theta_{12}$&&2.058&0.202&1.881&&2.279&0.176&2.198&&*&0.244&5.514&&*&0.254&16.23\\
$\theta_{21}$&&0.077&0.116&0.082&&0.099&0.131&0.094&&0.866&0.161&0.149&&*&0.216&0.239\\
$\theta_{22}$&&0.078&0.116&0.082&&0.100&0.130&0.096&&1.094&0.163&0.150&&*&0.218&0.242\\
$n=200$\\
$\theta_{11}$&&5.331&5.673&5.777&&8.520&6.780&7.481&&*&8.102&14.89&&*&11.69&56.70\\
$\theta_{12}$&&0.859&0.260&0.896&&1.237&0.174&1.085&&*&0.193&3.578&&*&0.260&10.81\\
$\theta_{21}$&&0.035&0.054&0.037&&0.046&0.062&0.044&&0.432&0.072&0.065&&*&0.092&0.102\\
$\theta_{22}$&&0.036&0.056&0.038&&0.046&0.061&0.043&&0.516&0.069&0.065&&*&0.093&0.102\\
$n=400$\\
$\theta_{11}$&&3.896&4.364&4.307&&5.005&4.275&4.512&&*&5.642&7.141&&*&7.081&25.72\\
$\theta_{12}$&&0.641&0.181&0.666&&0.783&0.161&0.737&&*&0.204&1.093&&*&0.236&5.358\\
$\theta_{21}$&&0.016&0.025&0.018&&0.021&0.029&0.020&&0.229&0.035&0.031&&*&0.042&0.045\\
$\theta_{22}$&&0.017&0.027&0.018&&0.022&0.030&0.021&&0.246&0.034&0.031&&*&0.045&0.049\\
\bottomrule
\end{tabular}
\end{sidewaystable}

\begin{sidewaystable}[htbp]
\caption{Mean Squared Error of M-estimator under M4 ($\times 10^3$). Numbers greater than $10^3$ are marked as ``*''.}
\label{rmse4}
\centering
\begin{tabular}{rcccccccccccccccc}
\toprule
&& \multicolumn{3}{c}{D1: Normal}&& \multicolumn{3}{c}{D2: Mixed normal}
&& \multicolumn{3}{c}{D3: t(2)}&& \multicolumn{3}{c}{D4: Cauchy}\\
\cline{3-5}\cline{7-9}\cline{11-13}\cline{15-17}
 &&L1&L2&L3&&L1&L2&L3&&L1&L2&L3&&L1&L2&L3\\
\midrule
$n=100$\\
$\theta_{11}$&&0.274&0.369&0.288&&0.353&0.440&0.331&&2.860&0.531&0.537&&*&0.698&0.819\\
$\theta_{12}$&&0.041&0.044&0.044&&0.058&0.053&0.053&&0.519&0.063&0.084&&*&0.080&0.130\\
$\theta_{21}$&&31.78&49.76&34.24&&44.38&57.81&40.89&&*&66.72&64.26&&*&110.5&136.7\\
$\theta_{22}$&&11.27&17.31&12.19&&15.95&21.04&14.99&&*&23.92&23.12&&*&42.04&50.73\\
$n=200$\\
$\theta_{11}$&&0.067&0.104&0.071&&0.087&0.109&0.081&&0.814&0.130&0.120&&*&0.171&0.200\\
$\theta_{12}$&&0.011&0.016&0.012&&0.014&0.017&0.013&&0.113&0.019&0.019&&*&0.025&0.031\\
$\theta_{21}$&&14.12&22.44&15.06&&18.92&24.85&17.73&&*&31.88&28.39&&*&40.95&45.49\\
$\theta_{22}$&&5.102&8.171&5.454&&6.771&8.833&6.281&&*&10.91&9.927&&*&14.83&16.20\\
$n=400$\\
$\theta_{11}$&&0.016&0.026&0.017&&0.022&0.029&0.021&&1.091&0.033&0.030&&*&0.044&0.048\\
$\theta_{12}$&&0.003&0.004&0.003&&0.003&0.004&0.003&&0.361&0.005&0.005&&*&0.006&0.007\\
$\theta_{21}$&&6.725&10.680&7.168&&9.091&12.34&8.616&&*&14.36&13.23&&*&19.15&20.54\\
$\theta_{22}$&&2.381&3.798&2.524&&3.287&4.421&3.086&&*&5.247&4.680&&*&6.672&7.182\\
\bottomrule
\end{tabular}
\end{sidewaystable}

\section{Conclusion}\label{conclusion}

In order to cater for the practical usefulness this paper proposes a class of multiple index time series parametric models that accommodate time trend as well as both stationary and nonstationary vectors. An $M$-estimation approach is adopted where the loss functions can be, but not limited to, six popular ones, such as the squared loss, LAD, Huber's loss, quantile loss, $L_p$ and expectile loss. Meanwhile, two categories of link functions are investigated due to the different behaviour of functions of nonstationary vector variables, that is, $I$-regular and $H$-regular classes in the related literature. Accordingly, our asymptotic theory dwells on two categories of estimators and the rates of convergence are discussed under different classes of loss functions. Moreover, these models are used in the analysis of predictability where we find that our models are competitive comparing with the literature in terms of predictability. Finally, we conduct Monte Carlo simulations that reveal the satisfactory performance of our estimators proposed in finite sample situations.

\section*{Acknowledgements}

Dong would like to thank the financial support from National Natural Science Foundation of China (Grant 72073143); Gao acknowledges financial support from the Australian Research Council Discovery Grants Program under Grant Number: DP170104421; Peng acknowledges the Australian Research Council Discovery Grants Program for its financial support under Grant Number DP210100476; and Tu would like to thank support from National Natural Science Foundation of China (Grant 72073002, 12026607, 92046021), the Center for Statistical Science at Peking University, and Key Laboratory of Mathematical Economics and Quantitative Finance (Peking University), Ministry of Education.

{\footnotesize

\bibliography{123}

}

{\small

\section*{Appendix A: Preliminaries on generalized functions}

\renewcommand{\theequation}{A.\arabic{equation}} \renewcommand{\thelemma}{A.%
\arabic{lemma}} \renewcommand{\thesubsection}{A.\arabic{subsection}} %
\renewcommand{\thefigure}{A.\arabic{figure}}

\setcounter{equation}{0} \setcounter{lemma}{0} \setcounter{subsection}{0} %
\setcounter{figure}{0}

Since we shall adopt a generalized function approach in the proofs, some preliminaries about generalized functions are given in this section. In mathematic context generalized functions are called distributions or tempered distributions according to the spaces of test functions.

\medskip

\noindent {\bf Definition} (Space $D$) \ \emph{The space of all functions $\phi(x)$ defined on the real line satisfying}
\begin{itemize}
  \item \emph{$\phi(x)$ is an infinitely differentiable function defined at every point on $\mathbb{R}$. That is, $\phi^{(k)}(x)$ exists for any positive integer $k$};
  \item \emph{There is a constant $A>0$ such that $\phi(x)\equiv 0$ for $|x|>A$, or equivalently it has a compact support,}
\end{itemize}
\emph{is called Space $D$}.

Note that $D$ is a linear space over real set. A function $\phi(x)\in D$ is of $C^\infty$, also known as a test function. There exist many different types of functions in $D$, and it is noteworthy that for any continuous function $f(x)$ with compact support, there is a function $\phi(x)\in D$ such that $|f(x)-\phi(x)|< \varepsilon$ for all $x$ and any given $\varepsilon>0$. This implies the denseness of $D$ in any $C[a,b]$. See \citet[p. 3]{gelfand1964}.
\medskip

\noindent {\bf Definition} (Convergence in $D$) \ \emph{A sequence $\{\phi_m\}$ in $D$ is said to converge to a function $\phi_0$ if the following conditions are satisfied:}
\begin{itemize}
  \item \emph{All $\phi_m$ as well as $\phi_0$ vanish outside a common region};
  \item \emph{$\phi_m^{(k)}\to\phi_0^{(k)}$ uniformly over $\mathbb{R}$ as $m\to\infty$ for all $k\ge 0$}.
\end{itemize}

It can be shown that $\phi_0\in D$, and therefore $D$ is closed under this definition.

\medskip

\noindent {\bf Definition} (Distribution) \ \emph{A continuous linear functional on the space $D$ is called a distribution. The space of all distributions on $D$ is denoted by $D'$}.
\medskip

Distribution is another name of generalized function. For the definitions of continuity and linearity of a functional, please see \citet[p. 25]{kanwal1983}. The space $D'$ is called the dual space of $D$, is itself a linear space.

The set of distributions that are mostly useful are those generated by locally integrable functions. Indeed, if $f(x)$ is locally integrable on $\mathbb{R}$, it generates a distribution $f:\ \phi \mapsto \mathbb{R}$ through
\begin{equation*}
\langle f, \phi\rangle=\int f(x)\phi(x)dx, \ \ \ \forall \phi\in D.
\end{equation*}

Such defined distribution is called \emph{regular distribution}. Remarkably, it is proved in \citet[p. 27]{kanwal1983} that two continuous functions that generate the same regular distributions are identical. Moreover, if two locally integrable functions produce the same regular distributions, they are identical almost everywhere. These enable one to identify functions from the distributions they generate.

All distributions other than regular ones are called \emph{singular}. Thus, $D'$ is larger than $D$, because all $\phi\in D$ are distributions so that $D\subset D'$, and there do exist singular distributions, in particular Dirac delta $\delta(\cdot)$: $\phi\mapsto \phi(x_0)$, $\forall \, \phi\in D$, with a fixed $x_0\in \mathbb{R}$, is a singular distribution as shown in \citet[p. 4]{gelfand1964}.

\medskip

By definition, generalized functions cannot be assigned values at isolated points, while statements about a generalized function on a neighbourhood of points can be given in a well-defined way. This means that generalized functions are determined by its local property. Detailed discussions can be found in \citet[p. 5]{gelfand1964}. We should point out that it is because of this regard that we can circumvent the difficulty of non-smoothness of the loss functions in the $M$-estimation.

As is well known, not all ordinary functions are differentiable. In contradiction, generalized functions always have derivatives that are generalized functions too, and consequently have derivatives of any order. See \citet[p. 18]{gelfand1964}.

\medskip

\noindent {\bf Definition} (Derivative of distributions) \emph{Let $f\in D'$. A functional $g$ defined on $D$ given by}
\begin{equation*}
\langle g, \phi\rangle =-\langle f, \phi'\,\rangle, \ \ \ \forall\, \phi\in D
\end{equation*}
\emph{is called the derivative of $f$, denoted as $f'$ or $df/dx$}.
\medskip

The definition is motivated by integration by parts. To be an eligible member of $D'$, note that such defined $f'$ is linear and continuous. Accordingly, all generalized functions have derivatives of all orders. Moreover, the generalized derivative of a regular distribution agrees with the conventional one whenever the latter exists.
\medskip

\noindent {\bf Definition} (Convergence in $D'$) \ \emph{A sequence of distributions $f_m\in D'$, $m=1,2,\cdots$, is said to converge to a distribution $f\in D'$ if}
\begin{equation*}
\lim_{m\to\infty}\langle f_m,\phi\rangle=\langle f, \phi\rangle, \ \ \ \forall\, \phi\in D.
\end{equation*}
\emph{A set of distributions $\{f_\epsilon\}$ indexed by real $\epsilon$ is said converging to $f$ when $\epsilon\to\epsilon_0$, if for $ \forall\, \phi\in D$, $\lim_{\epsilon\to \epsilon_0} \langle f_\epsilon,\phi\rangle=\langle f, \phi\rangle$}.

\emph{A series of distributions $\sum_{m=1}^\infty f_m$ converges to a distribution $f\in D$ if the sequence of partial sum $s_M=\sum_{m=1}^M f_m$ converges to $f$ as $M\to\infty$.}

These definitions contain the convergence of ordinary functions as a special case. Indeed, suppose that all members of distribution sequence $\{f_m\}$ are regular, and $f_m(x)$ converge to $f(x)$ uniformly on any compact interval, then
\begin{equation*}
\lim_{m\to\infty}\langle f_m,\phi\rangle=\lim_{m\to\infty}\int f_m(x)\phi(x)dx=\int f(x)\phi(x)dx=\langle f, \phi\rangle, \ \ \forall\, \phi\in D,
\end{equation*}
by uniform convergence theorem.

A consequence of the definition is that if $f_m\to f$ then $f_m^{(k)}\to f^{(k)}$ for any $k>0$; if $\sum_{m=1}^\infty f_m$ converges to $f$, then the series can be differentiated term by term as many times as required. See \citet[p. 59]{kanwal1983}.

The most important sequence of distributions is a sequence of regular distributions $\{f_m\}$, so-called delta-convergent sequence, that converges to $\delta$ distribution. This may be regarded as a bridge between regular and singular distributions.

\begin{lemma}
Let $f(x)$ be a nonnegative function defined on $\mathbb{R}$ such that $\int f(x)dx=1$. Put $f_\epsilon(x)=\epsilon^{-1}f(x/\epsilon)$, $\epsilon>0$. Then $\lim_{\epsilon\to 0}f_\epsilon(x)=\delta(x)$.
\end{lemma}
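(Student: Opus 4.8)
The plan is to unwind the definition of convergence in $D'$ given just above: it suffices to show that for every test function $\phi\in D$ one has $\langle f_\epsilon,\phi\rangle\to\phi(0)=\langle\delta,\phi\rangle$ as $\epsilon\to 0$. First I would write out $\langle f_\epsilon,\phi\rangle=\int \epsilon^{-1}f(x/\epsilon)\phi(x)\,dx$ and perform the substitution $u=x/\epsilon$ (so $x=\epsilon u$, $dx=\epsilon\,du$), which turns this into $\int f(u)\phi(\epsilon u)\,du$. Here the normalization $\int f=1$ is exactly what makes the bookkeeping clean, since it also lets me write $\int f(u)\phi(\epsilon u)\,du-\phi(0)=\int f(u)\,[\phi(\epsilon u)-\phi(0)]\,du$.

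Next I would estimate this last integral by splitting the domain at $|u|\le R$ and $|u|>R$. On the tail $|u|>R$ I bound the integrand by $2\,(\sup_x|\phi(x)|)\,f(u)$ and use that $\int_{|u|>R}f(u)\,du\to 0$ as $R\to\infty$ (which holds because $\int f=1<\infty$) to make this piece smaller than $\varepsilon/2$; then I fix such an $R$. On the bulk $|u|\le R$ I use continuity of $\phi$ at $0$: for $\epsilon$ small enough we have $|\phi(\epsilon u)-\phi(0)|<\varepsilon/2$ uniformly for $|u|\le R$, so that piece is bounded by $(\varepsilon/2)\int f\le\varepsilon/2$. Combining the two pieces gives $|\langle f_\epsilon,\phi\rangle-\phi(0)|<\varepsilon$ for all sufficiently small $\epsilon$, which is precisely the claim. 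Equivalently, one may invoke the dominated convergence theorem directly: $\phi(\epsilon u)\to\phi(0)$ pointwise and $|f(u)\phi(\epsilon u)|\le(\sup_x|\phi(x)|)\,f(u)\in L^1$, so $\int f(u)\phi(\epsilon u)\,du\to\phi(0)\int f(u)\,du=\phi(0)$.

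There is no serious obstacle here; the only points requiring care are (i) the boundedness of $\phi$, which holds because $\phi\in D$ is continuous with compact support, so $(\sup_x|\phi(x)|)\,f$ is an integrable dominating function, and (ii) the role of $\int f=1$ both in normalizing the limit and in controlling the tail contribution. The conclusion is the bridge anticipated in the text just before the lemma: any rescaled probability density converges, in the distributional sense, to the Dirac delta, which is what legitimizes the regular-sequence representations of $\rho'$ and $\rho''$ used in the proofs of Theorems \ref{th1} and \ref{th2}.
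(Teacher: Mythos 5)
Your proposal is correct: the substitution $u=x/\epsilon$ together with either the bulk/tail splitting or dominated convergence (using $(\sup_x|\phi(x)|)\,f\in L^1$, which follows from $f\ge 0$ and $\int f=1$) establishes $\langle f_\epsilon,\phi\rangle\to\phi(0)=\langle\delta,\phi\rangle$ for every $\phi\in D$, which is exactly convergence in $D'$ as defined in the paper. The paper itself gives no proof, deferring to the theorem in Kanwal (1983, p.~62); your argument is the standard one behind that cited result, so it supplies the omitted details rather than taking a different route.
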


This is exactly the theorem in \citet[p. 62]{kanwal1983} for univariate functions. This result enables us to construct delta sequence. Observe that $\epsilon$ can be replaced by $1/m$ to have sequence $f_m(x)$ that convergence to $\delta(x)$ as $m\to\infty$.

{\bf Example} (Delta-convergent sequence)\

(1) $f(x)=I(|x|\le 1/2)$, $f_m(x)=mI(|mx|\le 1/2)$. Then $\lim_{m\to \infty}f_m(x)=\delta(x)$.

(2) $f(x)=\frac{1}{\pi}\frac{1}{x^2+1}$, and $f_\epsilon(x)=\frac{1}{\pi} \frac{\epsilon}{x^2+\epsilon^2}$. Then $\lim_{\epsilon\to 0}f_\epsilon(x) =\delta(x)$.

(3) Define $f_\epsilon(x)=\frac{1}{\sqrt{2\pi}\epsilon} \exp(-x^2/2\epsilon^2)$ with $\epsilon>0$. Then $\lim_{\epsilon\to 0} f_\epsilon(x)=\delta(x)$, and hence for any $g(x)$ that is continuous at $x=x_0$, we have
\begin{equation*}
\lim_{\epsilon\to 0}\int f_\epsilon(x-x_0)g(x)dx=\int \delta(x-x_0) g(x)dx =g(x_0).
\end{equation*}
This actually is the rationale behind the kernel estimation.

(4) Consider
\begin{equation*}
  f(x)=\begin{cases}
  C\exp\left(-\frac{1}{1-x^2}\right), & |x|<1,\\
  0, & |x|\ge 1,
  \end{cases}
\end{equation*}
where $C$ is such that $\int f(x)dx=1$. Then, we define
\begin{equation*}
  f_\epsilon(x)=\begin{cases}
  C\epsilon^{-1}\exp\left(-\frac{\epsilon^2}{\epsilon^2-x^2}\right), & |x|<\epsilon,\\
  0, & |x|\ge \epsilon,
  \end{cases}
\end{equation*}
and $\lim_{\epsilon\to 0}f_\epsilon(x)=\delta(x)$.
\medskip

In some cases we should consider a space that is larger than $D$ as test function space, in order to extend the compact support of test functions in $D$ to the entire real line.

\medskip

\noindent {\bf Definition} (Space $S$)\ The space $S$ of test functions of rapid decay contains all functions $\phi$ defined on $\mathbb{R}$ that satisfy
\begin{itemize}
  \item $\phi(x)$ is infinitely differentiable, i.e. $\phi(x)\in C^\infty$;
  \item $\phi(x)$, as well as its derivatives of all orders, vanishes at infinity faster than any power of $1/|x|$, i.e. for any $p,k\ge 0$, $|x^p\phi^{(k)}(x)|\le C_{pk}$ where the constant $C_{pk}$ only depends on $p$, $k$ and $\phi$.
\end{itemize}

The space $S$ is linear and clearly $D\subset S$. Accordingly $S'\subset D'$ because a continuous linear functional on $S$ is also a continuous linear functional on $D$. Similarly to $D$ and $D'$, we may define the convergence of sequence and the derivative of distributions in $S$ and $S'$. One may find these in \citet[p.17]{gelfand1964} and \citet[p.138]{kanwal1983}.

\section*{Appendix B: Lemmas}

\renewcommand{\theequation}{B.\arabic{equation}} \renewcommand{\thelemma}{B.%
\arabic{lemma}} \renewcommand{\thesubsection}{B.\arabic{subsection}} %
\renewcommand{\thefigure}{B.\arabic{figure}}

\setcounter{equation}{0} \setcounter{lemma}{0} \setcounter{subsection}{0} %
\setcounter{figure}{0}

Under Assumption A and w.l.o.g. letting $x_0=0$ a.s., similar to (A.5) and (A.6) of \citet{dgdy2017}, we have
\begin{align}\label{xt}
x_t=\sum_{i=1}^tw_i=\sum_{j=-\infty}^tB_{t,j}\eta_{j}, \ \ \text{where}\ B_{t,j}=\sum_{i=\max(1,j)}^tA_{i-j},
\end{align}
and for $t>s>0$,
\begin{align}\label{xts}
x_t=&\sum_{i=1}^tw_i=\sum_{i=s+1}^tw_i+x_s=x_{ts}+x_{ts}^*,
\end{align}
where $x_{ts}=\sum_{j=s+1}^tB_{t,j}\eta_{j}$, $x_{ts}^*=x_s+ \bar{x}_{ts}$,  and $\bar{x}_{ts}=\sum_{j=-\infty}^s \left(\sum_{i=s+1}^tA_{i-j}\right)\eta_{j}$. As a result, $x_{ts}$ is independent of $x_{ts}^*$. Denote $d_{ts}^2:= \mathbb{E}(x_{ts} x_{ts}^\top)$ and from the B-N decomposition (\citealp{phillips1992}) we have $d_{ts}^2\sim t-s$ when $t-s$ is large. The representations of \eqref{xt} and \eqref{xts}, along with the following lemma, will be used for asymptotic analysis.

\begin{lemma}\label{lemma0}
Let Assumption A hold. Let $x_{0,t}=\theta_0^\top x_t$ be unit root process and define $x_{0,ts}=\theta_0^\top x_{ts}$ for $t>s$ where $x_{ts}$ is given by \eqref{xts}; and define $d_{0,t}^2=\mathbb{E}(x_{0,t}^2)$ and $d_{0,ts}^2=\mathbb{E}(x_{0,ts}^2)$.
\begin{enumerate}[(1)]
\item For large $t$, $d_{0,t}^{-1}x_{0,t}$ have densities $f_t(u)$ which are uniformly bounded over $u\in \mathbb{R}$ and $t$. Meanwhile, the partial derivatives of $f_t(u)$ are uniformly bounded as well. Consequently, $f_t(u)$ satisfy a uniform Lipschitz condition
\begin{equation}\label{lemma01}
\sup_{u\in \mathbb{R}}|f_t(u+\triangle u)-f_t(u)|\le C \,|\triangle u|
\end{equation}
for some $C>0$ and any $\triangle u\in \mathbb{R}$.

\item For large $t-s$, $d_{0,ts}^{-1}x_{0,ts}$ have uniformly bounded densities $f_{ts}(u)$ over all $u\in \mathbb{R}$ and $(t,s)$. Additionally, $f_{ts}(u)$ have bounded partial derivatives and satisfy Lipshitz condition similar to \eqref{lemma01} as well.
\end{enumerate}
\end{lemma}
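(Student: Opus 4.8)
The plan is to prove part (2) by a Fourier--inversion argument on the characteristic function of the normalized block sum, and then to deduce part (1) from it via the independent decomposition \eqref{xts}; the single analytic input that really does the work is Assumption (A.1), i.e.\ $\int\|u\|\,|\varphi(u)|\,du<\infty$, which at one stroke controls the density and its derivative. For part (2), by \eqref{xts}, $x_{0,ts}=\theta_0^\top x_{ts}=\sum_{j=s+1}^{t}(B_{t,j}^\top\theta_0)^\top\eta_j$ is a weighted sum of the i.i.d.\ vectors $\eta_{s+1},\dots,\eta_t$, so the characteristic function of $d_{0,ts}^{-1}x_{0,ts}$ factorizes as
\begin{equation*}
\psi_{ts}(r)=\prod_{j=s+1}^{t}\varphi\!\left(\tfrac{r}{d_{0,ts}}\,B_{t,j}^\top\theta_0\right),\qquad r\in\mathbb{R}.
\end{equation*}
Everything then reduces to showing $\sup_{t-s\ge n_0}\int_{\mathbb{R}}(1+|r|)\,|\psi_{ts}(r)|\,dr=:C_0<\infty$ for some fixed $n_0$: granted this, Fourier inversion gives $f_{ts}(u)=\tfrac{1}{2\pi}\int e^{-iru}\psi_{ts}(r)\,dr$ with $\sup|f_{ts}|\le C_0/(2\pi)$, differentiation under the integral (legitimate since $\int|r|\,|\psi_{ts}(r)|\,dr<\infty$) gives $f_{ts}'(u)=\tfrac{1}{2\pi}\int(-ir)e^{-iru}\psi_{ts}(r)\,dr$ with $\sup|f_{ts}'|\le C_0/(2\pi)$, and then $|f_{ts}(u+\triangle u)-f_{ts}(u)|\le\tfrac{|\triangle u|}{2\pi}\int|r|\,|\psi_{ts}(r)|\,dr\le\tfrac{C_0}{2\pi}|\triangle u|$, the required Lipschitz bound. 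Part (1) then follows from part (2): by \eqref{xts} with $s=\lfloor t/2\rfloor$, $x_{0,t}=x_{0,ts}+x^*_{0,ts}$ with the two summands independent, so the density of $d_{0,t}^{-1}x_{0,t}$ is a convolution, whence $\|f_t\|_\infty\le\tfrac{d_{0,t}}{d_{0,ts}}\|f_{ts}\|_\infty$ and $\|f_t'\|_\infty\le\tfrac{d^2_{0,t}}{d^2_{0,ts}}\|f_{ts}'\|_\infty$, both bounded since $d^2_{0,t}\sim t\,\|A^\top\theta_0\|^2$ and $d^2_{0,ts}\sim\lceil t/2\rceil\,\|A^\top\theta_0\|^2$ with $\|A^\top\theta_0\|>0$ ($A$ of full rank, $\theta_0\ne0$); the Lipschitz bound transfers the same way.

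For the uniform bound on $\int(1+|r|)|\psi_{ts}(r)|\,dr$ I would split $\mathbb{R}=\{|r|\le\delta\}\cup\{|r|>\delta\}$. On $\{|r|\le\delta\}$ the estimate is clean: from $E\eta_0=0$, $E\eta_0\eta_0^\top=I_{d_1}$ and $E\|\eta_0\|^4<\infty$ a second--order expansion gives a constant $\delta_1>0$ with $-\log|\varphi(v)|\ge\tfrac14\|v\|^2$ for $\|v\|\le\delta_1$; because $\|B_{t,j}\|\le\sum_{i\ge0}\|A_i\|<\infty$ (using $A_0=I_d$ and $\sum_i i\|A_i\|<\infty$) and $d_{0,ts}\to\infty$, a small enough $\delta$ forces every argument $\tfrac{r}{d_{0,ts}}B_{t,j}^\top\theta_0$ into $\{\|\cdot\|\le\delta_1\}$, so, using $\sum_{j=s+1}^{t}\|B_{t,j}^\top\theta_0\|^2=\e(x_{0,ts}^2)=d_{0,ts}^2$,
\begin{equation*}
|\psi_{ts}(r)|\le\exp\!\Big(-\tfrac14\sum_{j=s+1}^{t}\big\|\tfrac{r}{d_{0,ts}}B_{t,j}^\top\theta_0\big\|^2\Big)=e^{-r^2/4},
\end{equation*}
and this part of the integral is at most $\int_{\mathbb{R}}(1+|r|)e^{-r^2/4}\,dr<\infty$, uniformly in $(t,s)$.

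The region $\{|r|>\delta\}$ is the crux and the real content of the lemma. Here the Gaussian bound is useless, and one instead exploits that for $j\ge1$ the coefficient matrices $B_{t,j}=\sum_{\ell=0}^{t-j}A_\ell$ converge to the full-rank $A$ (this is where $\sum_i i\|A_i\|<\infty$ and $A_0=I_d$ are used), so that a consecutive block of $m$ innovations --- with $m$ fixed but large enough that the associated partial sum is non-degenerate --- has a block characteristic function whose modulus is bounded, uniformly over $|r|>\delta$ and over $(t,s)$, by some $\gamma_m<1$; this is precisely the point at which Assumption (A.1) enters, because $\int\|u\|\,|\varphi(u)|\,du<\infty$ makes $|\varphi|$ integrable, hence vanishing at infinity, and (as $\eta_0$ is continuous) strictly below $1$ away from the origin, which is what upgrades a compactness argument on a bounded shell of $r$ to an estimate valid for all $|r|>\delta$. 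Grouping the $\asymp(t-s)$ blocks and using a fixed fraction of them to produce a factor $\gamma_m^{\,\asymp(t-s)}$ while keeping the rest to supply $r$--integrability after the substitution $r\mapsto r/d_{0,ts}$ yields a bound of order $(t-s)\,\gamma_m^{\,\asymp(t-s)}\to0$, the finitely many leftover boundary factors being bounded by $1$. Making this block estimate genuinely uniform --- in $(t,s)$ here, and in $t$ for part (1) --- is the main obstacle and carries essentially all of the technical work; the rest is the bookkeeping described above.
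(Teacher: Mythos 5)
First, a point of context: the paper does not actually prove this lemma; it states it and writes ``This is Lemma A.2 in Dong and Gao (2018)'', so there is no in-paper argument to match. Your architecture (factorize the characteristic function $\psi_{ts}(r)=\prod_{j=s+1}^{t}\varphi\bigl(\tfrac{r}{d_{0,ts}}B_{t,j}^\top\theta_0\bigr)$, bound $\int(1+|r|)|\psi_{ts}(r)|dr$ uniformly, invert, and get part (1) from part (2) by the independent decomposition \eqref{xts} with $s=\lfloor t/2\rfloor$; the convolution inequalities and the ratio $d_{0,t}/d_{0,ts}$ bounded are all fine) is indeed the standard route for such lemmas. However, the tail estimate as you wrote it has a genuine error: you split at a \emph{fixed} $\delta$ and then claim that a block of $m$ consecutive factors has modulus at most $\gamma_m<1$ uniformly over $|r|>\delta$ and over $(t,s)$. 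This is false in the middle range $\delta<|r|\le \epsilon\, d_{0,ts}$: there the arguments $\tfrac{r}{d_{0,ts}}B_{t,j}^\top\theta_0$ have norm $O(|r|/\sqrt{t-s})\to 0$, so every factor, and hence every fixed-size block, tends to $1$ as $t-s\to\infty$; no uniform $\gamma_m<1$ exists, and the bound $(t-s)\gamma_m^{\asymp(t-s)}$ does not cover that range. The fix is the one your own computation suggests: the requirement $\tfrac{|r|}{d_{0,ts}}\|B_{t,j}^\top\theta_0\|\le\delta_1$ holds for all $|r|\le \epsilon\, d_{0,ts}$ (not just $|r|\le\delta$), so the Gaussian bound $e^{-r^2/4}$ is valid on the whole region $\{|r|\le\epsilon d_{0,ts}\}$; the split must be made there, and the away-from-one argument reserved for $\{|r|>\epsilon d_{0,ts}\}$, where $|r|/d_{0,ts}>\epsilon$ genuinely keeps the ``good'' factors (those $j$ with $B_{t,j}$ close to the full-rank $A$, i.e.\ $t-j$ beyond a fixed threshold) uniformly bounded below $1$.

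There is a second, subtler gap even on the corrected tail $\{|r|>\epsilon d_{0,ts}\}$: after discarding $\asymp(t-s)$ factors bounded by $\gamma<1$, you still need the retained factors to make $(1+|r|)$ integrable, i.e.\ something like $\int(1+|r|)\,\bigl|\varphi\bigl(\tfrac{r}{d_{0,ts}}b\bigr)\bigr|\,dr<\infty$ with $b\approx A^\top\theta_0$. Assumption (A.1) is a $d_1$-dimensional condition, $\int_{\mathbb{R}^{d_1}}\|u\|\,|\varphi(u)|\,du<\infty$; it does not control $|\varphi|$ along a one-dimensional ray, so ``the substitution $r\mapsto r/d_{0,ts}$'' does not close this step. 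That condition is the natural tool for the full vector $x_{ts}$ (where one changes variables $v=B_{t,j}^\top u$ using invertibility of $B_{t,j}$), but here all arguments are scalar multiples of nearly the same direction, and what is needed is integrability, or at least a quantitative decay, of the section $w\mapsto\varphi(wb)$ — equivalently of the characteristic function of the scalar $b^\top\eta_0$ — which must either be argued separately or imported from the cited Dong--Gao lemma, exactly the technical content the paper outsources. So: right strategy, correct small-$r$ and convolution steps, but the tail bound as stated is both misplaced (wrong split point) and incomplete (one-dimensional integrability not delivered by (A.1)).
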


This is Lemma A.2 in \citet{donggao2018}.

Let $\theta$ be a unit vector in $\mathbb{R}^{d_1}$ such that the sequence $\theta^\top x_t$ is of unit root, and $R=(\theta, R_2)$ be an orthogonal matrix of dimension $d_1\times d_1$. Denote, if no ambiguity arises, $R^\top x_t=(x_{1t}, x_{2t}^\top)^\top$ conformably with block representation of $R$. It is clear that the covariance matrix of $R^\top x_t$ has asymptotics $R^\top A A^\top R\, t (1+o(1))$. We suppress $\theta$ from $x_{1t}$ and $x_{2t}$ for simplicity.

\begin{lemma}\label{lemma1}
Let Assumption A hold.
\begin{enumerate}[(1)]
\item For large $t$, $\sqrt{t}^{\;-1}(x_{1t},x_{2t}^\top)^\top$ have densities $\psi_{t}(x, w)$ which are uniformly bounded over $x\in \mathbb{R}, w\in \mathbb{R}^{d_1-1}$ and $t$. Meanwhile, the derivatives of $\psi_{t}(x,w)$ are uniformly bounded as well. Consequently, $\psi_{t}(x,w)$ satisfy a uniform Lipschitz condition
\begin{equation}\label{lemma11}
\sup_{x, w}|\psi_{t}(x+\triangle x, w+\triangle w)-\psi_{t}(x, w)|\le C( |\triangle x|+\|\triangle w\|)
\end{equation}
for some $C>0$ and any $\triangle x$ and $\triangle w$.

Meanwhile, for large $t$, $\psi_{t}(x,w)=f_{1t}(x)\phi(w)(1+o(1))$ where $f_{1t}(x)$ is the marginal density of $\sqrt{t}^{\;-1}x_{1t}$ and $\phi(w)$ is the standard normal density.

\item For large $t-s$, $d_{\theta,ts}^{-1}x_{\theta,ts}$ have uniformly bounded densities $f_{\theta,ts}(x)$ over all $x\in \mathbb{R}$ and $t>s$, where $x_{\theta,ts}=\theta^\top x_{ts}$ and $d_{\theta,ts}^{2}=\mathbb{E}x_{\theta,ts}^2$; additionally, $f_{\theta,ts}(x)$ have bounded partial derivatives and satisfy Lipshitz condition similar to \eqref{lemma11} as well.
\end{enumerate}

\end{lemma}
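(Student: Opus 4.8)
The plan is to prove both parts by the Fourier-inversion (local limit theorem) technique that underlies Lemma~\ref{lemma0}, i.e. Lemma~A.2 of \citet{donggao2018}, now carried out in the rotated coordinates. Using the moving-average representation \eqref{xt}, $x_t=\sum_{j\le t}B_{t,j}\eta_j$ with $B_{t,j}=\sum_{i=\max(1,j)}^{t}A_{i-j}$, and the i.i.d.\ property of the $\eta_j$, the characteristic function of $\sqrt t^{\,-1}R^\top x_t$ at $(x,w^\top)^\top\in\mathbb R\times\mathbb R^{d_1-1}$ is
\begin{equation*}
\widehat\psi_t(x,w)=\prod_{j\le t}\varphi\Big(\tfrac{1}{\sqrt t}B_{t,j}^\top R\,(x,w^\top)^\top\Big),
\end{equation*}
and, for part (2), the characteristic function of $d_{\theta,ts}^{-1}x_{\theta,ts}$ at $u\in\mathbb R$ is $\prod_{j=s+1}^{t}\varphi\big(u\,d_{\theta,ts}^{-1}B_{t,j}^\top\theta\big)$, with $t-s$ playing the role of the sample size. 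Once these characteristic functions are shown to be integrable uniformly in $t$ (resp.\ in $t>s$), the densities $\psi_t$ and $f_{\theta,ts}$ exist, and all their stated properties are read off from the inversion integral.

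The first step is uniform integrability of $\widehat\psi_t$. By Assumption~A.2, $\sum_j j\|A_j\|<\infty$ forces $B_{t,j}=\sum_{k=0}^{t-j}A_k$ to lie within any prescribed neighbourhood of the full-rank matrix $A=\sum_jA_j$ for all but finitely many indices $j\le t$; thus a fraction $ct$ of the factors behave like $\varphi_{A\eta_0}$, the characteristic function of the non-degenerate vector $A\eta_0$ (non-degenerate since $A$ has full rank and $\eta_0$ is continuous, with covariance $AA^\top$ because $\e\eta_0\eta_0^\top=I_{d_1}$). I would then split the inversion integral: on a neighbourhood of the origin the local expansion $|\varphi_{A\eta_0}(u/\sqrt t)|^{ct}\le\exp(-c'\,u^\top AA^\top u)$ gives a Gaussian, hence integrable, majorant uniform in $t$; on the complement one extracts a single invertible factor $|\varphi(B_{t,j_0}^\top u/\sqrt t)|$, changes variables to recognise $\int|\varphi|<\infty$, and bounds the remaining $ct-1$ factors by $\kappa^{\,ct-1}$ with $\kappa:=\sup_{\|v\|\ge\varepsilon}|\varphi(v)|<1$ (valid since $\eta_0$ continuous gives $|\varphi(v)|<1$ off the origin while $|\varphi(v)|\to0$ at infinity by A.1), which decays geometrically in $t$ and dominates the polynomial Jacobian. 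This yields $\int|\widehat\psi_t|\le C<\infty$ uniformly in $t$, and by Fourier inversion $\sup_{x,w}|\psi_t(x,w)|\le C/(2\pi)^{d_1}$; the same computation with $\prod_{j=s+1}^t$ gives part (2)'s uniform bound for large $t-s$.

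For the derivative bounds, Assumption~A.1 supplies the reinforced integrability $\int\|u\|\,|\varphi(u)|\,du<\infty$, which legitimates one differentiation under the integral sign in the inversion formula; the resulting integrals for $\partial_x\psi_t$ and $\partial_{w_k}\psi_t$ are dominated, uniformly in $t$, by the same kind of majorant times $\|u\|$, hence are uniformly bounded, and the uniform Lipschitz bound \eqref{lemma11} follows from the mean value theorem (likewise in part (2)). These two steps are the routine part and run parallel to the univariate argument behind Lemma~\ref{lemma0}.

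The delicate step — and the one I expect to be the main obstacle — is the asymptotic factorisation $\psi_t(x,w)=f_{1t}(x)\phi(w)(1+o(1))$. The idea is to refine the above into a local-limit statement: in $\widehat\psi_t$, isolate the contribution of the transverse block $x_{2t}=R_2^\top x_t$, which after $\sqrt t$-normalisation obeys a central limit theorem, so its characteristic-function factor converges to the Gaussian one whose inverse is $\phi(w)$; show that the cross contribution coupling the direction $\theta$ to the directions $R_2$ is asymptotically negligible; and then use the uniform integrability from the previous steps to upgrade convergence of characteristic functions to convergence of densities, keeping the unit-root marginal $f_{1t}$ intact (it is not replaced by a limit, since $\sqrt t^{\,-1}x_{1t}$ is not an $I(0)$ quantity). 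The real work is to make the remainder uniform in $(x,w)$ rather than merely pointwise and to push the negligibility of the cross term through the inversion integral; here the weighted integrability in A.1 and the summability $\sum_j j\|A_j\|<\infty$ in A.2 are exactly what is needed, as in the proof of Lemma~A.2 of \citet{donggao2018}. Finally, part (2) needs no new idea: it is the scalar version of part (1) applied to the block $x_{ts}=\sum_{j=s+1}^{t}B_{t,j}\eta_j$ (independent of $x_{ts}^*$ by \eqref{xts}) in place of $x_t$, with $t-s$ as the effective sample size and the general unit vector $\theta$ in place of $\theta_0$.
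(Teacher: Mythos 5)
Your first three steps (existence of the densities via Fourier inversion, the split of the inversion integral into a Gaussian-majorant region near the origin and a geometrically small tail region using $\kappa=\sup_{\|v\|\ge\varepsilon}|\varphi(v)|<1$, the derivative bounds from $\int\|u\|\,|\varphi(u)|\,du<\infty$, and the Lipschitz bound by the mean value theorem) are sound and are exactly the local-limit-theorem route that the paper itself relies on: the paper offers no separate proof of this lemma and simply treats it as the rotated, multivariate analogue of Lemma A.2 of Dong and Gao (2018). The reduction of part (2) to the scalar case applied to $x_{ts}=\sum_{j=s+1}^{t}B_{t,j}\eta_j$ with a general unit vector $\theta$ is likewise fine.

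The genuine gap is the factorisation $\psi_t(x,w)=f_{1t}(x)\phi(w)(1+o(1))$, which you correctly identify as the delicate step but then only describe rather than prove, and the route you indicate would fail as stated. Under Assumptions A.1--A.2 alone, $\sqrt{t}^{\,-1}(x_{1t},x_{2t}^\top)^\top\to_D N\bigl(0,\,R^\top AA^\top R\bigr)$, and there is no reason for the off-diagonal block $\theta^\top AA^\top R_2$ to vanish, nor for $R_2^\top AA^\top R_2$ to equal $I_{d_1-1}$; so the joint density approaches a generally \emph{correlated} normal density, while the right-hand side approaches the product of the $N(0,\theta^\top AA^\top\theta)$ marginal with the standard normal density in $w$. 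Consequently the assertion that ``the cross contribution coupling the direction $\theta$ to the directions $R_2$ is asymptotically negligible'' is not something uniform integrability of the characteristic functions can deliver --- it is false for generic full-rank $A$, and no amount of care with the remainder in the inversion integral will produce it. To close this step you would have to do something structurally different: either decompose $x_{2t}$ into its projection on $x_{1t}$ plus an independent component and accept that the second factor is then a conditional (not marginal, not standard) normal density with the induced covariance, or standardise the transverse block by its own asymptotic covariance and explicitly argue why the cross-covariance can be ignored in the particular way the lemma is used downstream (e.g., in the two-scale regime where $\dot g_1$ is integrable and localises $x_{1t}$ at scale $O(1)$), or impose/extract an additional condition on $A$ making $R^\top AA^\top R$ block diagonal with identity lower block. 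As written, the proposal establishes the boundedness, smoothness and Lipschitz claims but does not establish the product-form approximation, which is the part of the lemma the paper actually needs for the cross terms in Lemma B.4.
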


\begin{lemma}\label{lemma5a}
The following assertions hold:
\begin{enumerate}[(1)]
  \item $\frac{1}{\sqrt{t}}(x_{1t}, x_{2t}^\top)$ has a joint probability density $\psi_t(x, w^\top)$; and given $\mathcal{F}_{s}$ (defined in Assumption A), $\frac{1}{ \sqrt{t-s}} (x_{1t}-x_{1s}, x_{2t}^\top-x_{2s}^\top)$ has a joint density $\psi_{ts}(x, w^\top)$ where $t>s+1$. Meanwhile, these functions are bounded uniformly in $(x,w)$ as well as $t$ and $(t,s)$, respectively.

  \item For large $t$ and $t-s$, we have $\psi_t(x, w^\top)=\phi(w)f_t(x)(1+o(1))$ and $\psi_{ts}(x, w^\top)= \phi(w)f_{ts}(x) (1+o(1))$  where $\phi(w)$ is the density of a multivariate $(d-1)$-dimensional normal distribution, $f_t(x)$ is the marginal density of $\frac{1}{\sqrt{t}}x_{1t}$ and $f_{ts}(x)$ is the marginal density of $\frac{1}{\sqrt{t-s}}(x_{1t}-x_{1s})$.
\end{enumerate}
\end{lemma}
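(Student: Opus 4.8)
The plan is to deduce both assertions from the characteristic‑function / Fourier‑inversion technique already used for Lemmas \ref{lemma0} and \ref{lemma1}, together with the Beveridge--Nelson type decompositions \eqref{xt} and \eqref{xts}. The only inputs beyond those lemmas are Assumption A.1 -- which gives $\int\|u\|\,|\varphi(u)|\,du<\infty$ and hence, after inversion, that the relevant densities exist, are bounded, and are once continuously differentiable with bounded derivative -- and Assumption A.2, in particular $A_0=I_d$, $\sum_j j\|A_j\|<\infty$ and the full rank of $A=\sum_j A_j$, which is what makes the tail bounds below uniform in $t$ (and in $(t,s)$).

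For part (1), take the rotation $R=(\theta,R_2)$ and write, via \eqref{xt}, $t^{-1/2}R^\top x_t=\sum_{j\le t}t^{-1/2}R^\top B_{t,j}\eta_j$, whose characteristic function factorizes as $\prod_{j\le t}\varphi\!\left(t^{-1/2}B_{t,j}^\top Ru\right)$ by independence of the $\eta_j$. I would first dominate this product by a fixed integrable function of $u$, uniformly in $t$: since $B_{t,t}=A_0=I$ and $B_{t,j}\to A$ for $j$ in the bulk of $\{1,\dots,t\}$ (a consequence of $\sum_j j\|A_j\|<\infty$) with $A$ of full rank, one gets Gaussian decay near the origin and, after isolating a fixed positive fraction of the factors, enough decay elsewhere -- exactly the estimate underlying Lemmas \ref{lemma0} and \ref{lemma1}. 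Inversion then yields $\psi_t(x,w^\top)$, bounded uniformly in $(x,w)$ and $t$, and the $\|u\|$‑weighting gives the bounded derivatives and the Lipschitz bound of the type \eqref{lemma11}. For the conditional density I would invoke \eqref{xts}: conditionally on $\mathcal{F}_s$ one has $x_t-x_s=x_{ts}+\bar x_{ts}$ with $\bar x_{ts}$ being $\mathcal{F}_s$‑measurable and $x_{ts}=\sum_{j=s+1}^t B_{t,j}\eta_j$ independent of $\mathcal{F}_s$, with $d_{ts}^2\sim t-s$; since an $\mathcal{F}_s$‑measurable translation does not affect boundedness, the conditional density of $(t-s)^{-1/2}R^\top(x_t-x_s)$ is, up to that translation, the density of $(t-s)^{-1/2}R^\top x_{ts}$, and the preceding argument applies verbatim with $t-s$ in place of $t$, giving a bound uniform in $(t,s)$.

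For part (2), the factorization $\psi_t(x,w^\top)=\phi(w)f_t(x)(1+o(1))$ for large $t$ is Lemma \ref{lemma1}(1) transcribed into the present coordinates, and I would follow that proof: after inverting, one separates the frequency variable dual to $x$ from that dual to $w$, inserts the local limit expansion of the characteristic‑function product, and finds that the dependence on $w$ is, to leading order, Gaussian with density $\phi(w)$ while all dependence on $x$ is carried by the marginal $f_t(x)$, the relative error being $o(1)$ uniformly in $(x,w)$ by the same dominated‑inversion bound. The conditional statement $\psi_{ts}(x,w^\top)=\phi(w)f_{ts}(x)(1+o(1))$ is obtained identically after replacing $x_t$ by $x_{ts}$ through \eqref{xts} and using $d_{ts}^2\sim t-s$, so that $t$ and $t-s$ simply interchange.

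The step carrying essentially all the work is the uniformity in the tail estimate: one must bound $\prod_{j\le t}\bigl|\varphi(t^{-1/2}B_{t,j}^\top Ru)\bigr|$ (and its $(t,s)$‑analogue) by an integrable function of $u$ independent of $t$ (resp.\ $(t,s)$), and control the remainder in the local limit expansion with the same uniformity; this is precisely where $A_0=I_d$, $\sum_j j\|A_j\|<\infty$ and the full rank of $A$ are genuinely used. Everything else is bookkeeping with \eqref{xt}, \eqref{xts} and Lemmas \ref{lemma0}--\ref{lemma1}.
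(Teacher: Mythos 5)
Your proposal is correct and is essentially the route the paper itself (implicitly) relies on: the paper states this lemma without a separate proof, treating it as a companion to Lemmas \ref{lemma0} and \ref{lemma1} imported from the cited literature, and your argument—characteristic-function inversion of the product $\prod_j\varphi(t^{-1/2}B_{t,j}^\top Ru)$ built from \eqref{xt}, reduction of the conditional case to the independent block $x_{ts}$ in \eqref{xts} via an $\mathcal{F}_s$-measurable translation, and the local-limit factorization for part (2)—is exactly the machinery underlying those lemmas. No substantive difference or gap to report relative to the paper's treatment.
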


\begin{lemma}\label{lemma2}
Suppose that $\rho'(u)$ and $\rho''(u)$ exist in the ordinary sense. Under Assumptions A-C and D(1)-(2), denoting $x_{nt}=n^{-1/2}x_t$, as $n\to\infty$,
\begin{align}\label{lemma2a}
&\frac{1}{\sqrt{n}}\sum_{t=1}^n\rho'(e_t)\begin{pmatrix}
\dot{g}_1(\theta_0^\top x_{nt})x_{nt}\\
\dot{g}_2(\beta_0^\top z_t)z_t
\end{pmatrix}
\to_D
\begin{pmatrix}
\int_0^1 \dot{g}_1(\theta_0^\top B(r))]B(r) dU(r)\\
N(0, a_1\Sigma),
\end{pmatrix}
\end{align}
where $\Sigma=\int_0^1[\mathbb{E}\dot{g}_2^2 (\beta_0^\top h(r,v_1) ) \, h(r,v_1) h(r,v_1)^\top ]dr$ and $a_1>0$ given in Assumption C.

Moreover,
\begin{align}
&\frac{1}{n}\sum_{t=1}^n\rho''(e_t)g_2^2(\beta_0^\top z_t)z_tz_t^\top\to_P
a_2\Sigma, \label{lemma2b}\\ \intertext{and}
&\frac{1}{n}\sum_{t=1}^n\rho''(e_t)\begin{pmatrix}
\dot{g}_1^2(\theta_0^\top x_{nt})x_{nt}x_{nt}^\top\\
\dot{g}_1(\theta_0^\top x_{nt} )\dot{g}_2( \beta_0^\top z_t)z_tx_{nt}^\top
\end{pmatrix}\notag \\
&\to_D
\begin{pmatrix}
a_2\int_0^1 \dot{g}_1^2(\theta_0^\top B(r))B(r)B(r)^\top dr\\
a_2\int_0^1\dot{g}_1(\theta_0^\top B(r))[\mathbb{E}\dot{g}_2( \beta_0^\top h(r,v_1))h(r,v_1)]B(r)^\top dr
\end{pmatrix}. \label{lemma2c}
\end{align}
where $a_2=\mathbb{E}[\rho''(e_t)|\ra_{t-1}]>0$ a.s. stipulated in Assumption C.
\end{lemma}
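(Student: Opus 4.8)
The plan is to derive the three displays from four ingredients, used with the convention that $\rho'$ and $\rho''$ denote honest derivatives (which the statement supposes); the case of genuinely generalized $\rho',\rho''$ is recovered afterwards by approximating $\rho$ with a regular sequence whose derivatives exist pointwise and passing to the limit, as in \citet{phillips1995}. The ingredients are: (I) a multivariate martingale central limit theorem for arrays driven by $\xi_{1t}$, applied jointly with the invariance principle \eqref{bmotion}; (II) the continuous--mapping theorem for $n^{-1/2}x_{[n\cdot]}\Rightarrow B(\cdot)$, upgraded from weak convergence of the partial--sum process to convergence of the time averages $\frac1n\sum_{t=1}^n(\cdot)$ to $\int_0^1(\cdot)\,dr$ by a uniform--integrability bound resting on $\mathbb{E}\|\eta_0\|^4<\infty$ and the polynomial growth of $H$-regular functions; (III) a mixing weak law of large numbers, so that $\frac1n\sum_{t=1}^n\dot{g}_2^2(\beta_0^\top z_t)z_tz_t^\top\to_P\Sigma$ via Davydov's inequality under the mixing summability and the fourth--moment bound in (B.2); and (IV) a blocking argument reconciling the slow time scale of $x_{nt}$ with the fast mixing of $v_t$.

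For \eqref{lemma2a}, write the vector as $\frac1{\sqrt n}\sum_{t=1}^n\zeta_{nt}$ with $\zeta_{nt}=\xi_{1t}\bigl(\dot{g}_1(\theta_0^\top x_{nt})x_{nt}^\top,\ \dot{g}_2(\beta_0^\top z_t)z_t^\top\bigr)^\top$. Since $x_t,v_t$ are adapted to $\mathcal F_{t-1}$ and $\mathbb{E}[\xi_{1t}\mid\mathcal F_{t-1}]=0$, this is a square--integrable martingale difference array, and its conditional covariance is $a_1$ times $\frac1n\sum_t$ of the outer product of the predictable factors: the upper block converges by (II) to $a_1\int_0^1\dot{g}_1^2(\theta_0^\top B(r))B(r)B(r)^\top dr$, the lower block to $a_1\Sigma$ by (III), and the cross block by (IV) to $a_1\int_0^1\dot{g}_1(\theta_0^\top B(r))B(r)\,\mathbb{E}\!\left[\dot{g}_2(\beta_0^\top h(r,v_1))h(r,v_1)\right]^\top dr$. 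A conditional Lindeberg condition follows from $\max_t n^{-1/2}\|\zeta_{nt}\|\to_P0$, itself a consequence of the fourth--moment bounds in (C.2) and (B.2) and a maximal inequality. Ingredient (I), applied jointly with \eqref{bmotion}, then yields the mixed--Gaussian limit, whose first block is realized as the stochastic integral $\int_0^1\dot{g}_1(\theta_0^\top B(r))B(r)\,dU(r)$ (well defined since $U$ is orthogonal to the $\sigma$-field generated by $B$) and whose second block is $N(0,a_1\Sigma)$.

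For \eqref{lemma2b} and \eqref{lemma2c}, substitute $\rho''(e_t)=a_2+(\xi_{3t}-a_2)$ in each sum. The ``$a_2$'' pieces are $a_2$ times, respectively, $\frac1n\sum_t\dot{g}_2^2(\beta_0^\top z_t)z_tz_t^\top$, $\frac1n\sum_t\dot{g}_1^2(\theta_0^\top x_{nt})x_{nt}x_{nt}^\top$, and $\frac1n\sum_t\dot{g}_1(\theta_0^\top x_{nt})\dot{g}_2(\beta_0^\top z_t)z_tx_{nt}^\top$; the first converges by (III), the second by (II), and the third by (IV), to exactly the displayed right--hand sides. The ``$(\xi_{3t}-a_2)$'' pieces are martingale difference arrays, since the accompanying weights are $\mathcal F_{t-1}$-measurable and $\mathbb{E}[\xi_{3t}-a_2\mid\mathcal F_{t-1}]=0$; a Chebyshev bound, using that those weights are bounded in conditional $L^2$ (via the $\eta_0$- and $z_t$-moment conditions, and, for the unit--root weights, $\theta_0^\top x_{nt}=O_P(\sqrt{t/n})$ with the density bounds of Lemmas~\ref{lemma0}--\ref{lemma5a}), shows each such remainder is $O_P(n^{-1/2})=o_P(1)$. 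Collecting the two kinds of pieces gives \eqref{lemma2b} and \eqref{lemma2c}.

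The mixing/blocking ingredient (IV) is the technical heart. It asserts that for continuous $F$ of polynomial growth,
\begin{equation*}
\frac1n\sum_{t=1}^n F(x_{nt})\,\dot{g}_2(\beta_0^\top z_t)z_t\ \to_D\ \int_0^1 F(B(r))\,\mathbb{E}\!\left[\dot{g}_2(\beta_0^\top h(r,v_1))h(r,v_1)\right]dr
\end{equation*}
(together with its matrix analogues), with $F(x)=\dot{g}_1(\theta_0^\top x)x_i$ in the applications above. I would prove it by partitioning $\{1,\dots,n\}$ into consecutive blocks of length $\ell_n$ with $\ell_n\to\infty$ and $\ell_n/n\to0$: on each block replace $F(x_{nt})$ by its value at the block's left endpoint (the error is controlled by the modulus of continuity of $B$, since $x_{nt}$ moves by $O_P(\sqrt{\ell_n/n})$ across a block) and $h(t/n,\cdot)$ by $h(r_k,\cdot)$ at the block's midpoint $r_k$ (continuity in $r$, (B.1)); within a block, the average of $\dot{g}_2(\beta_0^\top h(r_k,v_t))h(r_k,v_t)$ converges in $L^1$ to $\mathbb{E}[\dot{g}_2(\beta_0^\top h(r_k,v_1))h(r_k,v_1)]$ by the $\alpha$-mixing weak law, and the residual coupling between the block--endpoint value of $F(x_{nt})$ and the within--block $v_t$'s is carried by at most $d_0$ common innovations (Assumption (B.2)), hence asymptotically negligible once $\ell_n\to\infty$; here the decomposition $x_t=x_{ts}+x_{ts}^{*}$ of \eqref{xts} and the near--independence in Lemma~\ref{lemma5a} do the bookkeeping. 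Summing over blocks produces a Riemann sum converging, by the continuous mapping theorem, to the stated integral. Reconciling the two time scales while tracking the finite--lag dependence is where the main effort lies; the martingale--CLT and Chebyshev steps are comparatively routine.
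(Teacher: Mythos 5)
Your treatment of \eqref{lemma2b} and \eqref{lemma2c} --- centering $\rho''(e_t)$ at $a_2$, the Davydov/mixing law of large numbers for the stationary block, and decoupling $x_{nt}$ from $z_t$ through the finitely many shared innovations --- is in line with the paper's proof; for the cross term the paper simply replaces $\dot{g}_2(\beta_0^\top z_t)z_t$ by its mean (the fluctuation being negligible by the same Davydov/density bounds) and cites Dong and Gao (2019, Theorem 3.2), where your blocking construction is a self-contained substitute. The genuine gap is in the first component of \eqref{lemma2a}. A martingale CLT, even "applied jointly with" the invariance principle \eqref{bmotion}, delivers a mixed-normal limit with random variance $a_1\int_0^1\dot{g}_1^2(\theta_0^\top B(r))B(r)B(r)^\top dr$; identifying the limit as the stochastic integral $\int_0^1\dot{g}_1(\theta_0^\top B(r))B(r)\,dU(r)$ is a stronger statement, and the two coincide in distribution only when $U$ is independent of $B$. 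Your parenthetical justification --- that $U$ is orthogonal to the $\sigma$-field generated by $B$ --- is not implied by the assumptions: (C.2)--(C.3) only make $\xi_{1t}$ a martingale difference and posit a joint Brownian limit, and since $\eta_{t+1}$ is $\mathcal{F}_t$-measurable, $\xi_{1t}$ can be contemporaneously correlated with the innovations driving $x_t$, so $U$ and $B$ may have nonzero cross-covariance (precisely why the paper's Remark singles out the ``without correlation between $x_t$ and $e_t$'' special cases). When $U$ and $B$ are correlated, $\int_0^1\dot{g}_1(\theta_0^\top B(r))B(r)\,dU(r)$ contains a component measurable with respect to $B$ and is not conditionally Gaussian given $B$, so it cannot be produced by a martingale-CLT/mixed-Gaussian argument.

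What that block requires is a convergence-to-stochastic-integrals result for martingale transforms of $H$-regular functions of integrated processes --- e.g.\ the Chan--Wei or Kurtz--Protter machinery, or, as the paper does, a direct appeal to Park and Phillips (2001), from which $\frac{1}{\sqrt{n}}\sum_{t=1}^n\rho'(e_t)\dot{g}_1(\theta_0^\top x_{nt})x_{nt}\to_D\int_0^1\dot{g}_1(\theta_0^\top B(r))B(r)\,dU(r)$ follows directly; joint convergence with the stationary block is then obtained from the asymptotic independence of $x_{t-d_0}$ and $z_t$, exactly as you (and the paper) argue. Replace your ingredient (I) for this component by such a theorem, or add and verify an explicit condition forcing $U\perp B$; with that repair the remainder of your argument is sound, albeit at the cost of proving from scratch (via blocking) what the paper handles by citation.
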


\begin{proof}[Proof of Lemma \ref{lemma2}]

Recall that $z_t=h(\tau_t,v_t)$, and $\tau_t=t/n$ and $v_t=q(\eta_t,\cdots, \eta_{t-d_0+1})+\tilde{v_t}$, where $\tilde{v}_t$ is independent of $\{\eta_j, j\in \mathbb{Z}\}$. Thus, $z_t$ and $x_t$ are correlated through these $\eta$'s. In view of \eqref{xt} we may write $x_t=\sum_{i=t-d_0+1}^t w_i+ \sum_{i=1}^{t-d_0}w_i\equiv w_{t,d_0}+x_{t-d_0}$. Then, because $n^{-1/2}w_{t,d_0}=o_P(1)$, $x_{t}$ and $x_{t-d_0}$ have the same asymptotic distribution. Meanwhile, $x_{t-d_0}$ and $z_t$ are mutually independent.

By the continuity of $\ddot{g}_1(\cdot)$, $\dot{g}_1(\theta_0^\top x_{nt})- \dot{g}_1(\theta_0^\top x_{n,t-d_0})=O_P(n^{-1/2})$. We therefore may replace $\dot{g}_1(\theta_0^\top x_{nt})$ by $\dot{g}_1(\theta_0^\top x_{n, t-d_0})$ in the derivation in the sequel but we avoid doing so and treat them independent for simplicity.

It follows from \citet{phillips2001} that
\begin{equation*}
\frac{1}{\sqrt{n}}\sum_{t=1}^n\rho'(e_t)\dot{g}_1 (\theta_0^\top x_{n,t}) x_{n,t}\to_D \int_0^1 \dot{g}_1(\theta_0^\top B(r))B(r) dU(r),
\end{equation*}
while since $z_t$ is a strictly stationary and $\alpha$-mixing stationary, and due to the martingale difference structure imposed in Assumption B it is evidently that $\frac{1}{\sqrt{n}}\sum_{t=1}^n\rho'(e_t)\dot{g}_2(\beta_0^\top z_t)z_t\to_D N(0,a_1\Sigma)$,
where the conditional covariance matrix converging to $\Sigma$ in probability is shown below. The joint convergence then follows by the independence between $x_{n,t-d_0}$ and $z_t$. This finishes \eqref{lemma2a}.

Now, consider the convergence in probability in \eqref{lemma2b}. Because of the martingale difference structure for $e_t$, the result will hold if we can shown $\frac{1}{n} \sum_{t=1}^n\dot{g}_2^2(\beta_0^\top z_t)z_tz_t^\top\to_P \Sigma$ as $n\to\infty$. To this end, notice that $z_t$ is $\alpha$-mixing, so that by Assumption B $\frac{1}{n} \sum_{t=1}^n[\dot{g}_2^2(\beta_0^\top z_t)z_tz_t^\top- \mathbb{E}\dot{g}_2^2(\beta_0^\top z_t)z_tz_t^\top]=o_P(1)$. Indeed,
\begin{align*}
&\mathbb{E}\left\| \frac{1}{n} \sum_{t=1}^n[\dot{g}_2^2(\beta_0^\top z_t)z_tz_t^\top- \mathbb{E}\dot{g}_2^2(\beta_0^\top z_t)z_tz_t^\top]\right\|^2 = \frac{1}{n^2}\mathbb{E}\sum_{t=1}^n\left\|\dot{g}_2^2(\beta_0^\top z_t)z_tz_t^\top- \mathbb{E}[\dot{g}_2^2(\beta_0^\top z_t)z_tz_t^\top]\right\|^2\\
&+\frac{2}{n^2}\mathbb{E}\sum_{t=2}^n\sum_{s=1}^{t-1}\text{tr}\left([ \dot{g}_2^2(\beta_0^\top z_t)z_tz_t^\top- \mathbb{E}\dot{g}_2^2(\beta_0^\top z_t)z_tz_t^\top] [ \dot{g}_2^2(\beta_0^\top z_s)z_sz_s^\top- \mathbb{E}\dot{g}_2^2(\beta_0^\top z_s) z_sz_s^\top]^\top\right)\\
\le&\frac{1}{n^2}\sum_{t=1}^n\mathbb{E}[\dot{g}_2^4(\beta_0^\top z_t)\|z_t\|^4] +\frac{2}{n^2}\sum_{t=2}^n\sum_{s=1}^{t-1}\alpha^{1/2}(t-s)
[\mathbb{E}(\|\dot{g}_2^2(\beta_0^\top z_t)\;z_tz_t^\top\|^4)]^{1/4}[\mathbb{E}(\|\dot{g}_2^2(\beta_0^\top z_s)\;z_sz_s^\top\|^4)]^{1/4}\\
=&O(n^{-1})=o(1)
\end{align*}
where we use the Davydov's inequality for $\alpha$-mixing processes with $p=2$ and $q=r=4$ \citep[p 19]{bosq1996} and the condition in Assumption B(2).

It then suffices to show that
\begin{align*}
\frac{1}{n} \sum_{t=1}^n\mathbb{E}\dot{g}_2^2(\beta_0^\top z_t)z_tz_t^\top =&\frac{1}{n} \sum_{t=1}^n\mathbb{E}\dot{g}_2^2(\beta_0^\top h(\tau_t,v_t))h(\tau_t,v_t) h(\tau_t,v_t)^\top\\
=&\sum_{t=1}^{n-1}\int_{\tau_{t}}^{\tau_{t+1}}
\mathbb{E}\dot{g}_2^2(\beta_0^\top h(r,v_t)
h(r,v_t) h(r,v_t)^\top dr+o(1)\\
=&\int_0^1\mathbb{E}\dot{g}_2^2(\beta_0^\top(h(r)+v_1))
(h(r)+v_1)) (h(r)+v_1))^\top dr+o(1) \to \Sigma.
\end{align*}

Finally we consider \eqref{lemma2c}. As argued above, it suffices to consider the convergence of
\begin{align*}
&\frac{1}{n}\begin{pmatrix}
\sum_{t=1}^n\dot{g}_1^2(\theta_0^\top x_{nt})x_{nt}^\top\\
\sum_{t=1}^n\dot{g}_1(\theta_0^\top x_{nt})(\mathbb{E}[\dot{g}_2(\beta_0^\top z_t)z_t])x_{nt}^\top
\end{pmatrix}\\
=&\frac{1}{n}\begin{pmatrix}
\sum_{t=1}^n\dot{g}_1^2(\theta_0^\top x_{nt})x_{nt}^\top\\
\sum_{t=1}^n\dot{g}_1(\theta_0^\top x_{nt})(\mathbb{E}[\dot{g}_2(\beta_0^\top
h(\tau_t,v_1))h(\tau_t,v_1)])x_{nt}^\top
\end{pmatrix}.
\end{align*}
The result then follows from Theorem 3.2 of \citet[p. 131]{dg2019}.
\end{proof}

Consider the case where $\rho'(\cdot)$ or $\rho''(\cdot)$ is a generalized function. Let $\rho'_m(\cdot)$ and $\rho''_m(\cdot)$ be regular sequences of $\rho'(\cdot)$ and $\rho''(\cdot)$, respectively. See \citet[p 918]{phillips1995}. For Dirac delta function, for example, the regular sequence is any delta-convergent sequence. Define, for $r\in [0,1]$,
\begin{equation*}
U_{mn}(r)\equiv\frac{1}{\sqrt{n}}\sum_{t=1}^{[nr]}[\rho'_m(e_t)
-\mathbb{E}\rho'_m(e_t)].
\end{equation*}
Then for each $m$, by the  functional invariant principle, $U_{mn}(r)\Rightarrow U_{m}(r)$ as $n\to\infty$. Here, $U_{m}(r)$ is a Brownian motion with variance $\text{Var}(\rho'_m(e_1))$.

\begin{lemma}\label{lemma3}
Under Assumptions A-C and D(1)-(2), for each $m$, as $n\to\infty$,
\begin{align}\label{lemma3a}
\frac{1}{\sqrt{n}}\sum_{t=1}^n[\rho'_m(e_t)-\mathbb{E}\rho'_m(e_t)]
\begin{pmatrix}
\dot{g}_1(\theta_0^\top x_{nt})x_{nt}\\
g_2(\beta_0^\top z_t)z_t
\end{pmatrix}\to_D
\begin{pmatrix}
\int_0^1 \dot{g}_1(\theta_0^\top B(r))B(r) dU_{m}(r)\\
N(0, a_{m1}\Sigma)
\end{pmatrix},
\end{align}
where $\Sigma$ is the same as in Lemma \ref{lemma2}, $a_{m1}= \text{Var}(\rho'_m(e_1))$.

In addition, as $n\to\infty$,
\begin{align}
&\frac{1}{n}\sum_{t=1}^n\rho''_m(e_t)g_2^2(\beta_0^\top z_t)z_tz_t^\top\to_P
a_{m2}\Sigma, \label{lemma3b}\\ \intertext{and}
&\frac{1}{n}\sum_{t=1}^n\rho''_m(e_t)\begin{pmatrix}
\dot{g}_1^2(\theta_0^\top x_{nt})x_{nt}x_{nt}^\top\\
\dot{g}_1(\theta_0^\top x_{nt})g_2(\beta_0^\top z_t)z_tx_{nt}^\top
\end{pmatrix}\notag \\
& \hspace{1cm}\to_D
\begin{pmatrix}
a_{m2}\int_0^1 \dot{g}_1^2(\theta_0^\top B(r))B(r)B(r)^\top dr\\
a_{m2}\int_0^1\dot{g}_1(\theta_0^\top B(r))[\mathbb{E} g_2(\beta_0^\top h(r,v_1)) h(r,v_1)]B(r)^\top dr
\end{pmatrix},
\label{lemma3c}
\end{align}
where $a_{m2}=\mathbb{E}[\rho''_m(e_t)]>0$.
\end{lemma}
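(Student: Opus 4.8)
The plan is to lean on the observation that, for every \emph{fixed} $m$, both $\rho'_m(\cdot)$ and $\rho''_m(\cdot)$ are ordinary, infinitely differentiable and bounded functions on $\mathbb{R}$, being regular (delta-convergent) approximants in the sense of Appendix A. Consequently $\rho'_m(e_t)$ and $\rho''_m(e_t)$ possess conditional moments of every order, uniformly in $t$ and $n$, and the three assertions become term-by-term copies of those in Lemma \ref{lemma2}, with the martingale difference $\xi_{1t}=\rho'(e_t)$ replaced by the centred sequence $\rho'_m(e_t)-\mathbb{E}\rho'_m(e_t)$, with $\xi_{3t}=\rho''(e_t)$ replaced by $\rho''_m(e_t)$, and with the constants $a_1,a_2$ of Assumption C replaced by $a_{m1}=\mathrm{Var}(\rho'_m(e_1))$ and $a_{m2}=\mathbb{E}\rho''_m(e_1)$. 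I would therefore reproduce the proof of Lemma \ref{lemma2} in this guise; the paragraphs below indicate how each ingredient is obtained, flagging the one step that is not a pure transcription.

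For \eqref{lemma3a} the stationary block is immediate: $\{[\rho'_m(e_t)-\mathbb{E}\rho'_m(e_t)]\dot{g}_2(\beta_0^\top z_t)z_t,\,\mathcal{F}_t\}$ is a martingale difference sequence because $z_t$ is $\mathcal{F}_{t-1}$-measurable, a martingale CLT applies, and its conditional variance stabilises at $a_{m1}\Sigma$ by the $\alpha$-mixing law of large numbers established inside the proof of Lemma \ref{lemma2}. For the nonstationary block I would first upgrade the invariance principle $U_{mn}(r)\Rightarrow U_m(r)$ recorded above to the \emph{joint} statement $\bigl(U_{mn}(r),\,n^{-1/2}x_{[nr]}\bigr)\Rightarrow(U_m(r),B(r))$: since both coordinates are normalised partial sums of martingale differences adapted to the common filtration $\mathcal{F}_t$, with a Lindeberg condition supplied by the boundedness of $\rho'_m$ and by $\mathbb{E}\|\eta_0\|^4<\infty$, this is a routine multivariate martingale-difference FCLT once the joint conditional-covariance sequence is checked to converge. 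Granted this, the $H$-regular stochastic-integral convergence of \citet{phillips2001} delivers $n^{-1/2}\sum_{t=1}^n[\rho'_m(e_t)-\mathbb{E}\rho'_m(e_t)]\dot{g}_1(\theta_0^\top x_{nt})x_{nt}\to_D\int_0^1\dot{g}_1(\theta_0^\top B(r))B(r)\,dU_m(r)$, and the two blocks combine through the asymptotic independence of $x_{t-d_0}$ and $z_t$, exactly as in Lemma \ref{lemma2}.

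For \eqref{lemma3b} and \eqref{lemma3c} I would split $\rho''_m(e_t)=[\rho''_m(e_t)-a_{m2}]+a_{m2}$. Every ``remainder'' term is of the form $[\rho''_m(e_t)-a_{m2}]\,Y_{t-1}$ with $Y_{t-1}$ an $\mathcal{F}_{t-1}$-measurable matrix ($Y_{t-1}=\dot{g}_2^2(\beta_0^\top z_t)z_tz_t^\top$, $\dot{g}_1^2(\theta_0^\top x_{nt})x_{nt}x_{nt}^\top$, or $\dot{g}_1(\theta_0^\top x_{nt})\dot{g}_2(\beta_0^\top z_t)z_tx_{nt}^\top$), hence a martingale difference array; by orthogonality and the boundedness of $\rho''_m$, $\mathbb{E}\bigl\|n^{-1}\sum_{t=1}^n[\rho''_m(e_t)-a_{m2}]Y_{t-1}\bigr\|^2\le Cn^{-2}\sum_{t=1}^n\mathbb{E}\|Y_{t-1}\|^2=o(1)$, the uniform $L^2$ bounds on $Y_{t-1}$ being exactly those furnished by Assumption B.2 (stationary factors) and by the $H$-regularity in Assumption D.1 together with the Gaussian-type moments of $n^{-1/2}x_t$ (nonstationary factors) in the proof of Lemma \ref{lemma2}. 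The surviving ``main'' terms $a_{m2}\cdot n^{-1}\sum_t(\cdots)$ then converge by the very arguments that give \eqref{lemma2b}--\eqref{lemma2c}: to $a_{m2}\Sigma$ through the $\alpha$-mixing law of large numbers for \eqref{lemma3b}, and to the displayed limits through Theorem 3.2 of \citet{dg2019} for \eqref{lemma3c}.

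I expect the only genuine obstacle, beyond bookkeeping, to be the joint invariance principle in the preceding paragraph, because Assumption C.3 is postulated for the true loss $\rho$ and not for its approximants; what makes it go through painlessly is precisely the boundedness of $\rho'_m$ for each fixed $m$, which removes any integrability scruple that the generalized function $\rho'(\cdot)$ itself might raise and lets the standard martingale-difference FCLT operate with no additional hypotheses. All remaining steps are a line-by-line transcription of the proof of Lemma \ref{lemma2}.
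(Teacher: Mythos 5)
Your proposal is correct and follows essentially the same route as the paper: the paper's entire proof of this lemma is the single remark that it is identical to the proof of Lemma \ref{lemma2}, with $\rho'_m(e_t)-\mathbb{E}\rho'_m(e_t)$, $\rho''_m(e_t)$, $a_{m1}$, $a_{m2}$ playing the roles of $\rho'(e_t)$, $\rho''(e_t)$, $a_1$, $a_2$. In fact you supply more detail than the paper does, notably in flagging and sketching the joint invariance principle for the smoothed score (since Assumption C.3 is stated for $\rho$ rather than its regularization $\rho_m$), a point the paper passes over silently.
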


\begin{proof}[Proof of Lemma \ref{lemma3}]
The proof is the same as that of Lemma \ref{lemma2}.
\end{proof}

Denote
\begin{align*}
\xi_{nt}:=\begin{pmatrix} \frac{1}{\sqrt[4]{n}}\dot{g}_1(x_{1t})x_{1t}\\
 \frac{1}{\sqrt[4]{n}^3} \dot{g}_1(x_{1t})x_{2t}\\
 \frac{1}{\sqrt{n}}\dot{g}_2( z_t^\top\beta_0) z_t
\end{pmatrix} \ \ \ \text{and}\ \ \
M_{n,t}:=\xi_{nt}\xi_{nt}^\top.
\end{align*}

\begin{lemma}\label{lemma4}
Suppose that $\rho'(u)$ and $\rho''(u)$ exist in ordinary sense. Under Assumptions A-C and D(2)-(3), as $n\to\infty$ we have jointly
\begin{align}\label{lemma4a}
 \sum_{t=1}^n\rho''(e_t)M_{n,t}\to_D a_2\,M,
\end{align}
where $a_2>0$ is given in Assumption C and $M$ is a $d\times d$ symmetric matrix
\begin{align}\label{lemma4b}
M=\begin{pmatrix}\int [u\dot{g}_1(u)]^2du L_1(1,0) & \int_0^1 B_2^\top(r)dL_1(r, 0) \int u \dot{g}_1^2(u)du & 0\\
& \int_0^1B_2(r)B_2(r)^\top dL_1(r,0)\int \dot{g}_1^2(u)du & 0\\
& & \Sigma
\end{pmatrix},
\end{align}
where $\Sigma$ is defined in Lemma \ref{lemma2}.

Meanwhile,
\begin{align}\label{lemma4c}
 \sum_{t=1}^n \rho'(e_t)\xi_{nt}
\to_D M^{1/2}\xi \ \ \ \mbox{with} \ \ \ \xi\sim N(0, a_1I_d),
\end{align}
where $a_1>0$ is defined in Assumption C.
\end{lemma}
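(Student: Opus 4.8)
The plan is to establish both displays by exploiting the block structure of $\xi_{nt}$, treating the index direction $x_{1t}=\theta_0^\top x_t$, the orthogonal directions $x_{2t}=P_1^\top x_t$, and the stationary block $\dot g_2(z_t^\top\beta_0)z_t$ separately, and then combining the $I$-regular local time theory for functionals of a unit root process with a martingale central limit theorem whose limiting variance is random. Throughout I would run the argument for the partial-sum processes in $r\in[0,1]$ so that \eqref{lemma4a} and \eqref{lemma4c} come out jointly, and I would use the $d_0$-lag splitting $x_t=w_{t,d_0}+x_{t-d_0}$ from the proof of Lemma \ref{lemma2} to decouple the nonstationary coordinates from $z_t$.

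\textbf{Step 1 (the matrix convergence \eqref{lemma4a}).} Writing $\sum_{t}\rho''(e_t)M_{n,t}$ in $3\times3$ block form according to the partition of $\xi_{nt}$, I would first replace $\rho''(e_t)$ by $a_2=\mathbb{E}[\rho''(e_t)\mid\mathcal{F}_{t-1}]$ in each block: since $x_t$ and $z_t$ are $\mathcal{F}_{t-1}$-measurable, the resulting error $\sum_t(\rho''(e_t)-a_2)(\mathrm{weight}_t)$ is a martingale whose conditional variance is controlled using $\max_t\mathbb{E}[\xi_{1t}^4\mid\mathcal{F}_{t-1}]<\infty$ (Assumption C.2) together with the occupation-density estimates of Lemmas \ref{lemma0}--\ref{lemma5a}, hence is $o_P(1)$, exactly as in the proof of Lemma \ref{lemma2}. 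What remains is $\lim\sum_t M_{n,t}$. The $(1,1)$ entry is $n^{-1/2}\sum_t[x_{1t}\dot g_1(x_{1t})]^2$ with $[u\dot g_1(u)]^2$ $I$-regular, converging to $\bigl(\int[u\dot g_1(u)]^2\,du\bigr)L_1(1,0)$ by the $I$-regular local time theory of \citet{phillips2001}; the $(1,2)$ entry, after writing $x_{2t}=\sqrt n\,\widetilde x_{2t}$, is $n^{-1/2}\sum_t\bigl(x_{1t}\dot g_1^2(x_{1t})\bigr)\widetilde x_{2t}^\top$, an ``$I$-regular in $x_{1t}$ times co-moving $\widetilde x_{2t}$'' sum, with limit $\bigl(\int u\dot g_1^2(u)\,du\bigr)\int_0^1 B_2(r)^\top dL_1(r,0)$ by Theorem 3.2 of \citet{dg2019}; the $(2,2)$ entry is handled the same way and gives $\bigl(\int\dot g_1^2(u)\,du\bigr)\int_0^1 B_2(r)B_2(r)^\top dL_1(r,0)$. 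The $(3,3)$ block is precisely \eqref{lemma2b}. Finally the cross blocks $(1,3)$ and $(2,3)$ each carry an extra factor of order $n^{-1/4}$ relative to sums of the form $n^{-1/2}\sum_t f(x_{1t})\dot g_2(z_t^\top\beta_0)z_t$ with $f$ $I$-regular (which are $O_P(1)$ after the $d_0$-lag decoupling), so they are $o_P(1)$. Assembling the blocks gives \eqref{lemma4a} with $M$ as in \eqref{lemma4b}.

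\textbf{Step 2 (the vector CLT \eqref{lemma4c}).} Since $(\xi_{1t},\mathcal{F}_t)$ with $\xi_{1t}=\rho'(e_t)$ is a martingale difference sequence and $\xi_{nt}$ is $\mathcal{F}_{t-1}$-measurable, $S_n:=\sum_t\rho'(e_t)\xi_{nt}$ is a martingale array with conditional variance $\sum_t\mathbb{E}[\xi_{1t}^2\mid\mathcal{F}_{t-1}]M_{n,t}=a_1\sum_tM_{n,t}$, which by Step 1 (with $\rho''(e_t)$ replaced by $1$) converges in distribution to $a_1M$. I would then verify a conditional Lindeberg condition from $\max_t\mathbb{E}[\xi_{1t}^4\mid\mathcal{F}_{t-1}]<\infty$ and $\max_t\|\xi_{nt}\|=o_P(1)$ (the latter from the $I$-regular scaling of the first two coordinates and the stationary moment bound on the third), and apply a martingale CLT with random limiting variance (stable convergence) to conclude $S_n\to_D M^{1/2}\xi$ with $\xi\sim N(0,a_1I_d)$, jointly with its conditional variance and hence with the limit in \eqref{lemma4a}, since both limits are generated by the same Brownian functional of $x_{1t}$ through $L_1$ and the same ergodic averages of $z_t$.

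\textbf{Main obstacle.} The delicate points are: (i) pushing the weight $\rho''(e_t)$ through the mixed quadratic forms in $(x_{1t},x_{2t})$ when only its conditional mean $a_2$ is available, so the martingale remainder must be shown negligible uniformly over the partial sums, which forces careful use of the occupation-density bounds of Lemmas \ref{lemma0}--\ref{lemma5a} with the $L^4$ control of Assumption C.2; and (ii) the limiting covariance $M$ is random, so the CLT of Step 2 must be obtained as a mixed-normal (stable) limit, requiring joint convergence of $S_n$ with its conditional variance rather than a marginal statement. The remaining work --- the $d_0$-lag decoupling of $x_t$ from $z_t$ and keeping all pieces jointly convergent --- mirrors the bookkeeping already carried out in Lemma \ref{lemma2} and in \citet{donggao2018} and \citet{dgd2016}.
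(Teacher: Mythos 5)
Your proposal follows essentially the same route as the paper's proof: replace $\rho''(e_t)$ by $a_2$ via the martingale-difference structure, obtain the $(1,1)$, $(1,2)$, $(2,2)$ blocks from the $I$-regular/local-time limit theory for functionals of the unit root coordinates, take the $(3,3)$ block from Lemma \ref{lemma2}, kill the cross blocks by the $d_0$-lag decoupling plus density/moment bounds, and then prove \eqref{lemma4c} by a martingale CLT with the conditional Lindeberg condition checked through the fourth-moment bound and integrability of $u^4\dot g_1^4(u)$. The only differences are cosmetic (the paper cites \citet{phillips2000} for the nonstationary block limits and normalizes by $M_n^{-1/2}$ rather than phrasing the result as a stable/mixed-normal limit), so the argument is correct and matches the paper's.
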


\begin{proof}[Proof of Lemma \ref{lemma4}]
We first show \eqref{lemma4a}. The joint convergence is based on the assumption of joint convergence for the underlying process in Assumption C.

Notice that $\sum_{t=1}^n\rho''(e_t)M_{n,t}=a_2\sum_{t=1}^nM_{n,t}+
\sum_{t=1}^n[\rho''(e_t)-a_2]M_{n,t}=a_2\sum_{t=1}^nM_{n,t}+o_P(1)$ by the martingale difference sequence structure in Assumption C. What is needed to show \eqref{lemma4a} is to establish the limit of $\sum_{t=1}^nM_{n,t}$.

Observe that as $n\to\infty$,
\begin{align*}
\frac{1}{\sqrt{n}}\sum_{t=1}^n[\dot{g}_1(x_{1t})x_{1t}]^2 &\to_D \int [u\dot{g}_1(u)]^2du\, L_1(1,0), \\
\frac{1}{n}\sum_{t=1}^n \dot{g}_1(x_{1t})^2x_{1t}x_{2t} &\to_D \int_0^1 B_2(r)dL_1(r, 0) \int u \dot{g}_1^2(u)du,\\
\frac{1}{n\sqrt{n}}\sum_{t=1}^n \dot{g}_1^2(x_{1t})x_{2t}x_{2t}^\top &\to_D \int_0^1B_2(r)B_2(r)^\top dL_1(r,0)\int \dot{g}_1^2(u)du
\end{align*}
were shown in \citet{phillips2000}, whereas $\frac{1}{n}\sum_{t=1}^n \dot{g}_2^2( z_t^\top\beta_0) z_tz_t^\top\to_P\Sigma$ shown in Lemma \ref{lemma2}. We now turn to the convergence of
\begin{equation*}
(1)\ \ \frac{1}{\sqrt[4]{n}^3}\sum_{t=1}^n \dot{g}_1(x_{1t})x_{1t}\dot{g}_2( z_t^\top\beta_0) z_t=o_P(1), \ \ \text{and}\ \ (2)\ \ \frac{1}{n\sqrt[4]{n}} \sum_{t=1}^n \dot{g}_1^2(x_{1t})x_{2t}\dot{g}_2( z_t^\top\beta_0)z_t^\top=o_P(1).
\end{equation*}

(1) Write
\begin{align*}
 &\frac{1}{\sqrt[4]{n}^3}\sum_{t=1}^n \dot{g}_1(x_{1t})x_{1t}\dot{g}_2( z_t^\top\beta_0) z_t\\
 =&\frac{1}{\sqrt[4]{n}^3}\sum_{t=1}^n \dot{g}_1(x_{1t})x_{1t}\mathbb{E}\dot{g}_2( z_t^\top\beta_0) z_t+ \frac{1}{\sqrt[4]{n}^3}\sum_{t=1}^n \dot{g}_1(x_{1t})x_{1t}\left[\dot{g}_2( z_t^\top\beta_0) z_t- \mathbb{E}\dot{g}_2( z_t^\top\beta_0) z_t\right]\\
=&A_{1n}+A_{2n}, \ \ \ \text{say}.
\end{align*}

First, $A_{2n}=o_P(1)$. Indeed, as argued in the proof of Lemma \ref{lemma2}, $x_t$ and $z_t$ can be regarded as mutually independent since they only share a finite number of $\eta$'s in the assumption, and we thus have
\begin{align*}
&\mathbb{E}\|A_{2n}\|^2=\frac{1}{\sqrt{n}^3}\mathbb{E}\sum_{t=1}^n \dot{g}_1^2(x_{1t})x_{1t}^2\left\|\dot{g}_2( z_t^\top\beta_0) z_t- \mathbb{E}\dot{g}_2( z_t^\top\beta_0) z_t\right\|^2  \\
   &+\frac{2}{\sqrt{n}^3}\mathbb{E}\sum_{t=2}^n \sum_{s=1}^{t-1} \dot{g}_1(x_{1t})x_{1t}\dot{g}_1(x_{1s})x_{1s}\\
   &\times \left[\dot{g}_2( z_t^\top\beta_0) z_t- \mathbb{E}\dot{g}_2( z_t^\top\beta_0) z_t\right]'\left[\dot{g}_2( z_s^\top\beta_0) z_s- \mathbb{E}\dot{g}_2( z_s^\top\beta_0) z_s\right]\\
\le&\frac{1}{\sqrt{n}^3}\sum_{t=1}^n \mathbb{E}[ \dot{g}_1^2(x_{1t})x_{1t}^2] \mathbb{E}\left\|\dot{g}_2( z_t^\top\beta_0) z_t- \mathbb{E}\dot{g}_2( z_t^\top\beta_0) z_t\right\|^2  \\
&+\frac{2}{\sqrt{n}^3}\sum_{t=2}^n \sum_{s=1}^{t-1} \mathbb{E}|\dot{g}_1(x_{1t})x_{1t}\dot{g}_1(x_{1s})x_{1s}| \alpha^{1/2}(t-s) \\
&\times \mathbb{E}\left\|\dot{g}_2( z_t^\top\beta_0) z_t- \mathbb{E}\dot{g}_2( z_t^\top\beta_0) z_t\right\|^4\mathbb{E}\left\|\dot{g}_2( z_s^\top\beta_0) z_s- \mathbb{E}\dot{g}_2( z_s^\top\beta_0) z_s\right\|^4 = O(n^{-1/2}),
\end{align*}
since $\sum_k\alpha^{1/2}(k)<\infty$, $\mathbb{E}\left\|\dot{g}_2( z_t^\top\beta_0) z_t- \mathbb{E}\dot{g}_2( z_t^\top\beta_0) z_t\right\|^4$ is bounded, $x_{1t}$ is unit root, $\dot{g}_1^2(u)u^2$ is integrable and $(x_{1t}, x_{1s})$ for $t-s$ large has joint density given $\ra_s$ that satisfies Lemma \ref{lemma0}. The detailed calculation is referred to Lemma A.3 in \citet{donggao2018}.

Second, $A_{1n}=o_P(1)$. This is shown by invoking the density of $d_{1t}^{-1}x_{1t}$ in Lemma \ref{lemma0} and the integrability of $\dot{g}_1^2(u)u^2$ as well as the boundedness of $\mathbb{E}\left\|\dot{g}_2( z_t^\top\beta_0) z_t- \mathbb{E}\dot{g}_2( z_t^\top\beta_0) z_t\right\|^2$. This finishes the proof of (1).

The assertion (2) can be shown similarly but one has to invoke Lemma \ref{lemma1} where the joint densities of $(x_{1t},x_{2t})$ and their properties have established. We omit the details due to the similarity.

Now we consider \eqref{lemma4c}. It is a martingale sequence and we thus invoke the martingale CLT, corollary 3.1 of \citet[p. 58]{peterhall1980}. Under Assumption C, we may show
\begin{equation*}
M_n^{-1/2}\sum_{t=1}^n \rho'(e_t)\xi_{nt}
\to_D \xi,\ \ \ \text{where}\ \xi_{nt}:=\begin{pmatrix} \frac{1}{\sqrt[4]{n}}\dot{g}_1(x_{1t})x_{1t}\\
 \frac{1}{\sqrt[4]{n}^3} \dot{g}_1(x_{1t})x_{2t}\\
 \frac{1}{\sqrt{n}}\dot{g}_2( z_t^\top\beta_0) z_t
\end{pmatrix} \ \ \mbox{and} \ \ M_n:=\sum_{t=1}^nM_{n,t}.
\end{equation*}
In fact, since the conditional variance is exactly $a_1I_d$, what we need to show is the conditional Lindeberg condition: for any nonzero vector $\lambda\in \mathbb{R}^d$,
\begin{equation*}
\sum_{t=1}^n\mathbb{E}[ \rho'(e_t)^2(\lambda^\top M_n^{-1/2}\xi_{nt})^2I(|\rho'(e_t)\lambda^\top M_n^{-1/2}\xi_{nt}|>\varepsilon)|\ra_{t-1}]\to_P0,
\end{equation*}
for any $\varepsilon>0$. This is true because $\max_t\mathbb{E}[ \rho'(e_t)^4|\ra_{t-1}]<\infty$, and $M_n$ converges to a positive definite matrix so that
\begin{align*}
 &\sum_{t=1}^n(\lambda^\top M_n^{-1/2}\xi_{nt})^4\le C\sum_{t=1}^n\|\lambda\|^4\|\xi_{nt}\|^4\\
 \le&C\frac{1}{n}\sum_{t=1}^n\dot{g}_1^4(x_{1t})x_{1t}^4+
 C\frac{1}{n^3}\sum_{t=1}^n\dot{g}_1(x_{1t})^4\|x_{2t}\|^4+C
 \frac{1}{n^2}\sum_{t=1}^n\dot{g}_2^4( z_t^\top\beta_0) \|z_t\|^4  \overset{P}{\to}\, 0
\end{align*}
due to \citet{phillips2000} and the integrability of $u^4\dot{g}^4_1(u)$ for the first two terms and the LLN for the last one.
\end{proof}

\begin{lemma}\label{lemma5}
Suppose that $\rho'(u)$ and/or $\rho''(u)$ exists in generalized sense. Under Assumptions A-C and D(2)-(3), as $n\to\infty$ we have jointly for each $m$ large,
\begin{align}\label{lemma5aa}
 \sum_{t=1}^n\rho''_m(e_t)M_{n,t}\to_D a_{2m}\,M,
\end{align}
where $a_{2m}=\mathbb{E}\rho''_m(e_t)>0$ and $M$ is the same as in Lemma \ref{lemma4}.

Meanwhile,
\begin{align}\label{lemma5c}
 \sum_{t=1}^n [\rho'_m(e_t)-\mathbb{E}\rho'_m(e_t)]\xi_{nt}
\to_D M^{1/2}\xi \ \ \mbox{with} \ \ \xi\sim N(0, a_{1m}I_d),
\end{align}
where $a_{1m}=\text{Var}[\rho'_m(e_t)]$.
\end{lemma}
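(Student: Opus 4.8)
The plan is to deduce both displays directly from the ordinary--sense Lemma \ref{lemma4}. The key observation is that every member $\rho'_m(\cdot)$, $\rho''_m(\cdot)$ of the regular sequences is an ordinary, infinitely differentiable function, so it satisfies every hypothesis imposed on $\rho'(\cdot)$, $\rho''(\cdot)$ in Lemma \ref{lemma4}; hence the argument there applies verbatim once $\rho'$, $\rho''$, $a_1$, $a_2$ are replaced by $\rho'_m$, $\rho''_m$, $a_{1m}=\mathrm{Var}(\rho'_m(e_1))$, $a_{2m}=\mathbb{E}\rho''_m(e_1)$. The first step I would carry out is to check that, for each fixed $m$, the centred variables $\rho'_m(e_t)-\mathbb{E}\rho'_m(e_t)$ and $\rho''_m(e_t)-a_{2m}$ retain the martingale--difference structure of Assumption C. By the definition of the conditional expectation of a generalized function against the conditional density recalled after Assumption C, $\mathbb{E}[\rho'_m(e_t)\mid\mathcal{F}_{t-1}]=\int\rho'_m(u)f_{t,e}(u)\,du$ and $\mathbb{E}[\rho''_m(e_t)\mid\mathcal{F}_{t-1}]=\int\rho''_m(u)f_{t,e}(u)\,du$, which under the conditional--law homogeneity built into Assumption C equal $\mathbb{E}\rho'_m(e_t)$ and $a_{2m}$ respectively; thus $(\rho'_m(e_t)-\mathbb{E}\rho'_m(e_t),\mathcal{F}_t)$ is a martingale difference with constant conditional second moment $a_{1m}$, and $(\rho''_m(e_t)-a_{2m})M_{n,t}$ is a matrix--valued martingale difference since $M_{n,t}$ is $\mathcal{F}_{t-1}$--measurable.

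For \eqref{lemma5aa} I would split
\[
\sum_{t=1}^n\rho''_m(e_t)M_{n,t}=a_{2m}\sum_{t=1}^nM_{n,t}+\sum_{t=1}^n\bigl[\rho''_m(e_t)-a_{2m}\bigr]M_{n,t}.
\]
The limit of $\sum_{t=1}^nM_{n,t}$ is a statement about the regressors only, so it is identical to the one established inside the proof of Lemma \ref{lemma4}: its diagonal blocks converge, by the local--time theory of \citet{phillips2000}, by Lemma \ref{lemma2}, and by the uniform density bounds of Lemmas \ref{lemma0}--\ref{lemma1}, to $\int[u\dot{g}_1(u)]^2du\,L_1(1,0)$, $\int_0^1B_2(r)\,dL_1(r,0)\int u\dot{g}_1^2(u)du$, $\int_0^1B_2(r)B_2(r)^\top dL_1(r,0)\int\dot{g}_1^2(u)du$ and $\Sigma$, while its off--diagonal blocks are $o_P(1)$ by the same mixing/second--moment estimates used there, giving $\sum_{t=1}^nM_{n,t}\to_D M$. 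For the remainder, the martingale--difference property together with the near--independence of $e_t$ and $\mathcal{F}_{t-1}$ gives $\mathbb{E}\bigl\|\sum_{t=1}^n[\rho''_m(e_t)-a_{2m}]M_{n,t}\bigr\|^2\le\mathrm{Var}(\rho''_m(e_1))\sum_{t=1}^n\mathbb{E}\|M_{n,t}\|^2$, and $\sum_{t=1}^n\mathbb{E}\|M_{n,t}\|^2=\sum_{t=1}^n\mathbb{E}\|\xi_{nt}\|^4=o(1)$ by exactly the estimate carried out in the Lindeberg step of Lemma \ref{lemma4} (the uniform density bounds of Lemma \ref{lemma1} together with the $I$--regularity, e.g.\ integrability of $u^4\dot{g}_1^4(u)$, in Assumption D); since $\mathrm{Var}(\rho''_m(e_1))$ is a finite constant for fixed $m$, the remainder is $o_P(1)$ and \eqref{lemma5aa} follows. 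For \eqref{lemma5c} I would apply the martingale central limit theorem (Corollary 3.1 of \citet{peterhall1980}) to the array $\{[\rho'_m(e_t)-\mathbb{E}\rho'_m(e_t)]\xi_{nt},\mathcal{F}_t\}$, whose conditional covariance is $a_{1m}\sum_{t=1}^nM_{n,t}\to_D a_{1m}M$ and whose conditional Lindeberg condition holds because $\max_t\mathbb{E}[\rho'_m(e_t)^4\mid\mathcal{F}_{t-1}]<\infty$ for fixed $m$ and $\sum_{t=1}^n\|\xi_{nt}\|^4=o_P(1)$; this gives stable convergence of $M_n^{-1/2}\sum_{t=1}^n[\rho'_m(e_t)-\mathbb{E}\rho'_m(e_t)]\xi_{nt}$ to $N(0,a_{1m}I_d)$ jointly with $M_n\to_D M$, and premultiplying by $M_n^{1/2}$ yields \eqref{lemma5c}.

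I expect the main obstacle to be the same one that arises in Lemma \ref{lemma4}: the \emph{joint}, i.e.\ stable, convergence of the martingale sum together with its random norming $M_n$, which cannot be avoided here because the limit matrix $M$ contains the local time $L_1$ and is genuinely random; this is exactly why the stable form of the martingale CLT and the joint--convergence hypothesis of Assumption C are needed. A secondary, generalized--function--specific point is to confirm that the conditional--moment identities of Assumption C really do transfer to the mollified $\rho'_m$, $\rho''_m$ (in particular that $\mathbb{E}[\rho'_m(e_t)\mid\mathcal{F}_{t-1}]$ is free of $t$), for which one invokes the representation of the conditional expectation of a generalized function against the smooth, rapidly decaying conditional density from Appendix A; once these are in place the proof is a transcription of that of Lemma \ref{lemma4} with the indicated replacements.
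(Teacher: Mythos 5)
Your proposal is correct and follows essentially the same route as the paper: the paper's own proof of this lemma is literally the one-line remark that the argument of Lemma \ref{lemma4} applies unchanged once $\rho'$, $\rho''$, $a_1$, $a_2$ are replaced by the smooth mollified versions $\rho'_m$, $\rho''_m$, $a_{1m}$, $a_{2m}$, which is exactly what you do. Your additional verification that the centred $\rho'_m(e_t)$ and $\rho''_m(e_t)-a_{2m}$ inherit the martingale-difference structure of Assumption C is a detail the paper leaves implicit, not a departure from its argument.
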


\begin{proof}[Proof of Lemma \ref{lemma5}]
The proof is the same as that of Lemma \ref{lemma4}, so is omitted.
\end{proof}

\section*{Appendix C: Proofs of the main results}

{\bf Proof of Theorem \ref{th1}}\ \ Letting $p_n, q_n\to\infty$ with $n$ (determined later) and $\alpha=p_n(\theta-\theta_0)$, $\gamma=q_n(\beta-\beta_0)$, we write
\begin{align*}
L_n(\theta,\beta)=&\sum_{t=1}^n [\rho(y_t-g(x_{t}^\top\theta, z_t^\top\beta))-\rho(y_t-g(x_{t}^\top\theta_0, z_t^\top\beta_0))]\\
=&\sum_{t=1}^n [\rho(e_t+g(x_{t}^\top\theta_0, z_t^\top\beta_0)- g(x_{t}^\top\theta, z_t^\top\beta))-\rho(e_t)]\\
=&\sum_{t=1}^n [\rho(e_t+g(x_{t}^\top\theta_0, z_t^\top\beta_0) -g(x_{t}^\top\theta_0+x_t^\top p_n^{-1}\alpha, z_t^\top\beta_0+z_t^\top q_n^{-1}\gamma))-\rho(e_t)]\\
\equiv&Q_n(\alpha, \gamma).
\end{align*}
Hence, if $(\widehat\alpha,\widehat\gamma)$ minimizes $Q_n(\alpha, \gamma)$, $(\widehat\theta,\widehat\beta)$ minimizes $L_n(\theta, \beta)$ where $\widehat\alpha=p_n(\widehat\theta-\theta_0) $ and $\widehat\gamma= q_n(\widehat\beta -\beta_0)$. This reparameterization is adopted in the literature quite often, for example, \citet{bickel1974}, \citet{badu1989}, \citet{davis1992} and \citet{phillips1995}.

For simplicity we denote $\Delta_t(\alpha,\gamma)= g(x_{t}^\top\theta_0, z_t^\top\beta_0) -g(x_{t}^\top\theta_0+x_t^\top p_n^{-1}\alpha, z_t^\top\beta_0+z_t^\top q_n^{-1}\gamma)$, and we then have $Q_n(\alpha, \gamma)=\sum_{t=1}^n [\rho(e_t+\Delta_t(\alpha,\gamma)) -\rho(e_t)]$. Certainly, the $Q_n(\alpha, \gamma)$ only serves for theoretical analysis.

Using Taylor expansion, we have
\begin{align*}
Q_n(\alpha, \gamma)=&\sum_{t=1}^n [\rho(e_t+\Delta_t(\alpha,\gamma)) -\rho(e_t)] = \sum_{t=1}^n\left[ \rho'(e_t)\Delta_t(\alpha,\gamma) +\frac{1}{2}\rho''(e_t^*)\Delta_t(\alpha,\gamma)^2\right],
\end{align*}
where $e_t^*$ is between $e_t$ and $e_t+\Delta_t(\alpha,\gamma)$. This raises an issue: for some loss function, $\rho'(\cdot)$ and $\rho''(\cdot)$ may not exist everywhere. Although it is possible to have Taylor expansion for nonsmooth $\rho$ (see \citet{estrada1993}) by regarding it as a generalized function, in order to facilitate the development of an asymptotic theory in the sequel, we make a detour to invoke the regular sequence
\begin{align*}
  \rho_m(u)=&\int \rho(x)\phi_m(u-x)dx,  \ \   m=1,2,\cdots ,
\end{align*}
where $\phi_m(x)\in S$ is a delta-convergent sequence that ensures $\rho_m(u)\to \rho(u)$ as $m\to\infty$. There are many choices of $\phi_m(x)$ such as $\phi_m(x)=\sqrt{m}\exp(-mx^2)$.

It follows that
\begin{align*}
  \rho_m'(u)=&-\int \rho(x)\phi_m'(x-u)dx, & \rho_m''(u)=&\int \rho(x)\phi_m''(x-u)dx, \ \   m=1,2,\cdots ,
\end{align*}
and $\rho'_m(u)$ and $\rho''_m(u)$ will converge to $\rho'(u)$ and $\rho''(u)$, respectively, in generalized function sense. Accordingly, $\mathbb{E}[\rho'_m(e_1)]\to \mathbb{E}[\rho'(e_1)]=0$ as $m\to\infty$.

\bigskip

Hence, whenever $\rho'(\cdot)$ and/or $\rho''(\cdot)$ may not exist point-wisely, instead of $Q_n(\alpha,\gamma)$ we consider
\begin{equation*}
Q_{mn}(\alpha,\gamma)=\sum_{t=1}^n [\rho'_m(e_t)-\mathbb{E}\rho'_m(e_t)] \Delta_t(\alpha,\gamma)+\frac{1}{2}\rho''_m(e_t^*)
\Delta_t(\alpha,\gamma)^2,
\end{equation*}
where the term $\rho'_m(e_t)-\mathbb{E}\rho'_m(e_t)$ not only converges to  $\rho'(e_t)$ as $m\to\infty$, but also keeps the zero mean condition just as $\rho'(e_t)$.

By Assumption D(1), $g(u,v)=g_1(u)+g_2(v)$ where $g_1(u)$, $\dot{g}_1(u)$ and $\ddot{g}_1(u)$ are $H$-regular with homogeneity order $\nu(\cdot)$, $\dot{\nu}(\cdot)$ and $\ddot{\nu}(\cdot)$, respectively, such that $\lim_{\lambda\to +\infty} \dot{\nu}(\lambda)/\nu(\lambda)=0$, and $\lim_{\lambda\to +\infty} \ddot{\nu}(\lambda)/\dot{\nu}(\lambda)=0$. In this case, we take $p_n=\dot\nu(\sqrt{n})\, n$ and $q_n=\sqrt{n}$.

Further, for any $(\alpha,\gamma)\in \mathbb{R}^{d_1+d_2}$,
\begin{align*}
\Delta_t(\alpha,\gamma)=&g(x_{t}^\top\theta_0, z_t^\top\beta_0) -g(x_{t}^\top\theta_0+x_t^\top p_n^{-1}\alpha, z_t^\top\beta_0+z_t^\top q_n^{-1}\gamma)\\
=&-\dot{g}_1(x_{t}^\top\theta_0) x_t^\top p_n^{-1}\alpha -\dot{g}_2( z_t^\top\beta_0) z_t^\top q_n^{-1}\gamma \\
&-\frac{1}{2}\ddot{g}_1(x_{t}^\top\theta^*) (x_t^\top p_n^{-1}\alpha)^2 -\frac{1}{2}\ddot{g}_2(z_t^\top\beta^*) (z_t^\top q_n^{-1}\gamma)^2 \\
=&-\dot{g}_1(x_{t}^\top\theta_0) x_t^\top p_n^{-1}\alpha -\dot{g}_2( z_t^\top\beta_0) z_t^\top q_n^{-1}\gamma +\text{reminder term},
\end{align*}
where the reminder term is of smaller order than the first two. Indeed, observe that
\begin{align*}
 &\ddot{g}_1(x_{t}^\top\theta^*) (x_t^\top p_n^{-1}\alpha)^2
 =\ddot{g}_1(x_{nt}^\top\theta^*) (x_{nt}^\top \alpha)^2\frac{\ddot{\nu}(\sqrt{n})}{[\dot{\nu}(\sqrt{n})]^2}\frac{1}{n}=O_P
 \left(\frac{\ddot{\nu}(\sqrt{n})}{[\dot{\nu}(\sqrt{n})]^2}\frac{1}{n}\right),\\
 &\ddot{g}_2(z_t^\top\beta^*) (z_t^\top q_n^{-1}\gamma)^2= \ddot{g}_2( z_t^\top\beta^*) (z_t^\top \gamma)^2\frac{1}{n}=O_P(n^{-1}),
\end{align*}
which by Assumption D(1) implies the negligibility of the reminder.

We then write $\Delta_t(\alpha,\gamma)=-\dot{g}_1(x_{t}^\top\theta_0) x_t^\top p_n^{-1}\alpha -\dot{g}_2(z_t^\top\beta_0)z_t^\top q_n^{-1}\gamma$ ignoring the higher order terms.

Moreover, we shall also show
\begin{align*}
Q_{mn}(\alpha, \gamma)
=&\sum_{t=1}^n[\rho'_m(e_t)-\mathbb{E}\rho'_m(e_t)] \Delta_t(\alpha,\gamma)
+\frac{1}{2}\sum_{t=1}^n\rho''_m(e_t^*)\Delta_t(\alpha,\gamma)^2\\
=&\sum_{t=1}^n[\rho'_m(e_t)-\mathbb{E}\rho'_m(e_t)] \Delta_t(\alpha,\gamma)
+\frac{1}{2}\sum_{t=1}^n\rho''_m(e_t)\Delta_n(\alpha,\gamma)^2+r_{mn},
\end{align*}
where we define and will show $r_{mn}= \frac{1}{2}\sum_{t=1}^n [\rho''_m(e_t) -\rho''_m(e_t^*)] \Delta_n(\alpha,\gamma)^2=o_P(1)$.

Notice that for any integer $m$, $\rho''_m(\cdot)$ satisfies Lipschitz condition with the Lipschitz constant $K_m$ that may depend on $m$, because $\phi_m\in S$. Hence,
\begin{align*}
|r_{mn}|
\le&\sum_{t=1}^n|\rho''_m(e_t)-\rho''_m(e_t^*)|\Delta_n(\alpha,\gamma)^2\\
\le&\sum_{t=1}^nK_m |e_t-e_t^*| \Delta_n(\alpha,\gamma)^2 \le K_m\sum_{t=1}^n|\Delta_n(\alpha, \gamma)|^3.
\end{align*}

Further, recalling $p_n$, $q_n$ and $x_{nt}$ we have
\begin{align*}
|r_{mn}|\le& K_m\sum_{t=1}^n|\Delta_n(\alpha, \gamma)|^3\\
=&K_m\sum_{t=1}^n|\dot{g}_1(x_{t}^\top\theta_0) x_t^\top p_n^{-1}\alpha +\dot{g}_2(z_t^\top\beta_0)z_t^\top q_n^{-1}\gamma|^3\\
=&K_mn^{-3/2}\sum_{t=1}^n|\dot{g}_1(x_{nt}^\top\theta_0) x_{nt}^\top \alpha +\dot{g}_2(z_t^\top\beta_0)z_t^\top \gamma|^3 = K_mO_P(n^{-1/2})\to_P 0, \ \ \ \ \text{as}\; n\to\infty,
\end{align*}
for any $m$ and uniformly on any compact set of $(\alpha, \gamma)$.

Rewrite the leading term of $Q_{mn}(\alpha, \gamma)$,
\begin{align*}
Q_{mn}(\alpha, \gamma)
=&\sum_{t=1}^n[\rho'_m(e_t)-\mathbb{E}\rho'_m(e_t)] \Delta_t(\alpha,\gamma)
+\frac{1}{2}\sum_{t=1}^n\rho''_m(e_t)\Delta_t(\alpha,\gamma)^2\\
=&-\sum_{t=1}^n[\rho'_m(e_t)-\mathbb{E}\rho'_m(e_t)]\dot{g}_1(x_{t}^\top\theta_0 ) x_t^\top p_n^{-1}\alpha \\ &-\sum_{t=1}^n[\rho'_m(e_t)-\mathbb{E}\rho'_m(e_t)]\dot{g}_2( z_t^\top\beta_0) z_t^\top q_n^{-1}\gamma\\
&+\frac{1}{2}\sum_{t=1}^n\rho''_m(e_t)[\dot{g}_1(x_{t}^\top\theta_0) x_t^\top p_n^{-1}\alpha]^2+\frac{1}{2}\sum_{t=1}^n\rho''_m(e_t)
[\dot{g}_2(z_t^\top\beta_0)z_t^\top q_n^{-1}\gamma]^2\\
&+\sum_{t=1}^n\rho''_m(e_t)[\dot{g}_1(x_{t}^\top\theta_0) x_t^\top p_n^{-1}\alpha][\dot{g}_2(z_t^\top\beta_0)z_t^\top q_n^{-1}\gamma]\\
=&-\frac{1}{\sqrt{n}}\sum_{t=1}^n[\rho'_m(e_t)-\mathbb{E}\rho'_m(e_t)]
\dot{g}_1(x_{nt}^\top\theta_0) x_{nt}^\top \alpha\\
&-\frac{1}{\sqrt{n}}\sum_{t=1}^n [\rho'_m(e_t)-\mathbb{E}\rho'_m(e_t)] \dot{g}_2(z_t^\top\beta_0)z_t^\top \gamma\\
&+\frac{1}{2n}\sum_{t=1}^n\rho''_m(e_t)[\dot{g}_1(x_{nt}^\top\theta_0) x_{nt}^\top \alpha]^2+ \frac{1}{2n} \sum_{t=1}^n \rho''_m(e_t) [\dot{g}_2( z_t^\top\beta_0)z_t^\top \gamma]^2\\
&+\frac{1}{n}\sum_{t=1}^n\rho''_m(e_t)[\dot{g}_{1}(x_{nt}^\top\theta_0) x_{nt}^\top \alpha][\dot{g}_2(z_t^\top\beta_0)z_t^\top \gamma],
\end{align*}
which is quadratic and hence convex in $(\alpha, \gamma)$ by noting the non-negativeness of $\rho_m''(\cdot)$.

Using the joint convergence in Lemma \ref{lemma3}, for each $m$, as $n\to \infty$, $Q_{mn}$ converges in distribution to
\begin{align*}
&-\int_0^1\dot{g}_1(B(r)^\top\theta_0)B(r)^\top dU_{m}(r)\, \alpha -\xi_m^\top \gamma \\
&+\frac{1}{2}\alpha^\top\left(a_{m2}\int_0^1\dot{g}_1(B(r)^\top\theta_0)
B(r)B(r)^\top dr\right)\alpha +\frac{1}{2}\gamma^\top\left(a_{m2}\Sigma\right)\gamma\\
&+\gamma^\top\left(a_{m2}\int_0^1 \dot{g}_1(B(r)^\top\theta_0) [\mathbb{E}\dot{g}_2(h(r,v_1)^\top\beta_0) h(r,v_1)]B(r)^\top dr\right)\alpha
\end{align*}
where $\xi_m$ is a normal vector variable with covariance $a_{m1}\Sigma$ given in Lemma \ref{lemma3}. Notice further that $a_{m1}\to a_1=\mathbb{E}[\rho'(e_t)^2|\ra_{t-1}]$ and $a_{m2}\to a_2=\mathbb{E}[\rho''(e_t)|\ra_{t-1}]$ which implies $\xi_m\to_Pb_2\sim N(0, a_1\Sigma)$ and $\sup_{r\in [0,1]} \mathbb{E}[U_{m}(r)- U(r)]^2\to 0$ as $m\to\infty$; then $Q_{mn}$ has sequential limit in distribution
\begin{align}\label{limit}
Q(\alpha, \gamma)=&-b_1^\top \alpha -b_2^\top \gamma
+\frac{1}{2}\alpha^\top A_1\alpha +\frac{1}{2}\gamma^\top A_2\gamma
+\gamma^\top A_3\alpha
\end{align}
when $n\to\infty$ firstly and $m\to\infty$ secondly, where
\begin{align*}
  b_1= & \int_0^1\dot{g}_1(B(r)^\top\theta_0)B(r) dU(r), &
  b_2\sim&N(0, a_1\Sigma),\\
  A_1= & a_{2}\int_0^1\dot{g}_1(B(r)^\top\theta_0)B(r)B(r)^\top dr, &
  A_2=&a_2\Sigma,\\
  A_3=&a_{2}\int_0^1 \dot{g}_1(B(r)^\top\theta_0) [\mathbb{E}\dot{g}_2(h(r,v_1)^\top\beta_0) h(r,v_1)]B(r)^\top dr. & &
\end{align*}

The convergence holds uniformly on any compact set of $(\alpha, \gamma)$.

By virtue of the convexity in $(\alpha, \gamma)$ for $Q_{mn}(\alpha, \gamma)$ and $Q(\alpha, \gamma)$, the estimator $(\widehat\alpha, \widehat \gamma)$ will converge to the unique minimizer $(\alpha_0, \gamma_0)$ of the limit $Q(\alpha, \gamma)$ in \eqref{limit} (see Lemma A of \citet{knight1989}), that is,
\begin{align*}
  \begin{pmatrix}
  \dot{\nu}(\sqrt{n})n(\widehat \theta-\theta_0)\\
  \sqrt{n}(\widehat \beta-\beta_0)
  \end{pmatrix}
  \to_D\begin{pmatrix}
  [A_1-A_3^\top A_2A_3]^{-1}(b_1-A^\top_3 A_2^{-1}b_2)\\
  A_2^{-1}(b_2-A_3[A_1-A_3^\top A_2A_3]^{-1}(b_1-A_3^\top A_2^{-1}b_2))
  \end{pmatrix}.
\end{align*}
This finishes the proof. \ \ $\Box$

{\bf Proof of Corollary \ref{cor1}}\ \ When $\rho'$ or $\rho''$ does not exist in ordinary sense, their regular sequences $\rho'_m$ or $\rho''_m$ may be used to replace them. The replacement does not affect the proof as can be seen from the proof of the theorem. Thus, in what follows we only consider the case where $\rho'$ and $\rho''$ exist and are continuous.

By the definition,
\begin{align*}
\widehat{e}_t=&y_t-g(\widehat{\theta}^{\,\top} x_t, \widehat{\beta}^{\,\top} z_t)=e_t+g({\theta}^{\,\top}_0 x_t, {\beta}^{\,\top}_0 z_t) -g(\widehat{\theta}^{\,\top} x_t, \widehat{\beta}^{\,\top} z_t) = e_t+\Delta_t(\widehat{\theta}, \widehat{\beta}),
\end{align*}
where $\Delta(\theta,\beta)=g({\theta}^{\,\top}_0 x_t, {\beta}^{\,\top}_0 z_t)-g({\theta}^{\,\top} x_t, {\beta}^{\,\top} z_t)$ and it is shown in the proof of the theorem $\Delta_t(\widehat{\theta}, \widehat{\beta}) =-\dot{g}_1({\theta}^{\,\top}_0 x_t)x_t^\top(\widehat{\theta} -\theta_0)-\dot{g}_2({\beta}^{\,\top}_0 z_t)z_t^\top(\widehat{\beta} -\beta_0)+$higher order term.

Thus, we simply take $\Delta_t(\widehat{\theta}, \widehat{\beta}) =-\dot{g}_1({\theta}^{\,\top}_0 x_t)x_t^\top(\widehat{\theta} -\theta_0)-\dot{g}_2({\beta}^{\,\top}_0 z_t) z_t^\top(\widehat{\beta}-\beta_0)$ and further by virtue of Theorem \ref{th1}, $\Delta_t(\widehat{\theta}, \widehat{\beta}) =O_P(n^{-1/2})$ uniformly in $t$. It then follows from the mean value theorem that
\begin{align*}
 &[\rho'(\widehat{e}_t)]^2-[\rho'(e_t)]^2=[\rho'(e_t+\Delta_t(\widehat{\theta}, \widehat{\beta}))]^2 -[\rho'(e_t)]^2=2\rho'(e_t^*) \rho''(e_t^*) \Delta_t(\widehat{\theta}, \widehat{\beta})
\end{align*}
where $e_t^*$ is on the segment joining $e_t$ and $\widehat{e}_t$. Hence,
\begin{align*}
\widehat{a}_1-a_1=&\frac{1}{n}\sum_{t=1}^n \{[\rho'(\widehat{e}_t)]^2-\e [\rho'(e_t)]^2\}\\
=&2 \frac{1}{n}\sum_{t=1}^n\rho'(e_t^*) \rho''(e_t^*) \Delta_t(\widehat{\theta}, \widehat{\beta})+\frac{1}{n}\sum_{t=1}^n \{[\rho'(e_t)]^2-\e [\rho'(e_t)]^2\}.
\end{align*}

The first term is $o_P(1)$ since the continuity of the two derivatives and $|e_t^*-e_t|\le |e_t-\widehat{e}_t|\le \Delta_t(\widehat{\theta}, \widehat{\beta})$ that is uniformly $o_P(1)$. The last term is $o_P(1)$ because of the mean ergodicity of the sequence $\{\rho'(e_t)^2\}_{1}^n$ and $a_1=\e([\rho'(e_t)]^2|\ra_{t-1})<\infty$ by Assumption C. Then the first assertion follows.

Notice further that
\begin{align*}
\widehat{a}_2-a_2=&\frac{1}{n}\sum_{t=1}^n \{\rho''(\widehat{e}_t)-\e [\rho''(e_t)]\}\\
=& \frac{1}{n}\sum_{t=1}^n[\rho''(\widehat{e}_t)-\rho''(e_t)] +\frac{1}{n}\sum_{t=1}^n \{\rho''(e_t)-\e [\rho''(e_t)]\}.
\end{align*}
Here, the first term is $o_P(1)$ by the continuity of the $\rho''$ and the uniform approximation of $|e_t-\widehat{e}_t|$; the second term is $o_P(1)$ because $\{\rho''(e_t)\}_1^n$ is ergodic. \ $\Box$

\medskip

{\bf Proof of Theorem \ref{th2}}\ \ Note that $\|\theta_0\|\ne 0$ and we allow that $\theta_0$ may not be a unit vector, so the treatment in what follows is different from the existing literature on semiparametric single index models. We take $P=(\theta_0/\|\theta_0\|, P_2)_{d_1\times d_1}$ to be an orthogonal matrix that will be used to rotate the regressor $x_t$. Define $D_n=\text{diag}(\sqrt[4]{n}, \sqrt[4]{n}^3I_{d_1-1})$ and $q_n=\sqrt{n}$ for later use.

Denoting $\Delta_t(\theta, \beta)=g(x_t^\top \theta_0, z_t^\top \beta_0)-g(x_t^\top \theta, z_t^\top \beta)$, by Taylor theorem we have
\begin{align*}
L_n(\theta, \beta)=&\sum_{t=1}^n[\rho(y_t-g(x_t^\top \theta, z_t^\top \beta))-\rho(y_t-g(x_t^\top \theta_0, z_t^\top \beta_0))]\\
=&\sum_{t=1}^n[\rho(e_t+\Delta_t(\theta, \beta))-\rho(e_t)] = \sum_{t=1}^n[\rho'(e_t)\Delta_t(\theta, \beta)+\frac{1}{2} \rho''(e_t^*) \Delta_t(\theta, \beta)^2],
\end{align*}
where $e_t^*$ is somewhere between $e_t$ and $e_t+\Delta_t(\theta, \beta)$, and we contemporarily consider the case that the first and second derivatives of $\rho(\cdot)$ exist while otherwise the cases will be dealt with at a later stage. Moreover, using the similar idea as in the proof of Theorem 2.1, the $\rho''(e_t^*)$ in the second term above can be approximated by $\rho''(e_t)$ because as shown in Theorem 2.1
\begin{align*}
 \sum_{t=1}^n|\rho''(e_t^*)-\rho''(e_t)|\Delta_t(\theta, \beta)^2\le C \sum_{t=1}^n|\Delta_t(\theta, \beta)|^3,
\end{align*}
and using first order Taylor expansion and Assumption D(3),
\begin{align*}
\Delta_t(\theta, \beta)=&g(x_t^\top \theta_0, z_t^\top \beta_0)-g(x_t^\top \theta, z_t^\top \beta)\\
=&-\dot{g}_1(x_t^\top \theta^*)x^\top_t (\theta-\theta_0)-\dot{g}_2(z_t^\top \beta^*)z_t^\top (\beta-\beta_0),
\end{align*}
where $(\theta^*,\beta^*)$ is some point on the segment joining $(\theta,\beta)$ and $(\theta_0,\beta_0)$. Here we shall use $\theta^*$ to construct a matrix $P^*=(\theta^*, P_2^*)$ such that $\theta^*$ is orthogonal to each row of $P_2^*$, similar to the construction of $P$. Then, $x^\top_t (\theta-\theta_0) =x_t^\top P^*D_n^{-1}D_n{P^*}' (\theta-\theta_0)=(x_{1t}^*, (x_{2t}^*)^\top)^\top D_n^{-1}\widetilde{\alpha}$, where $x_{1t}^*= x_t^\top \theta^*$, $x_{2t}^*=(P_2^*)^\top x_t$ and $\widetilde{\alpha}=D_n{P^*}' (\theta-\theta_0)$ that is a scaled vector. Also, rewrite $z_t^\top (\beta-\beta_0)=z_t^\top q_n^{-1}\gamma$ with $\gamma=q_n(\beta- \beta_0)$ a scaled vector too. We then may regard the scaled parameters $\widetilde{\alpha}$ and $\gamma$ as generic parameters in the minimization we consider later.

Moreover, we bound $\Delta_t(\theta, \beta)$ by three additive components, ensuing that
\begin{align*}
\sum_{t=1}^n|\Delta_t(\theta, \beta)|^3\le& C_1\frac{1}{n^{3/4}} \sum_{t=1}^n |\dot{g}_1(x_{1t}^*)x_{1t}^*|^3+C_2\frac{1}{n^{9/4}} \sum_{t=1}^n |\dot{g}_1(x_{1t}^*)|^3\|x_{2t}^*\|^3\\
&+C_3\frac{1}{n^{3/2}} \sum_{t=1}^n\|\dot{g}_2(z_t^\top \beta^*)z_t\|^3 = O_P(n^{-1/2}),
\end{align*}
evaluated by using Lemma \ref{lemma1} for the first two terms and LLN for the last term. Now, we can write
\begin{align*}
L_n(\theta, \beta)=&\sum_{t=1}^n[\rho'(e_t)\Delta_t(\theta, \beta)+\frac{1}{2}\rho''(e_t) \Delta_t(\theta, \beta)^2]+O_P(n^{-1/2}).
\end{align*}

We further consider the approximation of $\Delta_t(\theta, \beta)$. By D(3), again Taylor theorem expanding $g(u,v)$ up to the second order and the orthogonal $P$ and a new matrix $P^*$ that is constructed as before but most unlikely the same,
\begin{align*}
\Delta_t(\theta, \beta)=&-\dot{g}_1(x_{t}^\top\theta_0) x_t^\top P D_n^{-1}D_nP^\top(\theta-\theta_0) -\dot{g}_2( z_t^\top\beta_0) z_t^\top q_n^{-1}q_n(\beta-\beta_0) \\
&-\frac{1}{2}\ddot{g}_1(x_{t}^\top\theta^*) (x_t^\top P^* D_n^{-1}D_nP^{*\top}(\theta-\theta_0))^2 -\frac{1}{2}\ddot{g}_2(z_t^\top\beta^*) (z_t^\top q_n^{-1}q_n(\beta-\beta_0))^2\\
=&-\dot{g}_1(x_{t}^\top\theta_0) x_t^\top P D_n^{-1}\widetilde\alpha -\dot{g}_2( z_t^\top\beta_0) z_t^\top q_n^{-1}\gamma \\
&-\frac{1}{2}\ddot{g}_1(x_{t}^\top\theta^*) (x_t^\top P^* D_n^{-1}{\widetilde\alpha}^*)^2 -\frac{1}{2}\ddot{g}_2(z_t^\top\beta^*) (z_t^\top q_n^{-1}\gamma)^2,
\end{align*}
where $\widetilde\alpha=D_nP^\top (\theta-\theta_0)$, $\gamma=q_n(\beta- \beta_0)$ and ${\widetilde\alpha}^*=D_nP^{*\top} (\theta-\theta_0)$. We then may follow the same derivation as above, along with the martingale difference structure for $(\rho'(e_t),\ra_t)$, to show that the second order term in the expansion is negligible, that is,
\begin{align*}
 &\sum_{t=1}^n\rho'(e_t)\ddot{g}_1(x_{t}^\top\theta^*) (x_t^\top P^* D_n^{-1}{\widetilde\alpha}^*)^2=o_P(1),\ \
 \sum_{t=1}^n\rho'(e_t)\ddot{g}_2(z_t^\top\beta^*) (z_t^\top q_n^{-1}\gamma)^2=o_P(1),\\
  &\sum_{t=1}^n\rho''(e_t)\ddot{g}_1(x_{t}^\top\theta^*) (x_t^\top P^* D_n^{-1}{\widetilde\alpha}^*)^4=o_P(1),\ \
 \sum_{t=1}^n\rho''(e_t)\ddot{g}_2(z_t^\top\beta^*) (z_t^\top q_n^{-1}\gamma)^4=o_P(1).
\end{align*}
Thus, this higher order term is omitted in the sequel.

Notice further that, denoting $\widetilde\alpha=(\widetilde\alpha_1, \widetilde\alpha_2^\top)^\top$ conformably with $x_{1t}=x_t^\top \theta_0$ and $x_{2t}^\top=x_t^\top P_2$ (hereafter denote ${\widetilde x}_t=(x_{1t}, x_{2t}^\top)^\top$),
\begin{align*}
 x_t^\top P D_n^{-1}\widetilde\alpha=&(x_t^\top \theta_0/\|\theta_0\|, x_t^\top P_2)D_n^{-1}\widetilde\alpha = {\widetilde x}_t^\top D_n^{-1}(\widetilde\alpha_1/\|\theta_0\|, \widetilde\alpha_2^\top)^\top\equiv {\widetilde x}_t^\top D_n^{-1}\alpha,
\end{align*}
where we denote
\begin{align}\label{alphatheta}
 \alpha=D_n\begin{pmatrix} \frac{1}{\|\theta_0\|^2} \theta_0^\top\\
 P_2^\top
 \end{pmatrix}(\theta-\theta_0).
\end{align}

The transformation from $x_t$ to ${\widetilde x}_t$ is necessary since $g_1(\cdot)$ is $I$-regular. See \citet{dgd2016} and the reference therein for more details.

Accordingly, $L_n(\theta, \beta)$ is reparameterized as $Q_n(\alpha, \gamma)$
\begin{align*}
Q_n(\alpha, \gamma)=&\sum_{t=1}^n\big\{\rho'(e_t)[-\dot{g}_1(x_{t}^\top\theta_0) {\widetilde x}_t D_n^{-1}\alpha -\dot{g}_2( z_t^\top\beta_0) z_t^\top q_n^{-1}\gamma]\\
&+\frac{1}{2}\rho''(e_t)[\dot{g}_1(x_{t}^\top\theta_0) {\widetilde x}_t D_n^{-1}\alpha -\dot{g}_2( z_t^\top\beta_0) z_t^\top q_n^{-1}\gamma]^2\big\}.
\end{align*}

Note that, when $(\widehat\alpha, \widehat\gamma)$ minimizes $Q_n(\alpha, \gamma)$, $(\widehat\theta, \widehat\beta)$ minimizes $L_n(\theta, \beta)$ where $\widehat\gamma=q_n(\widehat\beta- \beta_0)$ and $\widehat\theta$ is given through the relationship \eqref{alphatheta} with $\alpha$ and $\theta$ being respectively replaced by $\widehat\alpha$ and $\widehat\theta$.

When $\rho'(\cdot)$ or/and $\rho''(\cdot)$ may not exist at some points, similarly to Theorem 2.1 we consider
\begin{align*}
Q_{mn}(\alpha, \gamma)=&\sum_{t=1}^n\big\{[\rho'_m(e_t)- \mathbb{E}\rho'_m(e_t)] [-\dot{g}_1(x_{t}^\top\theta_0) {\widetilde x}_t D_n^{-1}\alpha -\dot{g}_2( z_t^\top\beta_0) z_t^\top q_n^{-1}\gamma]\\
&+\frac{1}{2}\rho''_m(e_t)[\dot{g}_1(x_{t}^\top\theta_0) {\widetilde x}_t D_n^{-1}\alpha -\dot{g}_2( z_t^\top\beta_0) z_t^\top q_n^{-1}\gamma]^2\big\},
\end{align*}
where $\rho'_m(\cdot)$ or/and $\rho''_m(\cdot)$ are regular sequences of $\rho'(\cdot)$ and $\rho''(\cdot)$.

Hence,
\begin{align*}
&Q_{mn}(\alpha,\gamma)
=-\sum_{t=1}^n [\rho'_m(e_t)-\mathbb{E}\rho'_m(e_t)] [\dot{g}_1(x_{t}^\top\theta_0) {\widetilde x}_t^\top D_n^{-1}\alpha -\dot{g}_2( z_t^\top\beta_0) z_t^\top q_n^{-1}\gamma]\\
&\hspace{2.3cm}+\frac{1}{2}\sum_{t=1}^n\rho''_m(e_t)[\dot{g}_1(x_{t}^\top\theta_0) {\widetilde x}_t^\top D_n^{-1}\alpha -\dot{g}_2( z_t^\top\beta_0) z_t^\top q_n^{-1}\gamma]^2\\
=&-(\alpha_1, \alpha_2^\top, \gamma^\top) \sum_{t=1}^n [\rho'_m(e_t)- \mathbb{E}\rho'_m(e_t)] \xi_{nt}
+\frac{1}{2}(\alpha_1, \alpha_2^\top, \gamma^\top)\sum_{t=1}^n\rho''_m(e_t)
 M_{nt}\begin{pmatrix} \alpha_1\\ \alpha_2 \\ \gamma \end{pmatrix}\\
\end{align*}
where we define $\xi_{nt}:= \begin{pmatrix} \frac{1}{\sqrt[4]{n}}\dot{g}_1(x_{1t})x_{1t}\\
 \frac{1}{\sqrt[4]{n}^3} \dot{g}_1(x_{1t})x_{2t}\\
 \frac{1}{\sqrt{n}}\dot{g}_2( z_t^\top\beta_0) z_t
\end{pmatrix}$ and $M_{nt}:=\xi_{nt}\xi_{nt}^\top$.

It follows from the joint convergence in Lemmas \ref{lemma4}-\ref{lemma5} that, as $n\to\infty$ and $m\to\infty$,
$$Q_{mn}(\alpha, \gamma) \to_DQ(\alpha,\gamma)=-(\alpha_1, \alpha_2^\top, \gamma^\top)\xi+ \frac{a_2}{2}(\alpha_1, \alpha_2^\top, \gamma^\top)M(\alpha_1, \alpha_2^\top, \gamma^\top)^\top$$
uniformly in any compact subset of $\mathbb{R}^d$ where $\xi\sim N(0, a_1 M)$.

By virtue of the convexity in $(\alpha, \gamma)$ for $Q_{mn}(\alpha, \gamma)$ and $Q(\alpha, \gamma)$, the estimator $(\widehat\alpha, \widehat \gamma)$ will converge to the unique minimizer $(\alpha_0, \gamma_0)$ of the limit $Q(\alpha, \gamma)$ in \eqref{limit} (see Lemma A of \citet{knight1989}), that is,
\begin{align*}
\begin{pmatrix}\widehat\alpha\\ \widehat \gamma \end{pmatrix}=\begin{pmatrix} \sqrt[4]{n}\frac{1}{\|\theta_0\|^2}\theta_0^\top (\widehat\theta-\theta_0)\\
\sqrt[4]{n}^3P_1^\top (\widehat\theta-\theta_0)\\
q_n(\widehat\beta-\beta_0)
\end{pmatrix}\to_D(a_2M)^{-1}\xi=_D\frac{\sqrt{a_1}}{a_2}M^{-1/2}N(0,I_d).
\end{align*}
The proof is finished. \ \ $\Box$

{\bf Proof of Corollary \ref{cor2}}\ \ We now consider the convergence of $\widehat\theta-\theta_0$. Denote $J=\text{diag}(1/\|\theta_0\|, I_{d_1-1})$ for convenience. Observe that
\begin{align*}
&\sqrt[4]{n}(\widehat\theta-\theta_0)=\sqrt[4]{n} (D_nJP)^{-1}(D_nJP)(\widehat\theta-\theta_0)\\
=&\sqrt[4]{n} (D_nJP)^{-1}
\begin{pmatrix} \sqrt[4]{n}\frac{1}{\|\theta_0\|^2}\theta_0^\top (\widehat\theta-\theta_0)\\
\sqrt[4]{n}^3P_1^\top (\widehat\theta-\theta_0)
\end{pmatrix}\\
\to_D&P^\top J^{-1} \text{diag}(1, {\bf 0}_{d_1-1})N(0,M^{-1}_1)
=_DN(0, \tau_{11}\theta_0\theta_0^\top) \quad \Box
\end{align*}

}

\end{document}